\renewcommand{\epsilon}{\varepsilon}
\renewcommand{\phi}{\varphi}
\newcommand{\ie}{i.\,e.,\xspace}
\newcommand{\eg}{e.\,g.,\xspace}
\newcommand{\Cmc}{\ensuremath{\mathcal{C}}\xspace}
\newcommand{\Fmc}{\ensuremath{\mathcal{F}}\xspace}
\newcommand{\Oof}{\mathcal{O}}
\renewcommand{\phi}{\varphi}
\renewcommand{\epsilon}{\varepsilon}
\newcommand{\budget}{b}
\newcommand{\Ttt}{\ensuremath{\mathtt{T}}\xspace}
\newcommand{\Ctt}{\ensuremath{\mathtt{C}}\xspace}
\renewcommand{\NP}{\textsf{NP}\xspace}
\renewcommand{\FPT}{\textsf{FPT}\xspace}
\renewcommand{\PP}{\textsf{P}\xspace}
\renewcommand{\W}[1]{\textsf{W[#1]}\xspace}
\renewcommand{\PSPACE}{\textsf{PSPACE}\xspace}
\newcommand{\kct}{\ensuremath{\binom{\kappa}{2}}\xspace}
\newcommand{\kctwo}{\kct}
\newcommand{\rmst}{\textsc{Rainbow MST}\xspace}
\newcommand{\rst}{\textsc{Rainbow ST}\xspace}
\newcommand{\rmat}{\textsc{Rainbow Matching}\xspace}
\newcommand{\mdis}{\textsc{Matching Discovery}\xspace}
\newcommand{\sdis}{\textsc{Vertex Cut Discovery}\xspace}
\newcommand{\clique}{\textsc{Clique}\xspace}
\newcommand{\pidis}{\textsc{$\Pi$-{\sc Discovery}}\xspace}
\newcommand{\rbpi}{\textsc{Rainbow}-$\Pi$\xspace}
\newcommand{\wrbpi}{\textsc{Weighted Rainbow}-$\Pi$\xspace}
\newcommand{\mcc}{\textsc{Multicolored Clique\xspace}}
\theoremstyle{remark}
\newtheorem{theorem}{Theorem}[section]
\newtheorem{corollary}{Corollary}[section]
\newtheorem{definition}{Definition}[section]
\newtheorem{lemma}{Lemma}[section]
\newtheorem{proposition}{Proposition}[section]
\newtheorem{claim}{Claim}
\crefname{corollary}{Corollary}{Corollaries}
\crefname{lemma}{Lemma}{Lemmas}
\crefname{section}{Section}{Sections}
\newtheorem{result}{}
\newtheorem*{result*}{}
\newtheorem*{remark*}{Remark}
\begin{document}

\title{Solution discovery via reconfiguration for problems in \PP}


\author{
Mario Grobler\\ University of Bremen, Germany\\ \texttt{grobler@uni-bremen.de} 
\and 
Stephanie Maaz\\ University of Waterloo, Canada \\\texttt{smaaz@uwaterloo.ca}
\and 
Nicole Megow\\ University of Bremen, Germany \\\texttt{nicole.megow@uni-bremen.de}
\and 
Amer E.~Mouawad\\
American University of Beirut, Lebanon\\
\texttt{aa368@aub.edu.lb}
\and
Vijayaragunathan Ramamoorthi\\University of Bremen, Germany\\\texttt{vira@uni-bremen.de}
\and
Daniel Schmand \\ University of Bremen, Germany \\\texttt{schmand@uni-bremen.de}
\and 
Sebastian Siebertz\\
University of Bremen, Germany\\
\texttt{siebertz@uni-bremen.de}}


\maketitle

\begin{abstract}
\noindent In the recently introduced framework of solution discovery via reconfiguration [Fellows et al., ECAI 2023], we are given an initial configuration of $k$ tokens on a graph and the question is whether we can transform this configuration into a feasible solution (for some problem) via a bounded number $b$ of small modification steps.
In this work, we study solution discovery variants of polynomial-time solvable problems, namely \textsc{Spanning Tree Discovery}, \textsc{Shortest Path Discovery}, \textsc{Matching Discovery}, and \textsc{Vertex/Edge Cut Discovery} in the unrestricted token addition/removal model, the token jumping model, and the token sliding model. 
In the unrestricted token addition/removal model, we show that all four discovery variants remain in~\PP. 
For the toking jumping model we also prove containment in \PP, except for \textsc{Vertex/Edge Cut Discovery}, for which we prove $\NP$-completeness.
Finally, in the token sliding model, almost all considered problems become $\NP$-complete, the exception being \textsc{Spanning Tree Discovery}, which remains polynomial-time solvable. We then study the parameterized complexity of the $\NP$-complete problems and provide a full classification of tractability with respect to the parameters solution size (number of tokens) $k$ and transformation budget (number of steps) $b$. 
Along the way, we observe strong connections between the solution discovery variants of our base problems and their \emph{(weighted) rainbow variants} as well as their \emph{red-blue variants with cardinality constraints}. 
\end{abstract}
\section{Introduction}\label{sec:intro}

Classically, in an optimization problem, we are given a problem instance and the task is to compute an optimal solution. However, in many applications and real-world scenarios we are already given a current non-optimal or infeasible solution for an instance. Depending on the application, it might be desirable to find an optimal or feasible solution via a bounded number of small modification steps starting from the current solution.

Very prominent examples for such ``systems'' are typically settings where humans are involved in the system and big changes to the running system are not easily implementable or even accepted. 
When optimizing public transport lines, shift plans, or when assigning workers to tasks it is clearly desirable to aim for an optimal solution that is as similar as possible to the current state of the system. 
Fellows et al.~\cite{sol-discovery} recently introduced the \emph{solution discovery via reconfiguration} framework addressing the computational aspects of such problems. 
In their model, an optimizer is given a problem instance together with a current (possibly) infeasible solution. The aim is to decide whether a feasible solution to the given problem can be constructed by applying only a bounded number of changes to the current state. 
We extend this line of work and focus on core polynomial-time solvable problems on graphs, namely \textsc{Spanning Tree}, \textsc{Shortest Path}, \textsc{Matching}, and \textsc{Vertex/Edge Cut}. 
More precisely, for any of the four aforementioned problems, say $\Pi$, we consider instances consisting of a graph $G$, a budget $\budget$, and a starting configuration of $k$ tokens, which is not necessarily a feasible solution for $\Pi$ (and where tokens either occupy vertices or edges of $G$). 
The goal is to decide whether one can transform the starting configuration into a feasible solution for $\Pi$ by applying at most $\budget$ ``local changes''.

The solution discovery framework is inspired by approaches transforming one solution to another such as \emph{local search}, \emph{reoptimization}, and \emph{combinatorial reconfiguration}. \emph{Local search} is an algorithmic paradigm that is based on iterative improvement of solutions in a previously defined neighborhood. In contrast to our setting, \emph{local search} typically improves the current solution in each step, while we allow arbitrary configurations between the starting and ending configurations (our only restriction is that each vertex/edge can be occupied by at most one token). In \emph{reoptimization} the aim is also to compute  optimal solutions starting from optimal solutions of ``neighboring'' instances (distance between instances being usually defined as the number of vertex/edge addition/deletion required to make the two graphs isomorphic). Closely related is the field of \emph{sensitivity analysis}, a very classical area studying how sensitive an optimal solution is (how it reacts) to small changes in the input. In \emph{combinatorial reconfiguration} we are also given a starting solution, but additionally a \emph{target solution}, and very often constraints on the intermediate steps, \eg that every intermediate step maintains a valid solution. In our setting the target and intermediate steps are not explicitly specified, but we aim for any final configuration satisfying some desired properties. 

As an 
illustrating example application, consider the scenario in which a city experiences severe weather conditions, leading to a rapid increase in river water levels. The city relies on the protection of a dam,  but there is a foreseeable risk that the dam may eventually fail. We assume that the dam breaks (continuously) from point $A$ to point $B$, where the (shortest) distance between $A$ and~$B$ in our graph-view of the world is exactly $k$ vertices (including $A$ and $B$). In anticipation of such emergencies, the city has also placed (at least) $k$ sandbags in fixed locations across the city so that one can move them as fast as possible to avoid greater damage (we here make the simplifying assumption that each unit of the broken dam can be fixed by a single sandbag). 
When the dam breaks we can easily compute a shortest path between $A$ and $B$, which also allows us to compute the minimum number of sandbags required to stop the flowing water. 
However, computing such a shortest path is not enough in this situation. Instead, we additionally need to account for sandbags having to move to the appropriate locations as quickly as possible. This corresponds exactly to the problem of finding a shortest path (between two fixed vertices) that is quickly reachable from a predefined set of positions in the graph. In other words, this motivates the study of discovery variants of the \textsc{Shortest Path} problem.

An alternative perspective at solution discovery problems is as follows. Consider a vertex (resp.\ edge) selection problem $\Pi$ on graphs. 
Assume that each element (vertex or edge) in the solution of size $k$ must be \emph{supported} by one of $k$ support points, which are located at $k$ different elements of the input graph, and which can each support exactly one element of the solution. 
The cost of supporting an element is measured by the distance to the chosen support point. 
The problem of deciding whether there exists a solution that can be supported with cost $\budget$ 
 corresponds to a discovery variant of  problem $\Pi$.

Before we proceed and state our results we quickly recall the solution discovery framework of Fellows et al.~\cite{sol-discovery}  (see preliminaries for formal definitions). 
Consider an instance of a vertex (resp.\ edge) selection problem $\Pi$ on graphs, where some vertices (resp.\ edges) of the input graph are occupied by (distinct) tokens. 
A token may be moved (in a specific way depending on the concrete model) for a cost of $1$. 
In the \emph{unrestricted token addition/removal} model\footnote{We call the model ``unrestricted'' to differentiate it from the addition/removal model usually considered in reconfiguration problems as the latter imposes a lower or upper bound on the number of tokens in the graph at all times.}, an existing token may be removed, or a new token may be placed on an unoccupied vertex (resp.\ edge) for a cost of 1. 
In the \emph{token jumping} model, a token may be moved from one vertex (resp.\ edge) to an arbitrary unoccupied vertex (resp.\ edge) for a cost of 1.
In the \emph{token sliding} model, a token may be moved to a neighboring unoccupied vertex (resp.\ edge) for a cost of 1.
The goal is to move the tokens such that they form a valid solution for problem $\Pi$ within the given budget.
We remark that these notions of token moves have also been studied in the realm of combinatorial reconfiguration~\cite{ito2011complexity,DBLP:journals/tcs/KaminskiMM12}. 

Fellows et al.~\cite{sol-discovery} considered the solution discovery variants of (hard) fundamental graph problems, namely \textsc{Vertex Cover}, \textsc{Independent Set}, \textsc{Dominating Set} and \textsc{Vertex Coloring}. 
The complexity of solution discovery for \textsc{Vertex Coloring} in the color flipping model was studied in~\cite{garnero2018fixing} under the name $k$-\textsc{Fix} and in the color swapping model in~\cite{de2019complexity} under the name $k$-\textsc{Swap}.
Since these problems, which we call \emph{base problems}, are \NP-complete, it is not surprising that their solution discovery variants are also \NP-complete in all of the aforementioned models. 
In this work, we continue the examination of the solution discovery via reconfiguration framework and focus on the discovery variants of polynomial-time solvable base problems, namely \textsc{Spanning Tree Discovery}, \textsc{Shortest Path Discovery}, \textsc{Matching Discovery}, and 
 \textsc{Vertex/Edge Cut Discovery}. 
When a base problem is polynomial-time solvable, one may, given an instance with a partial or infeasible solution, 
efficiently compute an optimal solution from scratch. 
However, as previously illustrated, there are situations in which a solution that is close to a currently established configuration is more desirable. 
As we show in this work, the complexity of solution discovery variants of problems in \P\ is 
far from being trivial. 

We observe strong connections between the solution discovery variants of our base problems and their \emph{weighted rainbow variants} as well as their \emph{red-blue variants with cardinality constraints}. 
An instance of a weighted rainbow vertex (resp.\ edge) selection problem consists of a weighted vertex (resp.\ edge) colored graph, and the solution of such an instance may not contain two vertices (resp.\ edges) of the same color,
while collecting a certain amount of weight.
We show that if (the parameterized version of) the weighted rainbow variant of problem~$\Pi$ admits a fixed-parameter tractable (\FPT-) algorithm, then this algorithm can be used to design a fixed-parameter tractable algorithm for the solution discovery variant of~$\Pi$ in the token sliding model.
Similarly, solving the solution discovery variant of a problem in the token jumping model boils down to solving the red-blue variant of that same problem. 
Here, an instance of a red-blue vertex (resp.\ edge) selection problem consists of a graph where every vertex (resp.\ edge) is either colored red or blue and two integer parameters $k$ and $b$. The goal is to find a solution of size $k$ that contains at most $b$ blue vertices (resp.\ edges).

\subsection{Our results}

We provide a full classification of tractability vs.\ intractability with respect to the classical as well as the parameterized complexity of the aforementioned solution discovery problems in all three token models (\Cref{tab:overview}). 
Moreover, we prove some results for rainbow problems as well as red-blue problems which we believe to be of independent interest. 

\smallskip
Our main results can be summarized as follows:
\begin{result}
\textsc{Spanning Tree Discovery}, \textsc{Shortest Path Discovery}, \textsc{Matching Discovery}, and \textsc{Vertex/Edge Cut Discovery} are polynomial-time solvable in the unrestricted token addition/removal model.
\end{result}
\begin{result}
\textsc{Spanning Tree Discovery}, \textsc{Shortest Path Discovery} and \textsc{Matching Discovery} are polynomial-time solvable, while \textsc{Vertex/Edge Cut Discovery} is $\NP$-complete in the token jumping model.
\end{result}

\begin{result}
\textsc{Spanning Tree Discovery} is polynomial-time solvable, while \textsc{Shortest Path Discovery}, \textsc{Matching Discovery}, and \textsc{Vertex/Edge Cut Discovery} are $\NP$-complete in the token sliding model.
\end{result}

We then consider the parameterized complexity of the $\NP$-complete discovery problems and establish the following connection with their rainbow variants. 

\begin{result}
    Meta theorem: For an optimization problem $\Pi$, if the \textsc{weighted Rainbow-$\Pi$} problem (parameterized by solution size $k$) admits an \FPT algorithm, 
    then \textsc{$\Pi$-Discovery} (parameterized by solution size $k$) admits an \FPT algorithm in the token sliding model. 
\end{result}

\FPT algorithms for the \textsc{Weighted Rainbow Shortest Path} problem~\cite{alon1995color}, \textsc{Weighted} \rmat problem~\cite{GuptaRSZ19}, and the {\sc Weighted Rainbow Vertex/Edge Cut}  problem (which we provide in this paper) immediately imply \FPT algorithms for the discovery variants parameterized by~$k$ (in the token sliding model). 
We demonstrate the power of our meta theorem by using it to show that \textsc{Vertex/Edge Cut Discovery} is \FPT with respect to parameter $k$ in the sliding model. For \textsc{Shortest Path Discovery} and \textsc{Matching Discovery}, we give more intuitive and direct \FPT  algorithms which also achieve better running times.
We conclude by studying the parameterized complexity of all hard problems (not covered by the meta-theorem) when parameterized by either $k$ or $b$, obtaining the following results.

\begin{result}
\textsc{Shortest Path Discovery}, \textsc{Matching Discovery}, and \textsc{Vertex/Edge Cut Discovery} are \FPT when parameterized by the solution size $k$ in the token sliding model. Furthermore, \textsc{Vertex/Edge Cut Discovery} is \FPT when parameterized by $k$ in the token jumping model.
\end{result}

\begin{result}
\textsc{Shortest Path Discovery} is \FPT, while  \textsc{Matching Discovery} and \textsc{Vertex/Edge Cut Discovery} are \W{1}-hard when parameterized by the budget $\budget$ in the token sliding model. Furthermore, \textsc{Vertex/Edge Cut Discovery} is \W{1}-hard when parameterized by the budget~$\budget$ in the token jumping model.
\end{result}

\begin{table}
    \centering
    \begin{tabular}{r||c|c|c|c}
     & \textsc{Spanning Tree} & \textsc{Shortest Path} & \textsc{Matching} & \textsc{Vertex/Edge Cut} \\\hline\hline
     \makecell{Discovery \\ Add/Rem.}  & in \PP & in \PP & in \PP & in \PP  \\\hline
     \makecell{Discovery \\ Jumping} & in \PP & in \PP & in \PP & \makecell{\NP-c., \textsf{FPT[$k$]}, \\ \W{1}-hard\textsf{[$b$]}} \\\hline
     \makecell{Discovery \\ Sliding} & in \PP & \makecell{\NP-complete, \\ \textsf{FPT[$k$]}, \textsf{FPT[$\budget$]}} & \makecell{\NP-c., \textsf{FPT[$k$]},\\ \W{1}-hard\textsf{[$b$]}} & \makecell{\NP-c., \textsf{FPT[$k$]}, \\ \W{1}-hard\textsf{[$b$]}}\\\hline
     Rainbow            & in \PP~\cite{BroersmaL97} & \NP-complete & \makecell{\NP-complete \\ on paths~\cite{DBLP:journals/tcs/LeP14}} & \makecell{\NP-complete~\cite{rainbowcuts}, \\ \NP-c. on planar} \\\hline
     Red-Blue          & in \PP & in \PP  & in \PP & \makecell{\NP-c., \textsf{FPT[$k$]}, \\ \W{1}-hard\textsf{[$b$]}}
    \end{tabular}
    \caption{Overview of our results.}
    \label{tab:overview}
\end{table}

\subsection{Related work}

The solution discovery framework is closely related to the  combinatorial reconfiguration framework, introduced by Ito et al.~\cite{ito2011complexity} and studied widely since then. 
In the recon\-figuration variant of a problem we are given an initial solution $S$ and a target solution $T$ and the question is whether $S$ can be transformed into $T$ by a sequence of reconfiguration steps (\eg token additions/removals, token jumps or token slides) such that each intermediate configuration also constitutes a solution. 

The \textsc{Minimum Spanning Tree Reconfiguration} problem by edge exchanges, i.e., token jumps, was first studied by Ito et al.~\cite{ito2011complexity}. 
They showed that the problem is in \P\ by extending the exchange property of matroids to the reconfiguration of weighted matroid bases.
\textsc{Shortest Path Reconfiguration} was introduced by Kami{\'n}ski et al.~\cite{kaminski2011shortest} and shown to be \PSPACE-complete by Bonsma~\cite{bonsma2013complexity}. 
Reconfiguration of perfect matchings was studied by Ito et al.~\cite{ito2022shortest}. 
The \textsc{Vertex Cut Reconfiguration} and \textsc{Minimum Vertex Cut Reconfiguration} problems were studied by Gomes et al.~\cite{gomes2023minimum,gomes2020some}. For further related work on reconfiguration problems we refer the reader to the surveys of van den Heuvel~\cite{van2013complexity}, Nishimura~\cite{nishimura2018introduction}, and Bousquet et al.~\cite{bousquet2022survey}.

Rainbow spanning trees have been investigated by Broersma and Li~\cite{BroersmaL97}. They characterize graphs in which there exists a rainbow spanning tree via matroid intersection.
The question to decide whether a graph contains a rainbow path has been introduced in the classical and influential work of Alon et al.~\cite{alon1995color} that introduced the color coding technique. In the related \textsc{Rainbow} $s$-$t$-\textsc{Connectivity} problem the question is to decide whether there exists a rainbow path between $s$ and~$t$, that is, a path on which no color repeats~\cite{uchizawa2013rainbow}. 
To the best of our knowledge the 
the \textsc{Rainbow Shortest Path} problem has not been studied in the literature. 
We refer the reader to~\cite{DBLP:journals/tcs/ChenLS11} for more background. 
The \rmat~problem is \NP-complete, even when restricted to properly edge-colored paths~\cite{DBLP:journals/tcs/LeP14}. 
The \textsc{Rainbow $s$-$t$-Cut} problem is known to be \NP-complete~\cite{rainbowcuts} on general graphs. We show that this problem remains \NP-complete even if we restrict it to the class of planar~graphs.

We furthermore consider the aforementioned red-blue variant of the studied problems. Our graphs are vertex (resp.\ edge) colored with colors red and blue. 
We are given two integers $k$ and~$b$ and the question is whether there exists a solution of size $k$ using at most~$b$ blue vertices (resp.\ edges). 
To the best of our knowledge these problem variants have  not been studied in the literature, however, variants where we have a cardinality constraint on both colors are related, but seem to be more difficult to solve. For example in the \textsc{Color Constrained Matching} problem~\cite{nomikos2007randomized} we are given a $2$-edge-colored graph (colors red and blue) and two parameters $k$ and~$w$ and we search for a matching of size $k$ with at most~$w$ blue and at most~$w$ red edges.
This problem is known to be at least as hard as the \textsc{Exact Matching} problem~\cite{nomikos2007randomized} (via a logarithmic-space reduction), which was introduced in~\cite{papadimitriou1982complexity}. Here, where we are given a red-blue edge-colored graph and a parameter $b$ and the question is whether there exists a perfect matching with exactly $b$ blue edges. 
The complexity of \textsc{Exact Matching} has been open for more than 40 years~\cite{DBLP:conf/mfcs/MaaloulyS22}.

\subsection{Organization of the paper} 
In the first part of the paper, we focus on the token sliding model and consider  \textsc{Spanning Tree Discovery} (Section~\ref{sec:spanning}), \textsc{Shortest Path Discovery} (Section~\ref{sec:shortest}), \textsc{Matching Discovery} (Section~\ref{sec:matching}), and \textsc{Vertex/Edge Cut Discovery} (Section~\ref{sec:separator}). 
In Section~\ref{sec:rainbow} we present our meta-theorem, linking the solution discovery variant of a problem to its rainbow variant. Section \ref{sec:other-token-models} is devoted to the alternative token models, i.e., the token jumping and unrestricted token addition/removal. We also show that the red-blue variant of an underlying base (graph) problem is always at least as hard as the discovery variant in the token jumping model. 

\section{Preliminaries}\label{sec:prelims}

We denote the set of non-negative integers by $\mathbb{N}$ and the set of non-negative reals by $\mathbb{R}_+$.
For $k \in \mathbb{N}$ we define $[k] = \{1, 2, \dots, k\}$ with the convention $[0] = \varnothing$.

\paragraph*{Graphs.}
We consider finite and loopless graphs. An undirected simple graph $G$ consists of its vertex set~$V(G)$ and edge set $E(G)$, where $E(G)$ is a subset of all two element sets of $V(G)$. Similarly, the edge set $E(G)$ of a directed simple graph is a subset of pairs of its vertices. In a multigraph we allow $E(G)$ to be a multiset. We assume our graphs to be undirected and simple if not stated otherwise.
We denote an edge connecting vertices~$u$ and~$v$ by~$uv$. Observe that $uv = vu$ for every undirected edge $uv \in E(G)$. 
A sequence $v_1,\ldots, v_q$ of pairwise distinct vertices is a path of length $q-1$ if $v_iv_{i+1}\in E(G)$ for all $1\leq i< q$. 
We write $P_q$ for the path of length~$q$. 
The distance $\mathrm{dist}_G(u,v)$ (or simply $\mathrm{dist}(u,v)$ if $G$ is clear) between two vertices $u, v \in V(G)$ is the length of a shortest path starting in $u$ and ending in $v$ in $G$.
A graph is $d$-degenerate if it can be reduced to the empty graph by iterative removal of vertices of degree at most $d$. For example, forests are $1$-degenerate. 
A graph is bipartite if its vertices can be partitioned into two parts $A,B$ such that no edge has both its endpoints in the same part. 
Equivalently, a graph is bipartite if it does not contain cycles of odd length.  
For a vertex subset $S \subseteq V(G)$, we denote by $G[S]$ the subgraph of~$G$ \emph{induced by} $S$, i.e., the graph with vertex set~$S$ and edge set $\{uv \in E(G) \mid u, v \in S\}$. Likewise, for an edge subset $M \subseteq E(G)$, we denote by $G[M]$ the graph with edge set $M$ and vertex set $\{u, v \mid uv \in M\}$.

An edge coloring $\phi : E(G) \to \Cmc$ is a function mapping each edge $e \in E(G)$ to a color \mbox{$\phi(e) \in \Cmc$}. Similarly, a vertex coloring assigns colors to vertices. An edge weight function is a function \mbox{$w: E(G)\rightarrow \mathbb{R}_+$}, and similarly a vertex weight function assigns weights to vertices. 
We denote colored weighted graphs by tuples $(G,w,\phi)$. The weight of a set of vertices/edges is the sum of the weights of its elements. 

\paragraph*{Solution discovery.}
Let $G$ be a graph. A \emph{configuration} of $G$ is either a subset of its vertices or a subset of its edges. We formalize the notions of token moves.
In the \emph{unrestricted token addition/removal} model\footnote{{Recall that this definition differs from the definition in~\cite{sol-discovery}.}}, a configuration $C'$ can be obtained (in one step) from $C$, written $C\vdash C'$, if $C' = C \cup \{x\}$ for an element $x \notin C$, or if $C' = C \setminus \{x\}$ for an element $x \in C$.
In the \emph{token jumping} model, a configuration $C'$ can be obtained (in one step) from $C$ if $C' = (C \setminus \{y\}) \cup \{x\}$ for elements $y \in C$ and $x \notin C$.
In the \emph{token sliding} model, a configuration $C'$ can be obtained (in one step) from $C$ if $C' = (C \setminus \{y\}) \cup \{x\}$ for elements $y \in C$ and $x \notin C$ if $x$ and $y$ are neighbors in $G$, that is, if $x,y \in V(G)$, then $xy \in E(G)$; and if $x,y \in E(G)$, then $x \cap y \neq \varnothing$.
If $C'$ can be obtained from $C$ (in any model), we write $C \vdash C'$.
A \emph{discovery sequence} of length $\ell$ in $G$ is a sequence of configurations $C_0 C_1 \dots C_\ell$ of $G$ such that $C_i \vdash C_{i+1}$ for all $0 \leq i < \ell$.

Let $\Pi$ be a vertex (resp.\ edge) selection problem, i.e., a problem defined on graphs such that a solution consists of a subset of vertices (resp.\ edges) satisfying certain requirements. The \textsc{$\Pi$-Discovery} problem is defined as follows. We are
given a graph $G$, a subset $S\subseteq V(G)$ (resp.\ $S\subseteq E(G)$) of size $k$ (which at this point is not necessarily a solution
for $\Pi$), and a budget $\budget$ (as a non-negative integer). 
The goal is to decide whether there exists a discovery sequence $C_0 C_1 \dots C_\ell$ in $G$ for some $\ell \leq b$ such that $S = C_0$ and~$C_\ell$ is a solution for $\Pi$.

Note that for discovery problems in the token sliding model we can always assume that $\budget \leq kn$, where $n$ is the number of vertices in the input graph.
This follows from the fact that each token will have to traverse a path of length at most~$n$ to reach its target position.
For discovery problems in the token jumping model we can always assume $\budget \leq k$, as it is sufficient to move every token at most once.
Similarly, for the unrestricted token addition/removal model we can always assume that $\budget \leq n$ for vertex selection problems and $\budget \leq m$ for edge selection problems, where $m$ is the number of edges in the input graph.
As $k$ is trivially upper-bounded by $n$ for vertex selection problems (resp.\ $m$ for edge selection problems), all solution discovery variants we consider are in~$\NP$ and proving \NP-hardness suffices to prove \NP-completeness.

\paragraph*{Parameterized complexity.}
A \emph{parameterized problem} is a language $L\subseteq \Sigma^*\times \mathbb{N}$, where $\Sigma$ is a fixed finite alphabet. For an instance $(x,\kappa)\in \Sigma^*\times \mathbb{N}$, $\kappa$ is called the \emph{parameter}.
The problem $L$ is called \emph{fixed-parameter tractable}, \textsf{FPT} for short, if there exists an algorithm that on input $(x,\kappa)$ decides in time $f(\kappa) \cdot |(x,\kappa)|^c$ whether $(x,\kappa)\in L$, for a computable function $f$ and constant~$c$. 

The \emph{\textsf{W}-hierarchy} is a collection of parameterized complexity classes $\FPT \subseteq \W{1} \subseteq \W{2}\subseteq \ldots$. It is standard to assume that the inclusion $\FPT\subseteq \W{1}$ is strict.
Therefore, showing intractability in the parameterized setting is usually accomplished by establishing an \textsf{FPT}-reduction from a \textsf{W}-hard problem.  

Let $L,L'\subseteq \Sigma^*\times\mathbb{N}$ be parameterized problems. A \emph{parameterized reduction} from $L$ to $L'$ is an algorithm that, given an instance $(x, \kappa)$ of $L$, outputs an instance $(x',\kappa')$ of $L'$ such that 
$(x, \kappa)\in L \Leftrightarrow (x',\kappa')\in L'$, $\kappa'\leq g(\kappa)$ for some computable function $g$, and the running time of the algorithm is bounded by $f(\kappa) \cdot |(x,\kappa)|^c$ for some computable function $f$ and constant~$c$.
We refer to the textbooks~\cite{cygan2015parameterized,DBLP:series/mcs/DowneyF99,FlumGrohe2006} for extensive background on parameterized complexity.

\section{Spanning trees}\label{sec:spanning}
A \emph{spanning tree} in a connected graph $G$ is a subset of edges $E' \subseteq V(E)$, where $|E'| = |V(G)|-1$ and $G[E']$ is a tree containing all vertices of $G$. In the \textsc{Spanning Tree} problem we are given a graph $G$ and the goal is to compute a spanning tree in $G$.

\subsection{Rainbow minimum spanning trees}

We reduce the problem of discovering spanning trees in the sliding model to the problem of finding (weighted) rainbow spanning trees.
A spanning tree $T \subseteq E(G)$ in a weighted edge-colored multigraph $(G, w, \varphi)$ is a {\em rainbow spanning tree} if every edge in $T$ has a distinct color, i.e., $\forall e, e' \in T$ we have $\phi(e) = \phi(e')$ if and only if $e = e'$. 
In the {\sc Rainbow Minimum Spanning Tree (\rmst)} problem, we are given $(G, w, \varphi)$ and
the goal is to compute a rainbow spanning tree of minimum total weight in $G$, or report that no such rainbow spanning tree exists. The {\sc Rainbow Spanning Tree (\rst)} problem can be defined similarly, \ie by dropping the weights or assuming that all weights are uniform. 

Rainbow spanning trees and their existence have been discussed by Broersma and Li~\cite{BroersmaL97}. In our reduction we will construct an instance of {\sc Rainbow Minimum Spanning Tree} that trivially guarantees the existence of at least one rainbow spanning tree. The following theorem shows that we can find a \rmst\ efficiently, even in multigraphs. 
This follows from similar arguments as in~\cite{BroersmaL97}. For the sake of completeness, we give a short proof. 

\begin{theorem}\label{thm:rmst-in-p}
 The {\sc Rainbow Minimum Spanning Tree} problem in multigraphs can be solved in polynomial time.
\end{theorem}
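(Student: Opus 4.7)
The plan is to reformulate \rmst{} as a weighted matroid intersection problem. I would first introduce two matroids on the ground set $E(G)$: the \emph{graphic matroid} $M_1$, whose independent sets are the forests of the (multi)graph $G$, and the \emph{partition matroid} $M_2$ obtained from the color classes of $\varphi$ with all block capacities equal to one, so a set is independent in $M_2$ precisely when no color appears more than once.

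Next I would observe that rainbow spanning trees of $G$ are exactly the common bases of $M_1$ and $M_2$: by definition of $M_1$, a spanning tree is a base of $M_1$, and by definition of $M_2$, being a common independent set of size $|V(G)|-1$ forces exactly one edge of each of $|V(G)|-1$ distinct colors. Hence finding an \rmst{} is the same as finding a minimum-weight common independent set of maximum cardinality in $(M_1, M_2)$.

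Then I would invoke Edmonds' weighted matroid intersection algorithm, which, given independence oracles for two matroids on a common ground set and a weight function, computes in polynomial time a common independent set of maximum cardinality whose weight is minimum among all such sets. Both oracles here are polynomial-time computable: testing acyclicity for $M_1$ via any standard forest-check, and counting per-color multiplicity for $M_2$. If the returned set has cardinality smaller than $|V(G)|-1$, we report that no rainbow spanning tree exists; otherwise it is a minimum-weight rainbow spanning tree.

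The only point requiring care is that $G$ is a multigraph rather than a simple graph: parallel edges simply become parallel elements in the graphic matroid, which does not affect either the matroid axioms or the correctness of the matroid intersection algorithm. Therefore, the argument carries over to the multigraph setting without modification, and I expect this verification to be the only mildly nontrivial obstacle in writing out the proof.
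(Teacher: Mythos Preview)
Your proposal is correct and follows essentially the same approach as the paper: reduce \rmst{} to weighted matroid intersection over the graphic matroid and the partition matroid induced by the color classes, then invoke a polynomial-time matroid intersection algorithm. One minor caveat: rainbow spanning trees are the common independent sets of size $|V(G)|-1$ (bases of $M_1$ that are independent in $M_2$), not necessarily common \emph{bases} unless the number of colors happens to equal $|V(G)|-1$; the paper deals with this and the minimization by negating the weights and adding a large constant $W$ to force maximum cardinality, while you directly invoke the max-cardinality/min-weight variant of the intersection algorithm---either route is fine.
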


\begin{proof}
The \rmst problem can be solved as a \textsc{Weighted Matroid Intersection} problem in which one asks for a maximum-weight common independent set of two given matroids on the same ground set.\footnote{A {\em matroid} is a non-empty, downward-closed set system $(E, I)$ with ground set~$E$ and a family of subsets $I \subseteq 2^{E}$ that satisfies the augmentation property: if $A,B \in I$  and $|A|<|B|$, then $A+j \in  I$ for some $j\in B\setminus{A}$. Given a matroid $\mathcal{M}=(E,I)$, a set $A\subseteq E$ is called \emph{independent} if $A\in I$. 
    An independent set in the intersection of two matroids $(E,I_1)$ and $(E,I_2)$ on the same ground set $E$ is a set $A\subseteq E$ that is independent in both matroids, i.e., $A\in I_1\cap I_2$. We skip further details on matroids here and refer the reader to~\cite{schrijver2003combinatorial}.}
    Given an edge-colored weighted multigraph $(G,w,\phi)$, consider the following two matroids, $\mathcal{M}_1$ and $\mathcal{M}_2$, on the ground set $E = E(G)$. Matroid $\mathcal{M}_1=(E,I_1)$ is the graphic matroid, where the family of independent sets $I_1$ consists of the subsets of edges which are forests of $G$, that is,  $I_1 = \{F \subseteq E \mid G[F] \text{ is cycle-free}\}$. 
    Matroid $\mathcal{M}_2=(E,I_2)$ is the partition matroid defined by the edge coloring; consider the partition of $E$ into $E_1 \cup E_2 \cup \ldots \cup E_k$ where each~$E_i$ represents edges of the same color $i\in \Cmc$. Then the independent sets of $\mathcal{M}_2$ are defined as $I_2=\{F \subseteq E \mid |F\cap E_i| \leq 1, \forall i\in \Cmc\}$.
    
    A rainbow spanning tree is a subset of elements that is a maximum independent set in both, the graphic and the partition matroid, i.e., $\mathcal{M}_1$ and  $\mathcal{M}_2$. The task of finding such an independent set of maximum weight is known as the \textsc{Weighted Matroid Intersection} problem and several polynomial-time algorithms are known for the problem; e.g., \cite{Edmonds1970,Edmonds1979,Frank81}. To solve the \rmst problem, that asks for a {\em minimum weight} solution, we first multiply all weights by $-1$, add a large value $W$ such that all weights are positive, and then run a \textsc{Weighted Matroid Intersection} algorithm (adding $W$ to make all weights positive guarantees that the solution indeed forms a spanning tree rather than a forest). 
\end{proof}

\subsection{Spanning tree discovery}

In the {\sc Spanning Tree Discovery (STD)} problem in the token sliding model, we are given a graph $G$, an edge subset $S \subseteq E(G)$ with $|S|=n-1$ as a starting configuration, and a non-negative integer $\budget$. The goal is to decide whether there is a spanning tree of $G$ that can be discovered (starting from $S$) using at most~$\budget$ token slides.

\begin{theorem}\label{thm:std-in-p}
 The {\sc Spanning Tree Discovery} problem in the token sliding model can be solved in polynomial time. 
\end{theorem}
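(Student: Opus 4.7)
The plan is to reduce \textsc{Spanning Tree Discovery} in the token sliding model to the \rmst problem on a multigraph and then invoke \Cref{thm:rmst-in-p}. Given an instance $(G, S, \budget)$ with $S = \{s_1, \ldots, s_{n-1}\} \subseteq E(G)$, I would build a weighted edge-colored multigraph $(G^{\star}, w, \phi)$ on vertex set $V(G^{\star}) = V(G)$ by inserting, for every token index $i \in [n-1]$ and every edge $e \in E(G)$, a parallel copy of $e$ with color $i$ and weight $d_{L(G)}(s_i, e)$, where $L(G)$ denotes the line graph of $G$ and $d_{L(G)}$ its shortest-path distance (\ie the number of slides a lone token $i$ would need to reach $e$). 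This construction has polynomial size and is computable in polynomial time, and I would answer \textsc{yes} exactly when $(G^{\star}, w, \phi)$ admits a rainbow spanning tree of total weight at most $\budget$.

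The easy direction (necessity) is immediate: any discovery sequence $S = C_0 \vdash C_1 \vdash \cdots \vdash C_\ell$ with $\ell \leq \budget$ and $C_\ell = T$ a spanning tree assigns to each token $i$ a distinct final edge $\tau_i \in T$, and token $i$ must have performed at least $d_{L(G)}(s_i, \tau_i)$ slides. Picking the color-$i$ copy of $\tau_i$ in $G^{\star}$ for every $i$ therefore exhibits a rainbow spanning tree of weight $\sum_i d_{L(G)}(s_i,\tau_i) \leq \ell \leq \budget$.

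The hard direction (sufficiency) is where the main obstacle lies. A minimum-weight rainbow spanning tree of $(G^{\star}, w, \phi)$ provides a spanning tree $T$ of $G$ together with a bijection $\pi$ from tokens to edges of $T$ of minimum total distance $C^{\star} \leq \budget$, but one must still exhibit an actual sequence of at most $C^{\star}$ valid slides that respects the sliding constraint that no two tokens occupy the same edge. Naively pushing each token along a shortest path in $L(G)$ can create collisions, and simply rerouting around them may seem to force extra slides. To overcome this I would prove a \emph{realization lemma} stating that for any two edge-configurations $S, T$ of the same cardinality, the minimum number of slides transforming $S$ into $T$ equals $\min_{\pi} \sum_{s \in S} d_{L(G)}(s, \pi(s))$, with $\pi$ ranging over bijections $S \to T$.

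The lower bound in the realization lemma is exactly the easy direction above; the substantial work is the upper bound, which I would prove by induction on the minimum bijection cost. At each step I would argue that either some token admits a safe one-step advance along a shortest path toward its current target (decreasing the potential by one), or one can swap two $\pi$-assignments at equal total cost so that the first situation applies; a global cyclic deadlock in which no token can progress and no beneficial swap exists is ruled out by minimality of $\pi$, since a cyclic reassignment around the deadlock would yield a bijection of no greater cost whose support of targets still occupied by starting tokens strictly shrinks, and iterating this cycle-breaking step returns a minimum-cost bijection on which the induction step applies. Combining this realization lemma with \Cref{thm:rmst-in-p} proves \Cref{thm:std-in-p}.
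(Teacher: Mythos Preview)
Your reduction to \rmst is exactly the one the paper uses: build the multigraph with one colour per token and edge weights given by line-graph distances, then invoke \Cref{thm:rmst-in-p}. Where you diverge is in isolating the \emph{realization lemma}. The paper simply asserts that ``a rainbow spanning tree of minimum total cost in $H$ corresponds to a spanning tree in $G$ together with a token sliding sequence of minimum total cost'' and moves on, tacitly assuming that the optimum assignment cost $\min_\pi\sum_i d_{L(G)}(s_i,\pi(s_i))$ is always achievable by a collision-free sliding sequence. You are right that this step deserves a proof, and your inductive plan is sound.

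The sketch can be streamlined, avoiding the deadlock/swap analysis. Since $S\neq T$ there is some $t\in T\setminus S$; take a shortest $L(G)$-path $e_0,\dots,e_m$ from $s_0=\pi^{-1}(t)$ to $t$, let $e_j$ be the \emph{last} occupied edge on it (so $e_{j+1}$ is free), and slide $e_j\to e_{j+1}$. Reassign by sending $e_{j+1}\mapsto t$ and $s_0\mapsto \pi(e_j)$ (the two coincide when $j=0$). Two triangle inequalities give
\[
d(e_{j+1},t)+d(s_0,\pi(e_j))\;\le\;(m-j-1)+\bigl(j+d(e_j,\pi(e_j))\bigr)\;=\;d(s_0,t)+d(e_j,\pi(e_j))-1,
\]
so the new assignment has cost at most $C^\star-1$ and the induction goes through directly. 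What your treatment buys over the paper's is a self-contained argument; what the paper's brevity buys is readability, at the price of leaving this (admittedly routine) verification to the reader.
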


\begin{proof}
We prove the theorem by reducing \textsc{Spanning Tree Discovery} to the {\sc Rainbow Minimum Spanning Tree} problem in multigraphs, which can be solved in polynomial time using results from matroid theory (\Cref{thm:rmst-in-p}). 

To reduce the \textsc{Spanning Tree Discovery} problem to the \rmst problem, consider an instance of \textsc{STD} consisting of a graph $G$ and a set $S \subseteq E(G)$ of size $n-1$. 
For the construction we choose some color set $\Cmc$ of size $n-1$ and associate each edge $e\in S$ with a unique color from~$\Cmc$. Let $i_e \in \Cmc$ denote the color associated with edge $e \in S$. 
Note that by the bijection each color is also associated with a unique edge in $S$.

We construct an edge-colored weighted multigraph $(H, w, \phi)$ with $V(H) = V(G)$ and $E(H) = E(G) \times \Cmc$ and $\Cmc=[n-1]$ as follows. 
For each edge $e \in E(G)$ and color $i_{e'} \in \Cmc$, a selection of the edge $(e,i_{e'})$ in~$H$ denotes that in the \textsc{STD} problem a token slides from an edge $e' \in S$ with color $i_{e'}$ to the edge~$e$. 
We define $w$ such that the weight of an edge $(e,i_{e'})$ in $H$ denotes the sliding distance from $e'$ to $e$ w.r.t.\ the number of edges. 
That means, for each original edge $e\in E(G)$ and each color $i_{e'}$ in $\Cmc$ we have one of the copies in $H$ of this color, $(e,i_{e'})$, and its weight is the shortest path from $e'$ to $e$.
Consequently, for an edge $e\in S$ of color $i_{e}$ in the input graph $G$, we set $w(e,i_e) = 0$.
Now we have constructed an edge-colored weighted multigraph $H$. Note that a spanning tree in $H$ corresponds to a spanning tree in $G$. Additionally, a rainbow spanning tree of minimum total cost in $H$ corresponds to a spanning tree in $G$ together with a token sliding sequence of minimum total cost (w.r.t.~number of edges) and thus a solution to the {\sc STD} problem in $G$.
With this reduction, \Cref{thm:std-in-p} now directly follows from \Cref{thm:rmst-in-p}.
\end{proof}

In the appendix we show that the weighted version of the discovery problem, i.e., where we seek a spanning tree of minimum weight, can also be solved efficiently (\Cref{thm:wstd-in-p}). 
\section{Shortest paths}\label{sec:shortest}

A \emph{shortest path} in a graph between two of its vertices, say $s$ and $t$, is a path connecting $s$ and $t$ of minimum length. In the \textsc{Shortest Path} problem we are given a graph $G$ and $s,t\in V(G)$ and the goal is to compute a shortest path between $s$ and $t$ in $G$. We first prove \NP-hardness of the rainbow variant of the problem even when restricted to $2$-degenerate bipartite graphs. To do so, we describe a reduction from the \NP-complete~\cite{garey1976planar} \textsc{Hamiltonian Path} problem to the \textsc{Rainbow Shortest Path} problem. With a few minor modifications, the same reduction is then adapted to establish \NP-hardness of the  \textsc{Shortest Path Discovery} problem restricted to $2$-degenerate bipartite graphs.

\subsection{Rainbow shortest paths}
In the \textsc{Rainbow Shortest Path (Rainbow SP)} problem we are given a vertex-colored graph $(G, \varphi)$ and two vertices $s,t\in V(G)$. 

A shortest path $P$ from $s$ to $t$ in $G$ is a {\em rainbow shortest path} if every vertex in $P$ has a distinct color, \ie $\forall v, v' \in V(P)$, we have $\phi(v) = \phi(v')$ if and only if $v = v'$. The \textsc{Rainbow Shortest Path} problem asks for a rainbow shortest path $P$ from $s$ to $t$ in~$G$ (one can define the edge-colored variant of the problem in a similar way). 

Rainbow paths have been studied in the literature before, specifically in relation to the rainbow vertex-connection or edge-connection number of a graph. 
We refer the reader to~\cite{DBLP:journals/tcs/ChenLS11} for more details. 

\begin{theorem}
\label{thm:rainbowpath-np}
The \textsc{Rainbow Shortest Path} problem is \NP-complete on the class of \mbox{$2$-degenerate} bipartite graphs. 
\end{theorem}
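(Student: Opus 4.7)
The plan is to reduce from the \NP-hard \textsc{Hamiltonian Path} problem. Given an instance $H$ with $V(H)=\{v_1,\dots,v_n\}$, I construct in polynomial time a vertex-coloured graph $G$ (which I will argue is bipartite and $2$-degenerate) together with two distinguished vertices $s,t$, such that $G$ admits a rainbow shortest $s$-$t$ path if and only if $H$ has a Hamiltonian path. The construction follows a layered skeleton: for every position $j\in[n]$ of a would-be Hamiltonian path I create a layer containing a ``slot'' per vertex $v_i\in V(H)$, with the slot for $v_i$ carrying a private colour $c_i$ that is distinct across $V(H)$ but reused across layers. For every edge $v_iv_k\in E(H)$ and every transition $j\to j{+}1$ I add a private subdivision chain of uniform odd length $\ell$ joining the $v_i$-slot of layer $j$ to the $v_k$-slot of layer $j{+}1$; the chain's internal vertices receive pairwise distinct auxiliary colours. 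Analogous chains attach $s$ to all slots of the first layer and all slots of the last layer to $t$.

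Uniformity of $\ell$ fixes the $s$-$t$ distance at some value $D=D(n,\ell)$ that is independent of $E(H)$, and any walk of length exactly $D$ is forced to visit exactly one slot per layer and to use exactly one chain per transition. Such a walk therefore encodes a sequence $v_{\pi(1)},\dots,v_{\pi(n)}$ with $v_{\pi(j)}v_{\pi(j+1)}\in E(H)$; since the only colours that can repeat along the walk are the $c_i$, the walk is rainbow iff the $\pi(j)$ are pairwise distinct, iff $\pi$ is a Hamiltonian path of $H$. Bipartiteness is handled by choosing $\ell$ odd so that the distance parity from $s$ yields a proper $2$-colouring.

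The main obstacle is $2$-degeneracy, because the naive version turns each slot into a hub of degree $\Theta(\deg_H(v_i))$. I would replace every slot by a low-degree caterpillar-like gadget: a short spine of auxiliary-coloured vertices in which each spine vertex hosts at most one incoming or outgoing chain as a private pendant, and a single distinguished $c_i$-coloured vertex is placed so that every length-$D$ walk through the slot meets it exactly once. Spine and chain lengths are tuned so that any deviation from the intended ``one chain per transition, one $c_i$-vertex per slot'' regime strictly increases the walk length above $D$. A $2$-degenerate elimination order then peels all chain-interior vertices first (each of degree $2$), and afterwards strips each gadget spine from its endpoints inward; once the adjacent chain has been removed, every exposed spine endpoint has degree at most $2$.

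The remaining verification is routine: exhibit a canonical length-$D$ walk to confirm that $D$ is indeed the $s$-$t$ distance, use the rigidity of length-$D$ walks to obtain a bijection between shortest $s$-$t$ walks and valid edge-sequences in $H$, and translate the rainbow condition into the Hamiltonicity of $\pi$. Membership in \NP is immediate since a candidate rainbow shortest path can be verified in polynomial time, and the construction has size polynomial in $|V(H)|+|E(H)|$, establishing the claimed \NP-completeness on the class of $2$-degenerate bipartite graphs.
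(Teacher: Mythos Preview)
Your reduction from \textsc{Hamiltonian Path} via a layered construction, with slot colours $c_i$ encoding vertex identity, is exactly the approach the paper takes. The paper's instantiation is the simplest possible version: a single subdivision vertex per transition (so $\ell=2$ in your notation), with all subdivision vertices at a fixed transition level $W_i$ sharing one auxiliary colour rather than receiving pairwise distinct ones, and with $s$ and $t$ attached to the first and last layers through one-vertex subdivisions as well.

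Where you go astray is in the treatment of $2$-degeneracy. You appear to conflate $2$-degeneracy with bounded maximum degree: a slot vertex of degree $\Theta(\deg_H(v_i))$ is no obstruction at all, because once every chain-interior vertex (each of degree exactly~$2$) has been deleted, the slots together with $s$ and $t$ are left with \emph{no edges whatsoever}. That is precisely the paper's argument (``all vertices of the form $w_{i,j}$ are of degree two; their removal results in a forest''). The caterpillar gadget you propose is therefore unnecessary, and as sketched it would actually endanger the rest of your argument: attaching different chains at different spine positions makes the spine-traversal length depend on the entry/exit pair, which destroys the uniform $s$--$t$ distance $D$ you rely on; and guaranteeing that the single $c_i$-coloured vertex lies between every incoming attachment and every outgoing attachment is a non-trivial design constraint you do not verify. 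Simply drop the gadget and observe that your naive construction with any $\ell\ge 2$ is already $2$-degenerate (and bipartite, since the quotient graph on the slot layers is bipartite); the remainder of your argument then goes through cleanly.
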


\begin{proof}
The fact that the problem is in \NP is immediate. We show \NP-hardness by a reduction from the \textsc{Hamiltonian Path} problem, which is known to be \NP-complete~\cite{garey1976planar}.    
Given an instance~$G$ of \textsc{Hamiltonian Path}, where $G$ is a graph with $n$ vertices denoted by $V(G) = \{v_1, \dots, v_n\}$, we construct an instance $(H,\phi,s,t)$ of \textsc{Rainbow Shortest Path} as follows. See \Cref{fig:path:hardness:rainbow} for an illustration.

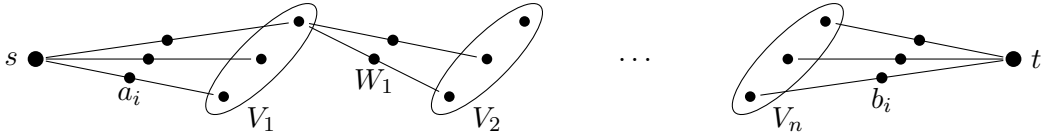
\begin{figure}[h]
    \centering
    \begin{tikzpicture}

    \fill circle (3pt) node (s) {} node[left=1mm] {$s$};

    \fill (s) ++(3.5,0.5) circle (2pt) node (v11) {};
    \fill (s) ++(3,0) circle (2pt) node (v12) {};
    \fill (s) ++(2.5,-.5) circle (2pt) node (v13) {};
    \draw[rotate around={-45:(v12)}] (v12) ellipse (0.3cm and 1cm) node[below=5mm] {$V_1$};

    \draw (s) --node[pos=1/2, fill, circle, inner sep = 1.5pt, outer sep = 1.5pt] {} (v11);
    \draw (s) --node[pos=1/2, fill, circle, inner sep = 1.5pt, outer sep = 1.5pt] {} (v12);
    \draw (s) --node[pos=1/2, fill, circle, inner sep = 1.5pt, outer sep = 1.5pt] {} node[below] {$a_i$} (v13);

    \fill (v11) ++(3,0) circle (2pt) node (v21) {};
    \fill (v12) ++(3,0) circle (2pt) node (v22) {};
    \fill (v13) ++(3,0) circle (2pt) node (v23) {};
    \draw[rotate around={-45:(v22)}] (v22) ellipse (0.3cm and 1cm) node[below=5mm] {$V_2$};

    \draw (v11) --node[pos=1/2, fill, circle, inner sep = 1.5pt, outer sep = 1.5pt] {} (v22);
    \draw (v11) --node[pos=1/2, fill, circle, inner sep = 1.5pt, outer sep = 1.5pt] {} node[below] {$W_1$} (v23);

    \node at (8,0) {$\dots$};
    \fill (v21) ++(4,0) circle (2pt) node (vn1) {};
    \fill (v22) ++(4,0) circle (2pt) node (vn2) {};
    \fill (v23) ++(4,0) circle (2pt) node (vn3) {};
    \draw[rotate around={-45:(vn2)}] (vn2) ellipse (0.3cm and 1cm) node[below=5mm] {$V_n$};

    \fill (vn2) ++(3,0) circle (3pt) node (t) {} node[right=1mm] {$t$};
    \draw (t) --node[pos=1/2, fill, circle, inner sep = 1.5pt, outer sep = 1.5pt] {} (vn1);
    \draw (t) --node[pos=1/2, fill, circle, inner sep = 1.5pt, outer sep = 1.5pt] {} (vn2);
    \draw (t) --node[pos=1/2, fill, circle, inner sep = 1.5pt, outer sep = 1.5pt] {} node[below] {$b_i$} (vn3);

    \end{tikzpicture}
    \caption{An illustration of the hardness reduction for the \textsc{Rainbow Shortest Path}~problem.}
    \label{fig:path:hardness:rainbow}
\end{figure} 

First, $H$ consists of two vertices $s$ and $t$. For every pair $i,j\leq n$ we add a new vertex $v_{i,j}$ in $H$. We define $V_i = \{v_{i,j} \mid j \leq n\}$. Then, for every $i < n$ and $e \in E(G)$ we add a new vertex $w_{i, e}$ and define $W_i = \{w_{i,e} \mid e \in E(G)\}$. Finally, for every $i \leq n$ we add new vertices $a_i$ and $b_i$. We define $A = \{a_i \mid i \leq n\}$ and $B = \{b_i \mid i \leq n\}$. 
We connect the vertices as follows. For every $i \leq n$, we insert the edges $\{s, a_i\}, \{a_i, v_{1, i}\}, \{v_{n,i}, b_i\}$, and $\{b_i, t\}$. Finally, for every $i < n$ and $e = \{v_j, v_\ell\}$ we insert the edges $\{v_{i,j}, w_{i,e}\}$ and $\{w_{i,e}, v_{i+1,\ell}\}$. We assign all vertices in $A$ the same color, all vertices in $B$ the same new color, all vertices $\{v_{i,j} \mid i \leq n\}$ the same new color (this set represents all copies of a vertex $v \in V(G)$), all vertices in $W_i$ the same new color, all vertices in $B$ the same new color, and $s$ and $t$ receive two fresh new colors.  
This finishes the construction of the instance $(H,\phi,s,t)$. 

Observe that $H$ is indeed bipartite and $2$-degenerate; all vertices of the form $w_{i,j}$ are of degree two. Their removal results in a forest, which is $1$-degenerate. Bipartiteness follows from the fact that there are no edges in $G[S]$, $G[T]$, $G[V_i]$, nor $G[W_i]$. 

We show that $G$ contains a Hamiltonian path if and only if $(H,\phi,s,t)$ is a yes-instance of~\textsc{Rainbow Shortest Path}. 
Observe that every shortest $s$-$t$-path in $H$ consists of $s$, exactly one of the vertices $a_i \in A$, exactly one vertex from every $V_i$ and one from every $W_i$, exactly one of the vertices $b_i \in B$ and $t$. Hence, there is no path of length less than $2n+2$ and every such path of length $2n+2$ is exactly of this form.

Now assume that $G$ contains a Hamiltonian path $v_{i_1} v_{i_2} \dots v_{i_n}$. We pick the following vertices in $H$. 
We pick $a_{i_1}$ and $b_{i_n}$, the vertices $v_{j,i_j}$, and, for every $j < n$, we pick the vertices $w_{j,e}$ where $e = \{v_{i_j}, v_{i_{j+1}}\}$. It is not hard to see that these vertices indeed form a rainbow shortest path from~$s$ to $t$ in $H$.  

Conversely, assume that $(H,\phi,s,t)$ is a yes-instance of \textsc{Rainbow Shortest Path}. By the above observations, every shortest path must be of the same form (picking vertices from the same sets).  
As every rainbow shortest path cannot contain two copies of the same vertex from $V(G)$ the claim follows, finishing the proof.
\end{proof}

\subsection{Shortest path discovery}

In the \textsc{Shortest Path Discovery (SPD)} problem in the token sliding model we are given a graph $G$, two vertices $s,t\in V(G)$, a starting configuration $S\subseteq V(G)$ with $|S|=k$, which is equal to the number of vertices on a shortest path between $s$ and $t$ (including $s$ and $t$), and a non-negative integer~$\budget$. 
The goal is to decide whether we can discover a shortest path between $s$ and $t$ (starting from $S$) using at most $\budget$ token slides. 
We denote an instance of \textsc{Shortest Path Discovery} by a tuple $(G, s, t,S, b)$. 

\begin{theorem}
\label{thm:path-np}
The \textsc{Shortest Path Discovery} problem in the token sliding model is \NP-complete on the class of $2$-degenerate bipartite graphs. 
\end{theorem}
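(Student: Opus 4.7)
The plan is to reduce from \textsc{Hamiltonian Path}, directly adapting the construction from the proof of \Cref{thm:rainbowpath-np} and translating the rainbow-coloring constraint into a token-placement plus sliding-budget constraint. \NP-membership is immediate from the preliminaries, so I focus on hardness. Given an instance $G$ with $|V(G)| = n$, I would reuse the graph $H$ built in the rainbow reduction and, for every non-trivial color class $C \in \{A, B, V^1, \dots, V^n, W_1, \dots, W_{n-1}\}$ of that reduction, attach a ``spider'' gadget consisting of a home vertex $h_C$ together with internally vertex-disjoint $(h_C, v)$-paths of a common length $L$ for every $v \in C$ (obtained by subdividing what would otherwise be direct edges). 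The parameter $L$ is chosen large (say $L = 2n+3$) and of the appropriate parity to preserve bipartiteness; call the resulting graph $H'$. The starting configuration is $S = \{s, t\} \cup \{h_C : C \text{ non-trivial}\}$, giving $|S| = 2n+3$ (the number of vertices on any shortest $s$-$t$ path of $H$), and the budget is set to $\budget = (2n+1)\,L$.

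First I would verify the structural properties. Bipartiteness extends to $H'$ because every color class of $H$ lies entirely on one side of the bipartition of $H$ (the $v_{i,j}$'s sit at even distance from $s$, the $a_i, b_i, w_{i,e}$'s at odd distance), so the parity of $L$ can be set to propagate the $2$-coloring consistently through each gadget. For $2$-degeneracy I would give an explicit elimination order: peel off the subdivision vertices of every gadget first (each of degree $2$), then the now-isolated home vertices $h_C$, leaving $H$ itself, which is $2$-degenerate by the proof of \Cref{thm:rainbowpath-np}. Choosing $L$ larger than the diameter of $H$ also guarantees $\mathrm{dist}_{H'}(s,t) = \mathrm{dist}_H(s,t) = 2n+2$: any $s$-$t$ path that detours through some $h_C$ must traverse two length-$L$ subpaths and hence can never shortcut the direct route in $H$.

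Correctness would then hinge on the distance inequalities $\mathrm{dist}_{H'}(h_C, v) = L$ for $v \in C$ and $\mathrm{dist}_{H'}(h_C, v') \geq L + 1$ for $v' \notin C$, the latter because any alternative route from $h_C$ must first leave the $C$-gadget and then use at least one additional edge. With the budget set to exactly $(2n+1)\,L$, every $h_C$-token is therefore forced to slide exactly $L$ steps and land on a vertex of its own class $C$, while the $s$- and $t$-tokens stay put. Combined with the requirement that the final configuration be a shortest $s$-$t$ path (of $2n+3$ vertices), this forces the path to contain exactly one vertex from each color class of $H$, which is precisely the rainbow condition of \Cref{thm:rainbowpath-np} and thus equivalent to the existence of a Hamiltonian path in $G$.

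The main obstacle will be juggling three requirements simultaneously: (i) preserving $2$-degeneracy (which rules out the naive ``hub'' gadget in which $h_C$ is connected to every $v \in C$ by a single edge, as this inflates the degrees of vertices inside $H$); (ii) preventing gadget-based shortcuts that would alter $\mathrm{dist}_{H'}(s,t)$ and break the counting that pins down solution structure; and (iii) enforcing ``one token per color class'' through the sliding budget alone. The subdivided spider gadget is designed to resolve all three, but the correctness proof needs a short case analysis to rule out more exotic sliding strategies, such as two tokens swapping their target classes or a single token hopping through several gadgets; a simple exchange argument should show that any such deviation costs at least one step more than $(2n+1)L$, which the budget equality forbids.
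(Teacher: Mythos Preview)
Your proposal is correct and follows essentially the same approach as the paper: both reduce from \textsc{Hamiltonian Path}, reuse the graph $H$ from the rainbow reduction of \Cref{thm:rainbowpath-np}, and attach ``spider'' gadgets that force each token to land in a prescribed color class by a tight budget-counting argument. The paper's gadgets are heterogeneous (short legs of length~$2$ for the $W_i$-classes via auxiliary $u_{i,j}$-vertices, long paths of length $2n+2$ for the $V^j$-classes, plus an ad~hoc $s'$-token to handle the $A$-class), whereas your uniform length-$L$ spiders make the ``each of the $2n+1$ home tokens must spend exactly $L$'' argument cleaner; conversely the paper's mixed budget $n(2n+2)+3n+2$ requires a slightly more careful accounting. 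One minor remark: your worry about ``exotic sliding strategies'' and the need for a case analysis is unnecessary---since the final configuration must lie entirely in $V(H)$ and every $h_C$-token needs at least $L$ slides to reach $V(H)$, the equality $\budget=(2n+1)L$ immediately pins each such token to its own class with no slack for $s$ or $t$ to move, so no exchange argument is needed.
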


\begin{proof} 
We again show \NP-hardness by a reduction from the \textsc{Hamiltonian Path} problem, which is known to be \NP-complete~\cite{garey1976planar}. The reduction is very similar to the reduction in~\Cref{thm:rainbowpath-np}. 
Given an instance $G$ of \textsc{Hamiltonian Path}, where $G$ is a graph with $n$ vertices denoted by $V(G) = \{v_1, \dots, v_n\}$, we construct an instance $(H,s,t,S,\budget)$ of \textsc{SPD} as follows. See \Cref{fig:path:hardness} for an illustration.

\begin{figure}[h]
    \centering
    \begin{tikzpicture}
    \tikzstyle{r} = [rectangle, fill, minimum size = 4pt, inner sep = 2pt, outer sep = 1pt]

    \node [r, minimum size = 5pt] (s) {} node[left = 0mm of s] {$s$};
    \node[r] (ss) at (0,-1) {} node[left = 0mm of ss] {$s'$};
    \draw (s) -- (ss);

    \fill (s) ++(3.5,0.5) circle (2pt) node (v11) {};
    \fill (s) ++(3,0) circle (2pt) node (v12) {};
    \fill (s) ++(2.5,-.5) circle (2pt) node (v13) {};
    \draw[rotate around={-45:(v12)}] (v12) ellipse (0.3cm and 1cm) node[below=5mm] {$V_1$};

    \fill (v11) ++(-1,2) circle (2pt) node (u11) {};
    \fill (v12) ++(-1,2) circle (2pt) node (u12) {};
    \fill (v13) ++(-1,2) circle (2pt) node (u13) {};
    \draw[rotate around={-45:(u12)}] (u12) ellipse (0.3cm and 1cm) node[left=5mm] {$U_1$};

    \node[r] (x1) at ($(u12) + (0,1)$) {} node[above left = -1mm and -1mm of x1] {$x_1$};
    \draw (x1) -- (u11);
    \draw (x1) -- (u12);
    \draw (x1) -- (u13);

    \draw (v11) -- (u11);
    \draw (v12) -- (u12);
    \draw (v13) -- (u13);

    \draw (s) --node[pos=1/2, fill, circle, inner sep = 1.5pt, outer sep = 1.5pt] {} (v11);
    \draw (s) --node[pos=1/2, fill, circle, inner sep = 1.5pt, outer sep = 1.5pt] {} (v12);
    \draw (s) --node[pos=1/2, fill, circle, inner sep = 1.5pt, outer sep = 1.5pt] {} node[below] {$s_i$} (v13);

    \fill (v11) ++(3,0) circle (2pt) node (v21) {};
    \fill (v12) ++(3,0) circle (2pt) node (v22) {};
    \fill (v13) ++(3,0) circle (2pt) node (v23) {};
    \draw[rotate around={-45:(v22)}] (v22) ellipse (0.3cm and 1cm) node[below=5mm] {$V_2$};

    \fill (v21) ++(-1,2) circle (2pt) node (u21) {};
    \fill (v22) ++(-1,2) circle (2pt) node (u22) {};
    \fill (v23) ++(-1,2) circle (2pt) node (u23) {};
    \draw[rotate around={-45:(u22)}] (u22) ellipse (0.3cm and 1cm) node[left=5mm] {$U_2$};

    \node[r] (x2) at ($(u22) + (0,1)$) {} node[above left = -1mm and -1mm of x2] {$x_2$};
    \draw (x2) -- (u21);
    \draw (x2) -- (u22);
    \draw (x2) -- (u23);

    \draw (v21) -- (u21);
    \draw (v22) -- (u22);
    \draw (v23) -- (u23);

    \draw (v11) --node[pos=1/2, fill, circle, inner sep = 1.5pt, outer sep = 1.5pt] {} (v22);
    \draw (v11) --node[pos=1/2, fill, circle, inner sep = 1.5pt, outer sep = 1.5pt] {} node[below] {$W_1$} (v23);

    \node at (8,0) {$\dots$};
    \fill (v21) ++(4,0) circle (2pt) node (vn1) {};
    \fill (v22) ++(4,0) circle (2pt) node (vn2) {};
    \fill (v23) ++(4,0) circle (2pt) node (vn3) {};
    \draw[rotate around={-45:(vn2)}] (vn2) ellipse (0.3cm and 1cm) node[below=5mm] {$V_n$};

    \fill (vn1) ++(-1,2) circle (2pt) node (un1) {};
    \fill (vn2) ++(-1,2) circle (2pt) node (un2) {};
    \fill (vn3) ++(-1,2) circle (2pt) node (un3) {};
    \draw[rotate around={-45:(un2)}] (un2) ellipse (0.3cm and 1cm) node[left=5mm] {$U_n$};

    \draw (vn1) -- (un1);
    \draw (vn2) -- (un2);
    \draw (vn3) -- (un3);

    \node[r] (xn) at ($(un2) + (0,1)$) {} node[above left = -1mm and -1mm of xn] {$x_n$};
    \draw (xn) -- (un1);
    \draw (xn) -- (un2);
    \draw (xn) -- (un3);

    \node[r, minimum size=5pt] (t) at ($(vn2) + (3,0)$) {} node[right = 0mm of t] {$t$};
    \draw (t) --node[pos=1/2, fill, circle, inner sep = 1.5pt, outer sep = 1.5pt] {} (vn1);
    \draw (t) --node[pos=1/2, fill, circle, inner sep = 1.5pt, outer sep = 1.5pt] {} (vn2);
    \draw (t) --node[pos=1/2, fill, circle, inner sep = 1.5pt, outer sep = 1.5pt] {} node[below] {$t_i$} (vn3);

    \node[r] (z1) at ($(v13) + (1,-1.75)$) {} node[below = 0mm of z1] {$z_1$};
    \node[r] (zn) at ($(vn3) + (1,-1.75)$) {} node[below = 0mm of zn] {$z_n$};
    \node at (6, -2.25) {$\dots$};

    \draw[dashdotted] (z1) -- (v13);
    \draw[dashdotted] (z1) -- (v23);
    \draw[dashdotted] (z1) -- (vn3);

    \path[dashdotted]
    (zn) edge[bend left] (v11)
    (zn) edge[bend left=4] (v21)
    (zn) edge[bend right=40] (vn1)
    ;

    \end{tikzpicture}
    \caption{An illustration of the hardness reduction for the \textsc{Shortest Path Discovery}~problem.}
    \label{fig:path:hardness}
\end{figure}
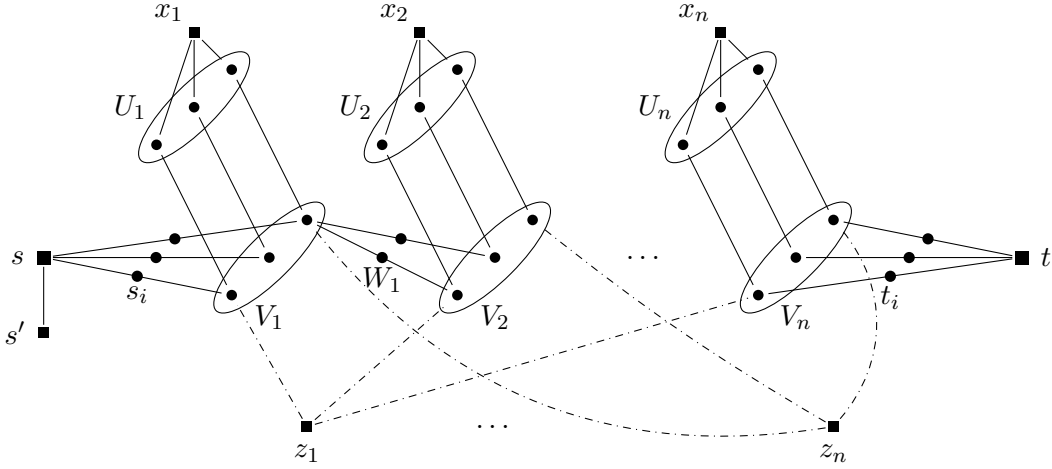 

First, $H$ consists of three vertices $s$, $s'$, and $t$. For every $i,j\leq n$ we insert two new vertices $u_{i,j}$ and $v_{i,j}$ in $H$. We define $U_i = \{u_{i,j} \mid j \leq n\}$ and $V_i = \{v_{i,j} \mid j \leq n\}$. Furthermore, for every $i \leq n$ we add a fresh vertex $x_i$. Then, for every $i < n$ and $e \in E(G)$ we insert a fresh vertex $w_{i, e}$ and define $W_i = \{w_{i,e} \mid e \in E(G)\}$. Finally, for every $i \leq n$ we insert fresh vertices $s_i$, $t_i$, and $z_i$.
We connect the vertices as follows. First, we insert the edge $s's$. For every $i \leq n$, we insert the edges $s_is, s_iv_{1, i}, v_{n,i}t_i$, and $t_it$. Furthermore, for every $i, j \leq n$ we add the edges $u_{i,j} v_{i,j}, x_i u_{i,j}$, and we connect $z_j$ to all $v_{i,j}$ via paths of length $2n+2$. Finally, for every $i < n$ and $e = v_j v_\ell$ we insert the edges $v_{i,j} w_{i,e}$ and $w_{i,e} v_{i+1,\ell}$.
This finishes the construction of $H$. We define $S = \{s,s',t\} \cup \{z_j \mid j \leq n\} \cup \{x_i \mid i \leq n\}$ and define the budget as $b = n(2n+2) + 3n + 2$.
Observe that $H$ is indeed 2-degenerate; all vertices of the form $u_{i,j}$ are of degree 2, as well as the internal vertices of the paths connecting the $x_j$ with the $v_{i,j}$. Their removal results in a forest, which is 1-degenerate. Bipartiteness follows from the fact that for every $i, j \leq n$ the distance between any two vertices in $U_i$ and $U_j$ (and resp.\ $V_i$ and $V_j$ as well as $W_i$ and $W_j$) is even. Hence, $H$ does not contain a cycle of odd length.
 
We show that $G$ contains a Hamiltonian path if and only if $(H,s,t,S,\budget)$ is a yes-instance of~\textsc{SPD}. 
Towards this goal observe that $G$ contains a Hamiltonian path if and only if $H$ contains an $s$-$t$-path that uses exactly one vertex from each $V_i$. To be precise, every shortest $s$-$t$-path consists of $s$, exactly one of the vertices $s_i$, exactly one vertex from every $V_i$ and $W_i$, exactly one of the vertices~$t_i$ and $t$. Hence, there is no such path of length smaller than $2n+2$ and every such path of length $2n+2$ is exactly of this form (observe that $|S| = 2n+3$, implying that every path that can be discovered from $S$ has length $2n+2$).

Now assume that $G$ contains a Hamiltonian path $v_{i_1} v_{i_2} \dots v_{i_n}$. We move the tokens as follows: we move the token on every $z_j$ to $v_{j,i_j}$ using the paths of length $2n + 2$ for a cost of $n(2n+2)$. 
Furthermore, for every $j < n$ we move the token on every $x_j$ via $u_{j,i_j}$ and $v_{j,i_j}$ to the vertex $w_{j,e}$ where $e = v_{i_j} v_{i_{j+1}}$ and similarly we move the token on $x_n$ via $u_{n, i_n}$ and $v_{n,i_n}$ to $t_{i_n}$ for a cost of~$3n$. 
Finally, we move the token on $s$ to $s_{i_1}$ and the token on $s'$ to $s$ for a cost of 2. 
Hence, with a budget of $b = n(2n+2) + 3n + 2$ we discover the path $s s_{i_1} v_{1, i_1} w_{1, e_1} v_{2, i_2} w_{2,i_2} \dots v_{n, i_n} t_{i_n} t$ with $e_i = v_{i_j} v_{i_{j+1}}$ of length $2n+2$.

Conversely, assume that $(H,s,t,S,\budget)$ is a yes-instance of \textsc{SPD}. By the above observation, every shortest path that can be discovered from $S$ contains exactly one vertex from every $U_i$. 
As every shortest $s$-$t$-path must also contain a vertex from $W_i$ for every $i \leq n$, and by the choice of the budget, the discovered path is of the form $s s_{i_1} v_{1, i_1} w_{1, e_1} v_{2, i_2} w_{2,i_2} \dots v_{n, i_n} t_{i_n} t$, and hence $v_{i_1} v_{i_2} \dots v_{i_n}$ is a Hamiltonian path in $G$, finishing the proof.
\end{proof}

\begin{remark*}
Recall that \textsc{Hamiltonian Path} remains \NP-complete on cubic planar graphs. As the paths leading from $s_x$ to $H_i$ have length $2n+2$ and since we may assume $G$ to be cubic, the class of all graphs arising in the construction is not only $2$-degenerate but also has bounded expansion.
\end{remark*}

It remains an interesting open problem to determine whether \textsc{Shortest Path Discovery} is polynomial-time solvable on classes of graphs excluding a topological minor, or even on planar graphs. 

We now show that the problem is fixed-parameter tractable with respect to the parameter $k$. 

\begin{theorem}\label{thm:path-k}
    The \textsc{Shortest Path Discovery} problem in the token sliding model is fixed-parameter tractable with respect to parameter $k$. 
\end{theorem}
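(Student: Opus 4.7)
The plan is to reduce \textsc{SPD} to an offline assignment problem on shortest $s$-$t$-paths, and then to solve that assignment problem with a dynamic program indexed by subsets of tokens. The central characterization I would establish is that the instance $(G,s,t,S,b)$ is a yes-instance if and only if there exist a shortest $s$-$t$-path $P = v_1 v_2 \cdots v_k$ (with $v_1=s$, $v_k=t$) and a bijection $\sigma\colon S \to V(P)$ with $\sum_{v \in S} \mathrm{dist}_G(v, \sigma(v)) \leq b$. The ``only if'' direction is immediate: each token's trajectory has length at least its shortest-path displacement, and the total trajectory length equals the number of slides. For the ``if'' direction I would argue by scheduling: route each token along a shortest path toward its target and, whenever two routes would compete for the same vertex, exploit the fact that tokens are indistinguishable to swap identities on the fly, keeping the total slide count equal to the sum of displacements.

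Given this equivalence, a natural dynamic program solves the assignment problem. I would first preprocess the BFS layers $L_j$ consisting of vertices at distance $j-1$ from $s$ and $k-j$ from $t$ (these are precisely the admissible positions $j$ on a shortest $s$-$t$-path) together with all distances $\mathrm{dist}_G(v,u)$ for $v \in S$ and $u \in V(G)$. For each $u \in V(G)$ and each $T \subseteq S$, define
\[
 D[u,T] \;=\; \min_{\substack{v_1 \cdots v_{|T|}=u \\ \sigma\colon[|T|]\to T}}\;\sum_{i=1}^{|T|} \mathrm{dist}_G(\sigma(i), v_i),
\]
where the minimum is taken over partial shortest $s$-$u$-paths and bijections $\sigma$. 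The recurrence is
\[
 D[u',\,T\cup\{\tau\}] \;=\; \min_{u\in N(u')\cap L_{|T|},\;\tau\in S\setminus T}\;\bigl(D[u,T] + \mathrm{dist}_G(\tau,u')\bigr),
\]
with base case $D[s,\{\tau\}] = \mathrm{dist}_G(\tau,s)$ for each $\tau \in S$, and the instance is accepted iff $\min D[t,S]\leq b$. The table has at most $n\cdot 2^k$ nonempty entries and each is filled in $O(k\cdot\Delta(G))$ time, yielding total running time $O(2^k\cdot k\cdot m + k(n+m))$, which is \FPT in $k$.

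The main obstacle I anticipate is the scheduling argument in the ``if'' direction of the characterization. Making the identity-swap argument precise requires an inductive construction and careful handling of corner cases (\eg, when all neighbors of a token are occupied by other tokens already in their target positions, forcing a temporary detour off the path) while keeping the total number of slides within the sum of displacements. Once this lemma is established, the correctness and the running-time analysis of the dynamic program are essentially routine.
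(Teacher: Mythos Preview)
Your approach is correct and rests on the same reduction the paper uses: both observe that a yes-instance is equivalent to the existence of a shortest $s$-$t$-path $P$ together with a bijection from $S$ to $V(P)$ of total displacement at most $b$. The paper then simply enumerates all $k!$ orderings of the tokens (deciding which token goes to which BFS level), and for each ordering runs a single shortest-path computation in a layered DAG with edge weights encoding the cost of placing the designated token at the head vertex; this yields $\mathcal{O}(k!\,(n+m))$. Your subset DP is a genuine improvement over this: by keeping track of the set of tokens already assigned rather than a full ordering, you drop the dependence on $k$ from $k!$ to $2^k \cdot k$, at the price of a slightly less transparent algorithm. (One notational quibble: in your recurrence the variable $\tau$ appears on both sides; it is cleaner to write $D[u',T'] = \min_{\tau\in T'}\min_{u\in N(u')\cap L_{|T'|-1}}\bigl(D[u,T'\setminus\{\tau\}]+\mathrm{dist}_G(\tau,u')\bigr)$.)

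On the scheduling lemma you rightly flag: the paper uses the characterization without proof, so you are being more careful. The argument does go through. For the ``if'' direction, proceed by induction on $\sum_{v\in S}\mathrm{dist}_G(v,\sigma(v))$. Pick any $v$ with $v\neq\sigma(v)$, fix a shortest $v$-$\sigma(v)$ path $v=v_0,v_1,\ldots,v_\ell$, let $j$ be the smallest index with $v_{j+1}\notin S$, and slide $v_j\to v_{j+1}$. Reassign by setting $\sigma'(v_{j+1})=\sigma(v_0)$ and cyclically shifting $\sigma'(v_i)=\sigma(v_{i+1})$ for $0\le i<j$; a triangle-inequality check shows the new total displacement drops by at least one, so the induction closes. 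Since tokens are unlabeled, this identity-swap never forces an actual detour, and the worry you mention does not materialise.
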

\begin{proof}
    Let $(G,s,t,S,\budget)$ be an instance of \textsc{SPD}. For every $v\in V(G)$ we compute its distance to~$s$ and delete $v$ if the distance is larger than $k$. 
    We enumerate the tokens in $S$ as $s_0,s_1,\ldots, s_{k-1}$ such that token~$s_i$ shall slide to a vertex at distance $i$ from $s$. There are $k!$ such enumerations. Now we orient and assign weights to the edges of $G$ to obtain a weighted directed graph $H$. 
    If $uv$ is an edge such that $\mathrm{dist}_G(s,u)=i$ and $\mathrm{dist}_G(s,v)=i+1$, then we orient the edge as $(u,v)$ and assign it weight $\mathrm{dist}_G(s_{i+1},v)$, that is, the cost of moving token $s_{i+1}$ to vertex $v$. We delete all edges that did not receive a weight, that is, all edges that connect vertices at the same distance from $s$. 

    Since $H$ contains only edges between vertices of distance $i$ to distance $i+1$ from $s$, an $s$-$t$-path in $H$ corresponds to a shortest $s$-$t$-path in~$G$. 
    Furthermore, by definition of the weights, a shortest $s$-$t$-path $P$ (with respect to the weights) corresponds exactly to the discovery of a shortest path in~$G$ where token $s_i$ slides to the vertex of $P$ which is at distance $i$ from $s$. 
    We can therefore simply compute a shortest $s$-$t$-path in $H$ and verify whether its weight is at most $\budget - \mathrm{dist}_G(s, s_0)$. Since~$H$ is a DAG we can compute such a path in time $\Oof(n+m)$ and, consequently, the full algorithm runs in time $\Oof(k!(n+m))$.
\end{proof}

Finally, we prove that the \textsc{Shortest Path Discovery} problem remains fixed-parameter tractable when  parameterized by $\budget$.

\begin{theorem}\label{thm:path-b}
    The \textsc{Shortest Path Discovery} problem in the token sliding model is fixed-parameter tractable with respect to parameter $\budget$. 
\end{theorem}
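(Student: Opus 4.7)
Plan: We design an FPT algorithm with parameter $b$ that combines a bounded-depth branching on a candidate frozen token set with a per-gap assignment subproblem solved at each leaf. By a standard exchange argument, we may assume without loss of generality that every token in $S \cap V(P)$ stays at its original vertex on the final shortest path $P$, so the moving tokens $M = S \setminus V(P)$ and the blue positions $B = V(P) \setminus S$ satisfy $|M| = |B| \le b$ (each moving token contributes at least one slide), and the frozen set $F^\star = S \cap V(P)$ lies on some shortest $s$-$t$-path with $|S \setminus F^\star| \le b$.

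After computing $d_G(s,\cdot)$ and $d_G(\cdot,t)$ once, a token $u \in S$ is \emph{bad} if $d_G(s,u) + d_G(u,t) \neq k-1$ and must lie in $M$. Initialize $F := S$ and iterate over conflicts of three types: (i) a bad token $u \in F$, removed deterministically; (ii) two tokens $u, u' \in F$ with $d_G(s,u) = d_G(s,u')$; or (iii) two tokens $u, u' \in F$ at distinct distances $\ell < \ell'$ from $s$ with $d_G(u,u') > \ell' - \ell$. In cases (ii) and (iii), at most one of $\{u, u'\}$ can appear in a consistent frozen set, but both may be excluded; we branch into three subproblems removing $u$ only, $u'$ only, or both from $F$. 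Prune any branch with $|S \setminus F| > b$. The search tree has depth at most $b$ and branching factor at most three, yielding at most $3^b$ leaves; since $F^\star$ has none of the listed conflicts, the branch that at each conflict removes exactly $\{u, u'\} \setminus F^\star$ (a nonempty subset of $\{u, u'\}$) preserves $F^\star \subseteq F$ and ultimately reaches $F = F^\star$ at some leaf.

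At each leaf, $F$ is consistent, so sorting $F \cup \{s, t\}$ by distance from $s$ as $s = f_0, f_1, \ldots, f_{r+1} = t$ with $\ell_j = d_G(s, f_j)$ decomposes any shortest $s$-$t$-path through $F$ into shortest $f_j$-to-$f_{j+1}$-subpaths whose internal vertices are the $k - |F| \le b$ blue positions (also including $s$ and $t$ themselves if they are not in $S$). Enumerate all bijections $\pi \colon M \to \{(j, p) : \ell_j < p < \ell_{j+1}\}$ between moving tokens and blue slots (at most $b!$ such bijections). For each fixed $\pi$, the problem decouples across gaps, and in gap $j$ we minimize $\sum_p d_G(\pi^{-1}(j, p), w_p)$ over shortest $f_j$-to-$f_{j+1}$-paths; this is a standard dynamic program over the layered shortest-path DAG in polynomial time. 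Return Yes iff some leaf-bijection pair yields total cost at most $b$, giving an overall running time of $3^b \cdot b! \cdot n^{O(1)}$.

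The principal difficulty is proving completeness of the three-way branching, i.e., that every valid frozen set $F^\star$ (equivalently every canonical solution after applying the exchange lemma) is reached at some leaf; this reduces to the characterization that $F$ lies on a common shortest $s$-$t$-path precisely when none of the three conflict types applies. A secondary subtlety is that the sum-of-distances objective computed at each leaf is actually realizable as a sliding sequence of that many moves; this is handled by scheduling the moving tokens one at a time along token-free paths, which succeeds since at most $2k \le 2n$ vertices are simultaneously occupied during the transformation.
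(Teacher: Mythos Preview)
Your approach differs substantially from the paper's (the paper bounds the number of potentially moving tokens by $O(b^2)$ via a good/bad-level argument and then invokes the \FPT-in-$k$ algorithm), but your branching scheme has a genuine correctness gap.

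You claim that following the branch which removes exactly $\{u,u'\}\setminus F^\star$ at every conflict ``ultimately reaches $F=F^\star$ at some leaf.'' This is false: the branching halts as soon as the current $F$ is conflict-free, and that can happen at some $F\supsetneq F^\star$. At such a leaf your DP only explores shortest paths through all of $F$, so the optimal path $P$ (which does not pass through the extra tokens in $F\setminus F^\star$) is never examined, and no other leaf of your search tree contains $F^\star$ either.

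Here is a concrete counterexample. Take
$V(G)=\{s,a,p_1,p_2,t,m,q,c,d\}$ with edges
$\{s,a\},\{a,p_1\},\{p_1,p_2\},\{p_2,t\},\{a,m\},\{m,q\},\{q,t\},\{c,p_1\},\{d,p_2\}$.
Then $d(s,t)=4$, $k=5$, and the two shortest $s$-$t$-paths are $P_1=s\,a\,p_1\,p_2\,t$ and $P_2=s\,a\,m\,q\,t$. Let $S=\{s,a,m,c,d\}$ and $b=5$. The tokens $c$ and $d$ are bad ($d(s,c)+d(c,t)=3+3=6$ and $d(s,d)+d(d,t)=4+2=6$), so your algorithm removes them deterministically and reaches $F=\{s,a,m\}$, which has no conflict of any type and is therefore the unique leaf. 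The only shortest path through $F$ is $P_2$, and the cheapest way to place $c,d$ on $\{q,t\}$ costs $\min(4+2,\,3+3)=6>b$. Yet the instance is a yes-instance: using $P_1$ with assignment $m\to t$, $c\to p_1$, $d\to p_2$ costs $2+1+1=4\le b$, and this is realizable as four slides. Thus your algorithm outputs \textsc{No} on a yes-instance.

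The problem is structural: once the non-bad tokens in $S$ happen to be mutually consistent, your branching never considers evicting any of them, even though the optimal solution may require it. Branching further at consistent $F$ would need a rule that does not depend on knowing $F^\star$, and it is not clear how to do this while keeping the tree size bounded in $b$ alone (naively trying all tokens to evict gives branching factor $|F|$, which can be as large as $k$). The paper sidesteps this by a different structural argument: it shows that all but $O(b^2)$ tokens are forced to stay put, which directly bounds the relevant search space in terms of $b$.
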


\begin{proof}
Let $(G,s,t,S,\budget)$ be an instance of \textsc{SPD}. Let us call the vertices at distance $i$ from $s$ the vertices at level $i$. If we can discover a solution with budget $\budget$, then it must be the case that all but at most $\budget$ tokens of $S$ are already in their correct positions. In particular, there can be at most~$\budget$ many levels that do not contain a token from $S$ and at most $\budget$ many levels containing two or more tokens (and each level can contain at most $\budget+1$ many tokens). We call levels not containing exactly one token \emph{very bad} levels. Now, for a level $i$ containing exactly one token on vertex $v$ (not a very bad level), we say that $i$ is a \emph{bad} level whenever one of the following is true:
\begin{itemize}
\item $i - 1 \geq 0$, level $i - 1$ contains exactly one token on vertex $u$, and $uv \notin E(G)$;  or 
\item $i + 1 \leq \mathrm{dist}(s,t) + 1$, level $i + 1$ contains exactly one token on vertex $w$, and $vw \notin E(G)$.
\end{itemize}  

Note that we can have at most $\budget$ very bad levels and at most $3\budget$ bad levels (every $3$ bad levels require at least one unit of the budget); otherwise we can reject the instance. Each bad or very bad level contains at most $\budget + 1$ tokens and at most $\budget$ tokens do not belong to a level between $s$ and~$t$. Hence, the total number of tokens that potentially have to move is so far $5\budget(\budget + 1) = 5\budget^2 + 5\budget$ (located on at most $5\budget$ levels). 

We call a level that is neither bad nor very bad a \emph{good} level. Note that we can group the good levels into at most $5\budget + 1$ groups of consecutive levels. Moreover, the tokens of each group form a path. 
We denote those paths by $P_1,\ldots, P_q$, for $q\leq 5\budget+1$. Consider some $P_i$ with at least $2\budget+1$ many vertices, say $P_i=v_1\ldots v_{\ell}$ for $\ell \geq 2\budget+1$. Let $v_x$ be a vertex of $P_i$ such that $\ell + 1 \leq x \leq \ell - (\budget + 1)$. In other words, there are at least $\budget$ vertices preceding and at least $\budget$ vertices succeeding $v_x$ in $P_i$. 
In what follows we assume, without loss of generality, that the token on $v_x$ does not move whenever $v_x$ belongs to the final shortest path between $s$ and $t$. This is a safe assumption because if any other token must pass via $v_x$ to reach its destination we simply swap the roles of both tokens. In other words, getting a token on $v_x$ does not consume any units of the budget. 
We claim that in a shortest discovery sequence of at most $\budget$ token slides the token on vertex~$v_x$ does not move. This follows from the fact that otherwise it must be the case that either all tokens on~$v_y$, $1 \leq y < x$, or all tokens on $v_z$, $\ell \geq z > x$, must move. However, as there are at least~$\budget$ tokens to move in either case this contradicts the fact that our discovery sequence slides no more than~$\budget$ tokens. To prove that either every token before or after $v_x$ must move in a shortest discovery sequence we aim towards a contradiction. Assume that the token at level $x$ moves but there are two levels $y < x$ and $z > x$ such that the tokens on level $y$ and $z$ do not move. 
Since the subpath between levels $y$ and $z$ is already a path connecting $v_y$ and $v_z$ (going through $v_x$), the movement of the token at level $x$ can be ignored, contradicting our assumption of a shortest discovery sequence. Since at most $\budget$ tokens can move, and by symmetry, we conclude that at most $2\budget$ many tokens on the two boundaries of $P_i$ can move.
We call the tokens on good levels that are not at distance at most $\budget$ from some boundary of a $P_i$ \emph{fixed} tokens. Note that the number of tokens that are not fixed is at most $10\budget^2 + 2\budget$. 

Putting it all together, we conclude that the set of tokens that can potentially move has size at most $15\budget^2 + 7\budget$. 
We can now proceed just as in the proof of \Cref{thm:path-k} to obtain the required algorithm for parameter $\budget$; we can now guess the subset of tokens that move as well as the level each moving token will occupy.
\end{proof}
\section{Matchings}\label{sec:matching}

A \emph{matching} in a graph $G$ is a set of edges $M \subseteq E(G)$ such that each vertex $v \in V(G)$ appears in at most one edge in $M$. In the \textsc{Matching} problem we are given a graph $G$ and an integer $k$ and the goal is to compute a matching of size $k$ in $G$. 

\subsection{Rainbow matchings}
We show \textsf{NP}-hardness of the \mdis~problem via a reduction from the \rmat~problem. 
Let $(G, \varphi)$ be an edge-colored graph.
A matching $M \subseteq E(G)$ is said to be a \emph{rainbow matching} if all edges in $M$ have pairwise distinct colors. 
An edge-coloring $\phi$ is {\em proper} if every pair of adjacent edges have distinct colors, where edges are adjacent whenever they are incident to a common vertex.  
Formally, the \rmat~problem is defined as follows. 
Given an edge-colored graph $(G, \phi:E(G) \to \Cmc)$ and an integer $k$, decide whether there exists a rainbow matching of size~$k$ in $G$. 

\begin{theorem}[\cite{DBLP:journals/tcs/LeP14}]
The \rmat~problem is \NP-complete, even when restricted to properly edge-colored paths. 
\end{theorem}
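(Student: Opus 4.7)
The plan is to show \NP-membership (immediate: given a candidate edge set of size $k$, check that no two edges share a vertex and that all colors are distinct) and then to prove \NP-hardness by reduction from a well-known \NP-complete problem. The natural choice is \textsc{3-SAT}, since a single path is a very restricted combinatorial structure and we need a reduction that is ``serializable'' along a line; alternatives such as \textsc{3-Dimensional Matching} could also work but seem harder to linearize.

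The construction I would attempt is as follows. Start from a \textsc{3-SAT} instance $\phi$ with variables $x_1,\dots,x_n$ and clauses $C_1,\dots,C_m$, and build a single path $P$ by concatenating gadgets. For each variable $x_i$, insert a \emph{variable gadget}: a short subpath in which the unique rainbow matchings of a prescribed size fall into exactly two ``canonical'' choices, one labeled \emph{true} and one labeled \emph{false}. For each clause $C_j = \ell_{j,1} \vee \ell_{j,2} \vee \ell_{j,3}$, insert a \emph{clause gadget}: a subpath that requires picking one of three particular edges, one per literal, to reach the target matching size. The logic is propagated through shared colors: every literal $\ell_{j,r}$ has a dedicated color $c_{j,r}$ that appears both on the corresponding clause-gadget edge and on the variable gadget for $x_i$ on the ``opposite'' side (the side chosen when $\ell_{j,r}$ is \emph{false}). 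Thus, if the variable is set so that $\ell_{j,r}$ is false, the variable gadget already uses $c_{j,r}$, and the rainbow constraint forbids the corresponding clause edge; at least one literal must be true for the clause gadget to contribute its required edge. Between consecutive gadgets I would insert short ``buffer'' subpaths of fresh unique colors so that forced matching edges inside gadgets do not interact across boundaries.

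Setting $k$ equal to the sum of the per-gadget requirements, one direction is direct: a satisfying assignment yields a canonical rainbow matching of size~$k$. The converse requires arguing that every rainbow matching of size~$k$ must use exactly the canonical matching in each variable gadget (so that it encodes a consistent assignment) and exactly one edge in each clause gadget (so that the assignment satisfies all clauses); this is typically enforced by making the gadgets ``tight'', i.e., any deviation loses at least one edge toward the target $k$. Throughout, I would maintain the proper edge coloring by choosing, whenever a color $c_{j,r}$ is reused on $P$, to place its occurrences far apart and to intersperse buffer colors so that no two consecutive edges of $P$ receive the same color.

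The main obstacle I anticipate is the combination of three constraints that are hard to satisfy simultaneously on a \emph{single} path: (i) the edge coloring must be proper (adjacent edges get different colors), (ii) the gadgets must be rigid, with no unintended rainbow matchings of the target size obtainable by ``mixing'' edges from different canonical choices, and (iii) colors shared between variable and clause gadgets must propagate the truth values faithfully without creating spurious rainbow matchings that bypass the intended logic. Designing variable gadgets whose two canonical matchings use exactly the intended literal-color sets, while keeping the whole path properly edge-colored with a polynomial palette, is the delicate part; I would expect to need an alternating ``comb'' structure where forced fresh colors on every second edge guarantee properness and isolate the literal-colors on the remaining edges.
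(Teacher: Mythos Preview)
The paper does not prove this theorem at all: it is stated with a citation to~\cite{DBLP:journals/tcs/LeP14} and used purely as a black box to seed the subsequent hardness reductions (for \mdis\ in \Cref{thm:matching:hardness} and for \textsc{Rainbow Edge Cut} on planar graphs in \Cref{thm:rainbowcut}). So there is nothing to compare your attempt against in this paper; any proof you write is automatically ``a different route'' in the trivial sense.

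On the substance of your sketch: the high-level strategy (serialize a \textsc{3-SAT} instance along a path, encode each variable by a two-state subpath, propagate truth values through shared literal colors that block the ``wrong'' clause edge) is the standard template and is the one used in~\cite{DBLP:journals/tcs/LeP14}. What you have written, however, is still a plan rather than a proof, and you correctly identify the real work: the rigidity argument. On a path, a subpath of length $\ell$ typically has many maximum matchings (their number grows like a Fibonacci sequence), so a variable gadget with ``exactly two canonical choices'' does not come for free; you need to force all non-canonical matchings to either drop below the tight count or violate the rainbow constraint via carefully placed repeated colors. Likewise, for the clause gadget you must rule out selecting two or more literal edges simultaneously (on a path, non-adjacent literal edges are compatible as a matching), which again has to be handled by the global tightness of $k$ together with the color bookkeeping. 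Until those gadgets are written down explicitly with their color assignments and the counting argument is carried out, the converse direction (rainbow matching of size $k$ $\Rightarrow$ satisfying assignment) is not established. None of this is fatal, but it is exactly the part that~\cite{DBLP:journals/tcs/LeP14} spends its effort on, so if you want a self-contained proof you should consult that paper for the concrete gadget design rather than reinvent it.
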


\subsection{Matching discovery}

In the \mdis~problem in the token sliding model we are given a graph $G$, a starting configuration $S~\subseteq~E(G)$ of size $k$, and a non-negative integer $b$.
The goal is to decide whether we can discover a $k$-sized matching~$M$ (starting from $S$) using at most $b$ token slides. Recall that a token on some edge $e$ can only slide (in one step) to some edge $e'$ such that $e$ and $e'$ are adjacent, i.e., share a vertex. 
We denote an instance of \mdis by a tuple $(G, S,b)$. 
We always use $k$ to denote the size of~$S$.

\subsubsection{\NP-hardness on 2-degenerate bipartite graphs}

\begin{theorem}
\label{thm:matching:hardness}
    The \mdis~problem in the token sliding model is \NP-hard on 2-degenerate bipartite graphs.
\end{theorem}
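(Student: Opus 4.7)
The plan is to reduce from \rmat restricted to properly edge-colored paths, which is \NP-complete by~\cite{DBLP:journals/tcs/LeP14}. Given an instance $(P,\phi,k)$ with $P = v_1 v_2 \cdots v_n$ properly colored by $\phi$, I will build an \mdis instance $(H,S,b)$ as follows. I start from a subdivision of $P$ (each $e_i$ is replaced by a short auxiliary path) and, for every color $c$ appearing in $\phi$, add a \emph{color gadget}: a bottleneck vertex $y_c$ joined to a designated subdivision vertex of every color-$c$ edge of $P$, together with one or more pendants hanging off $y_c$. For each $e_i$ of $P$, I designate a unique ``selector'' edge in $H$ that is simultaneously incident to one leg of the subdivision of $e_i$ and reachable (within the intended budget) only through $y_{\phi(e_i)}$; choosing $e_i$ for the rainbow matching will correspond to placing the selector of $e_i$ in the final matching. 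Since a matching can contain at most one edge incident to $y_c$, at most one color-$c$ selector can appear in the final matching, which is exactly the rainbow condition. Bipartiteness is preserved by using even-length subdivisions, and $2$-degeneracy will be arranged so that vertices can be eliminated in an order respecting maximum degree $2$ (pendants first, then degree-$2$ subdivision vertices, then the remainder).

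The starting configuration $S$ will place $k$ tokens on mutually conflicting gadget edges (for example $k$ edges sharing a common vertex inside the gadgets), so that $S$ is not already a matching and every token is forced to slide. The budget $b$ is calibrated to be just enough for each of the $k$ tokens to exit a gadget through some $y_c$ and land on a selector. The subdivision is moreover laid out so that two edges of $P$ sharing $v_{i+1}$ have selectors in $H$ that also share $v_{i+1}$; path-adjacency in $P$ therefore translates into matching-adjacency in $H$. Consequently, any successful discovery within budget $b$ produces $k$ pairwise non-adjacent selectors belonging to $k$ distinct color classes, which is a rainbow matching of size $k$ in $P$; conversely, every rainbow matching of size $k$ yields such a schedule by sliding each of the $k$ relevant tokens directly to its selector.

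The main obstacle will be the two-sided calibration of $b$ together with the internal structure of the gadgets. On one side $b$ must be large enough that every rainbow matching of $P$ can be realized; on the other side it must be tight enough to rule out ``cheating'' schedules in which a token terminates on a non-selector edge, on a selector of the wrong color, or in which the final matching cobbles together auxiliary pendants in place of selectors. A secondary technical difficulty is keeping $H$ simultaneously bipartite and $2$-degenerate while still guaranteeing the bottleneck property of every $y_c$: a single $y_c$ of high degree would violate $2$-degeneracy, so the attachments of color-$c$ edges to $y_c$ must be routed through degree-$2$ intermediaries whose peeling order leaves no subgraph of minimum degree~$3$. Getting this elimination ordering to be compatible with the bottleneck-based rainbow enforcement is where I expect most of the detailed work to lie.
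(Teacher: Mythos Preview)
You start from the right source problem (the paper also reduces from \rmat on properly edge-colored paths), but the mechanism you sketch for enforcing the rainbow and the non-adjacency constraints simultaneously does not cohere, and this is a genuine gap rather than a detail to be filled in.

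You want the selector of $e_i$ to be incident to $y_{\phi(e_i)}$ (this is the only way your sentence ``a matching can contain at most one edge incident to $y_c$'' does any work), and you also want the selectors of two $P$-adjacent edges $e_{i-1},e_i$ to share the common path vertex. An edge has only two endpoints; the selector of $e_i=v_iv_{i+1}$ cannot be incident to $y_{\phi(e_i)}$, to $v_i$ (so as to conflict with the selector of $e_{i-1}$), and to $v_{i+1}$ (so as to conflict with the selector of $e_{i+1}$) all at once. Any consistent choice of ``the other endpoint'' leaves one of the two adjacencies unchecked, so non-matchings of $P$ slip through. If instead you intend the selectors \emph{not} to be incident to $y_c$ and rely on the phrase ``reachable only through $y_c$'', note that in the edge sliding model nothing prevents several tokens from traversing the same vertex at different steps; $y_c$ is not a bottleneck for routing, only for the final matching. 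Either reading fails to deliver the rainbow property. A smaller point: your worry that ``a single $y_c$ of high degree would violate $2$-degeneracy'' is misplaced (stars are $1$-degenerate); the real degeneracy issues, if any, would arise elsewhere.

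For comparison, the paper's construction is structurally quite different and does not use $k$ tokens. It uses $2(n-2)+3|\Cmc|$ tokens: for every internal vertex $u$ of $P$ a length-$2$ token path is attached whose resolution (one slide) necessarily blocks one of the two edge-copies at $u$, enforcing the matching property; for every colour $c$ a length-$3$ token path is attached which can be resolved either in two slides via a shared pool of $|\Cmc|-k$ escape vertices, or in three slides by pushing a token onto a colour-$c$ edge-copy. The tight budget $b=3k+2(|\Cmc|-k)+(n-2)$ then forces exactly $k$ colours to take the three-step route, which is what yields rainbowness.
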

\begin{proof}
We show \NP-hardness by a reduction from the \rmat~problem on properly edge-colored paths.
Let $(G, \phi, k)$ be an instance the \rmat~problem where $G$ is a path and $\phi$ is a proper edge-coloring. Let $\Cmc$ denote the set of colors appearing in $G$. 
We construct an instance $(H, S, b)$ of the \mdis~problem as follows. 
The graph $G$ is a bipartite graph since $G$ is a path. 
Let us partition the vertex set of $G$ into two sets $A$ and $B$. 
Let $\ell = |\Cmc| - k$. 
For each edge $e= uv \in E(G)$, we add two vertices $x_u^e$ and $x_v^e$ in $V(H)$, and connect them by an edge. 
Let $D = \{u \in V(G) \mid |N_G(u)| = 2\}$.
We use the set $D$ to denote the set of all degree-two vertices in $G$. 
For each vertex $u \in D$, we add a path of three vertices $y^1_u, y^2_u, y^3_u$ in $H$. 
Let $e_1$ and $e_2$ be the edges incident on a degree two vertex $u$, then we add the edges $x_u^{e_1}y_u^1$ and $x_u^{e_2}y_u^1$ in $H$. 
For each color $c \in \Cmc$, we add a path of four vertices $z_c^1,z_c^2,z_c^3,z_c^4$ in $H$. 
For each color $c \in \Cmc$, and for each edge $e= uv \in E(G)$ with $\phi(e) = c$ (without loss of generality assume that $u \in A$), we add an edge $z_c^1 x_u^e$ in $H$. 
Finally, we add two $\ell$-sized vertex subsets $W$ and $W'$. 
For each color $c \in \Cmc$, for each $w \in W$ and $w' \in W'$, we add the edges $z_c^1w$ and $z_c^4w'$ 
in $H$.
If $\ell = 0$, that is, $|\Cmc| = k$, then we will have the empty vertex subsets $W$ and $W'$. 
This completes the construction of the graph~$H$. 
Observe that $H$ is a $2$-degenerate bipartite graph; we construct the graph $H$ with no odd-cycle. 
Moreover, removal of degree-one and degree-two vertices results in a forest. 
Therefore, the graph~$H$ is $2$-degenerate. 

Now, we define the initial configuration $S$ as follows:
$$S =  \bigcup_{u \in D}\{y_u^1y_u^2, y_u^2y_u^3\} \cup \bigcup_{c \in \Cmc}\{z_c^1z_c^2, z_c^2z_c^3, z_c^3z_c^4\}.$$ Finally, we set the budget as $b = 3k+2\ell + n-2$. 

We show that $G$ contains a rainbow matching of size at least $k$ if and only if $(H, S, b)$ is a yes-instance of the \mdis~problem.

\begin{figure}[h]
    \centering
    \begin{tikzpicture}
    \coordinate (o) at (0,0);
    \coordinate (u1) at (-2.5, 0);
    \coordinate (u2) at (-1.5, 0);
    \coordinate (u3) at (-0.5, 0);
    \coordinate (u4) at (0.5, 0);
    \coordinate (u5) at (1.5, 0);
    \coordinate (u6) at (2.5, 0);
    \draw[red] (u1) -- (u2);
    \draw[red, decoration = {zigzag,segment length = 2mm, amplitude = 1mm}, decorate] (u1) -- (u2);
    \draw[blue] (u2) -- (u3);
    \draw[red] (u3) -- (u4);
    \draw[blue] (u4) -- (u5);
    \draw[blue, decoration = {zigzag,segment length = 2mm, amplitude = 1mm}, decorate] (u4) -- (u5);
    \draw[green] (u5) -- (u6);
    \fill[black, draw=black] (u1) circle (0.05cm) node[below] {$u_1$};
    \fill[white, draw=black] (u2) circle (0.05cm) node[black, below] {$u_2$};
    \fill[black, draw=black] (u3) circle (0.05cm) node[below] {$u_3$};
    \fill[white, draw=black] (u4) circle (0.05cm) node[black, below] {$u_4$};
    \fill[black, draw=black] (u5) circle (0.05cm) node[below] {$u_5$};
    \fill[white, draw=black] (u6) circle (0.05cm) node[black, below] {$u_6$};
    \node[black] at ($(o)+(0, -1)$) {(a)};

    \coordinate (o) at (0,-4); 
    \coordinate (u1) at ($(o) + (-5.5, 0)$);
    \coordinate (u21) at ($(o) + (-4.5, 0)$);
    \coordinate (u22) at ($(o) + (-3, 0)$);
    \coordinate (u31) at ($(o) + (-2, 0)$);
    \coordinate (u32) at ($(o) + (-0.5, 0)$);
    \coordinate (u41) at ($(o) + (0.5, 0)$);
    \coordinate (u42) at ($(o) + (2, 0)$);
    \coordinate (u51) at ($(o) + (3, 0)$);
    \coordinate (u52) at ($(o) + (4.5, 0)$);
    \coordinate (u6) at ($(o) + (5.5, 0)$);
    \coordinate (yu21) at ($(u21) + (0.75,0.5)$);
    \coordinate (yu22) at ($(u21) + (1.5,1)$);
    \coordinate (yu23) at ($(u21) + (2.25,1.5)$);
    \coordinate (yu31) at ($(u31) + (0.75,0.5)$);
    \coordinate (yu32) at ($(u31) + (1.5,1)$);
    \coordinate (yu33) at ($(u31) + (2.25,1.5)$);
    \coordinate (yu41) at ($(u41) + (0.75,0.5)$);
    \coordinate (yu42) at ($(u41) + (1.5,1)$);
    \coordinate (yu43) at ($(u41) + (2.25,1.5)$);
    \coordinate (yu51) at ($(u51) + (0.75,0.5)$);
    \coordinate (yu52) at ($(u51) + (1.5,1)$);
    \coordinate (yu53) at ($(u51) + (2.25,1.5)$);
    \coordinate (z11) at ($(o) + (-1.5, -1.5)$);
    \coordinate (z12) at ($(z11) + (0, -0.75)$);
    \coordinate (z13) at ($(z12) + (0, -0.75)$);
    \coordinate (z14) at ($(z13) + (0, -0.75)$);
    \coordinate (z21) at ($(o) + (0, -1.5)$);
    \coordinate (z22) at ($(z21) + (0, -0.75)$);
    \coordinate (z23) at ($(z22) + (0, -0.75)$);
    \coordinate (z24) at ($(z23) + (0, -0.75)$);
    \coordinate (z31) at ($(o) + (1.5, -1.5)$);
    \coordinate (z32) at ($(z31) + (0, -0.75)$);
    \coordinate (z33) at ($(z32) + (0, -0.75)$);
    \coordinate (z34) at ($(z33) + (0, -0.75)$); 
    \coordinate (w1) at ($(z32) + (1, 0)$);
    \coordinate (w2) at ($(z33) + (1, 0)$);

    \draw[black, dotted] (u1) -- (u21);
    \draw[gray, decoration = {zigzag,segment length = 2mm, amplitude = 1mm}, decorate] (u1) -- (u21);
    \draw[black, dotted] (u22) -- (u31);
    \draw[black, dotted] (u32) -- (u41);
    \draw[black, dotted] (u42) -- (u51);
    \draw[gray, decoration = {zigzag,segment length = 2mm, amplitude = 1mm}, decorate] (u42) -- (u51);
    \draw[black, dotted] (u52) -- (u6);
    \draw[black, dotted] (u21) -- (yu21);
    \draw[black, dotted] (u22) -- (yu21);
    \draw[gray, decoration = {zigzag,segment length = 2mm, amplitude = 1mm}, decorate] (u22) -- (yu21);
    \draw[black] (yu21) -- (yu22);
    \draw[black] (yu22) -- (yu23);
    \draw[gray, decoration = {zigzag,segment length = 2mm, amplitude = 1mm}, decorate] (yu22) -- (yu23);
    \draw[black, dotted] (u31) -- (yu31);
    \draw[black, dotted] (u32) -- (yu31);
    \draw[gray, decoration = {zigzag,segment length = 2mm, amplitude = 1mm}, decorate] (u31) -- (yu31);
    \draw[black] (yu31) -- (yu32);
    \draw[black] (yu32) -- (yu33);
    \draw[gray, decoration = {zigzag,segment length = 2mm, amplitude = 1mm}, decorate] (yu32) -- (yu33);
    \draw[black, dotted] (u41) -- (yu41);
    \draw[black, dotted] (u42) -- (yu41);
    \draw[gray, decoration = {zigzag,segment length = 2mm, amplitude = 1mm}, decorate] (u41) -- (yu41);
    \draw[black] (yu41) -- (yu42);
    \draw[black] (yu42) -- (yu43);
    \draw[gray, decoration = {zigzag,segment length = 2mm, amplitude = 1mm}, decorate] (yu42) -- (yu43);
    \draw[black, dotted] (u51) -- (yu51);
    \draw[black, dotted] (u52) -- (yu51);
    \draw[gray, decoration = {zigzag,segment length = 2mm, amplitude = 1mm}, decorate] (u52) -- (yu51);
    \draw[black] (yu51) -- (yu52);
    \draw[black] (yu52) -- (yu53);
    \draw[gray, decoration = {zigzag,segment length = 2mm, amplitude = 1mm}, decorate] (yu52) -- (yu53);
    \draw[black] (z11) -- (z12);
    \draw[gray, decoration = {zigzag,segment length = 2mm, amplitude = 1mm}, decorate] (z11) -- (z12);
    \draw[black] (z12) -- (z13);
    \draw[black] (z13) -- (z14);
    \draw[gray, decoration = {zigzag,segment length = 2mm, amplitude = 1mm}, decorate] (z13) -- (z14);
    \draw[black] (z21) -- (z22);
    \draw[gray, decoration = {zigzag,segment length = 2mm, amplitude = 1mm}, decorate] (z21) -- (z22);
    \draw[black] (z22) -- (z23);
    \draw[black] (z23) -- (z24);
    \draw[gray, decoration = {zigzag,segment length = 2mm, amplitude = 1mm}, decorate] (z23) -- (z24);
    \draw[black] (z31) -- (z32);
    \draw[black] (z32) -- (z33);
    \draw[gray, decoration = {zigzag,segment length = 2mm, amplitude = 1mm}, decorate] (z32) -- (z33);
    \draw[black] (z33) -- (z34);
    \draw[black, dotted] (z11) -- (u1);
    \draw[black, dotted] (z11) -- (u32);
    \draw[black, dotted] (z21) -- (u31);
    \draw[black, dotted] (z21) -- (u51);
    \draw[black, dotted] (z31) -- (u52);
    \draw[black, dotted] (z11) -- (w1);
    \draw[black, dotted] (z21) -- (w1);
    \draw[black, dotted] (z31) -- (w1);
    \draw[gray, decoration = {zigzag,segment length = 2mm, amplitude = 1mm}, decorate] (z31) -- (w1);
    \draw[black, dotted] (z14) -- (w2);
    \draw[black, dotted] (z24) -- (w2);
    \draw[black, dotted] (z34) -- (w2);
    \draw[gray, decoration = {zigzag,segment length = 2mm, amplitude = 1mm}, decorate] (z34) -- (w2);

    \draw[black, fill = black] (u1) circle (0.05cm) node[left] {$x_{u_1}^{u_1u_2}$};
    \draw[black, fill = white] (u21) circle (0.05cm); 
    \draw[black, fill = white] (u22) circle (0.05cm); 
    \draw[black, fill = black] (u31) circle (0.05cm) node[above] {$x_{u_3}^{u_2u_3}$};
    \draw[black, fill = black] (u32) circle (0.05cm) node[above] {$x_{u_3}^{u_3u_4}$};
    \draw[black, fill = white] (u41) circle (0.05cm); 
    \draw[black, fill = white] (u42) circle (0.05cm);
    \draw[black, fill = black] (u51) circle (0.05cm); 
    \draw[black, fill = black] (u52) circle (0.05cm); 
    \draw[black, fill = white] (u6) circle (0.05cm) node[right] {$x_{u_6}^{u_5u_6}$};
    \draw[black, fill = black] (yu21) circle (0.05cm); 
    \draw[black, fill = white] (yu22) circle (0.05cm); 
    \draw[black, fill = black] (yu23) circle (0.05cm); 
    \draw[black, fill = white] (yu31) circle (0.05cm) node[above] {$y_{u_3}^1$};
    \draw[black, fill = black] (yu32) circle (0.05cm) node[above] {$y_{u_3}^2$};
    \draw[black, fill = white] (yu33) circle (0.05cm) node[above] {$y_{u_3}^3$};
    \draw[black, fill = black] (yu41) circle (0.05cm); 
    \draw[black, fill = white] (yu42) circle (0.05cm); 
    \draw[black, fill = black] (yu43) circle (0.05cm);
    \draw[black, fill = white] (yu51) circle (0.05cm); 
    \draw[black, fill = black] (yu52) circle (0.05cm); 
    \draw[black, fill = white] (yu53) circle (0.05cm); 
    \draw[black, fill = white] (z11) circle (0.05cm); 
    \draw[black, fill = black] (z12) circle (0.05cm); 
    \draw[black, fill = white] (z13) circle (0.05cm); 
    \draw[black, fill = black] (z14) circle (0.05cm); 
    \draw[black, fill = white] (z21) circle (0.05cm) node[above] {$z_{blue}^1$};
    \draw[black, fill = black] (z22) circle (0.05cm) node[left] {$z_{blue}^2$};
    \draw[black, fill = white] (z23) circle (0.05cm) node[left] {$z_{blue}^3$};
    \draw[black, fill = black] (z24) circle (0.05cm) node[below] {$z_{blue}^4$};
    \draw[black, fill = white] (z31) circle (0.05cm); 
    \draw[black, fill = black] (z32) circle (0.05cm); 
    \draw[black, fill = white] (z33) circle (0.05cm); 
    \draw[black, fill = black] (z34) circle (0.05cm);
    \draw[black, fill = black] (w1) circle (0.05cm) node[right] {$w_1$};
    \draw[black, fill = white] (w2) circle (0.05cm) node[right] {$w_1'$};
    \node[black] at ($(z24)+(0, -1)$) {(b)};

\end{tikzpicture}
    \caption{An illustration of the \NP-hardness reduction. The filled and unfilled vertices show that the graphs are bipartite. (a): A input path $G$. A feasible solution is marked in zigzag-edges. (b): The reduced graph $H$. The solid edges denote the initial configuration, and the rest of the graph edges are dotted. The gray zigzag edges denote a feasible solution.}
    \label{fig:enter-label}
\end{figure}
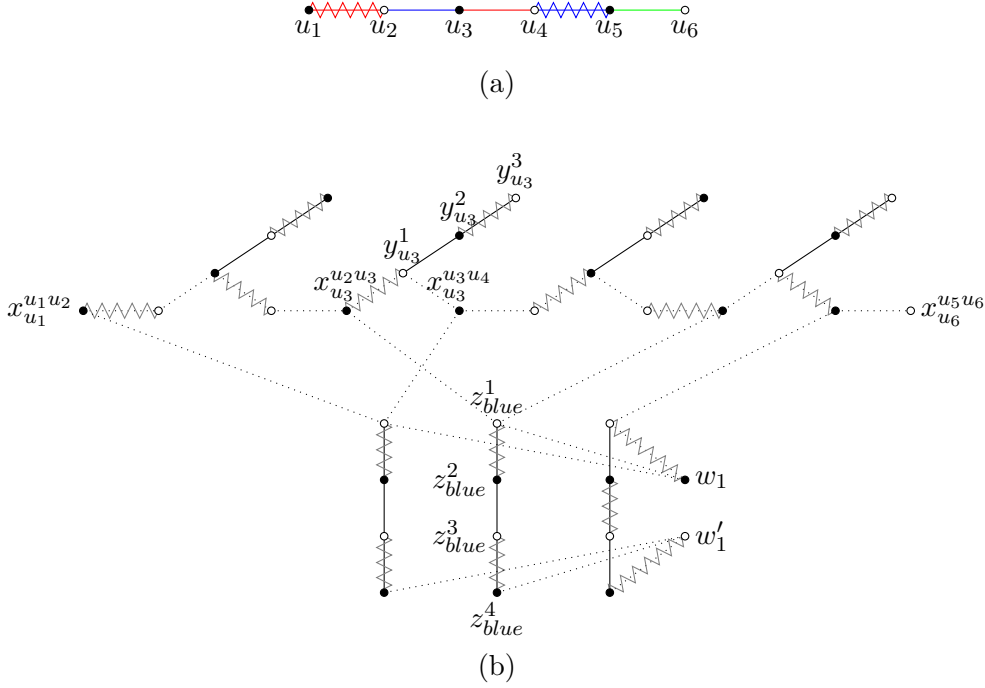

Assume that $(G, \phi, k)$ is a yes-instance of the \rmat~problem. 
That is, let $R \subseteq E(G)$ be a rainbow matching of size at least $k$. 
Without loss of generality, we assume that $R$ is of size exactly $k$. 
Otherwise, we remove some edges to make $R$ a rainbow matching of size exactly $k$.
Let $P = \{c \in \Cmc \mid \exists e \in R, \phi(e) = c\}$ be the set of matched colors. 
It is clear that $|P| = k$ since $R$ is a rainbow matching. 
We define a injective function $\pi: D \to V(H)$ such that for every degree-two vertex $u$, $\pi(u) = x_u^e$ for some edge $e$ incident on $u$ but not in $R$. 
Since $M$ is a matching, for every vertex $u \in D$, there always exists an unmatched incidental edge. 
Therefore, the function~$\pi$ is well defined. 
We define two bijections $\alpha: \{z_c^1 \mid c \in \Cmc\setminus P\} \to W$ and $\beta: \{z_c^4 \mid c \in \Cmc\setminus P\} \to W'$. 
Since $H[\{z_c^1 \mid c \in \Cmc\setminus P\} \to W]$ and $H[\{z_c^4 \mid c \in \Cmc\setminus P\} \to W']$ are complete bipartite graphs of the same order, we pick $\alpha$ and $\beta$ as some arbitrary bijective functions. 
We construct a matching $M$ in $H$ as follows:

\[M = \bigcup_{e=uv \in M} \{x_u^ex_v^e\} \cup \bigcup_{u \in D}\{\pi(u)y_u^1, y_u^2y_u^3\} \cup \bigcup_{c \in P} \{z_c^1z_c^2, z_c^3z_c^4\} \cup \bigcup_{c \in \Cmc\setminus P}\{\alpha(z_c^1)z_c^1, z_c^2z_c^3, z_c^4\beta(z_c^4)\}.\]

It is clear that $M$ is a matching since no two edges share a common incidental vertex. 
We claim that the matching $M$ can be obtained from $S$ in at most $b$ sliding steps. 
For each $u \in D$, $y_u^2y_u^3$ is a common edge in both matchings ($S$ and $M$), and we slide a token from $y_u^1y_u^2$ to $y_u^1\pi(u)$. 
This costs $n-2$ sliding steps. 
Next, for each color $c \in P$, we keep the tokens on $z_c^1z_c^2$ and $z_c^3z_c^4$, and we move a token from $z_c^2z_c^3$ to $x_u^ex_v^e$ for the edge $e = uv \in R$ with $\phi(e) = c$. 
Each such token movement costs three sliding steps, hence, we spent $3k$ sliding steps. 
Finally, for each color $c \in \Cmc \setminus P$, we keep the token on $z_c^2z_c^3$, and we move the tokens from $z_c^1z_c^2$ and $z_c^3z_c^4$ to $\alpha(z_c^1)z_c^1$ and $z_c^4\beta(z_c^4)$, respectively. 
For each such color, the token movement costs two sliding steps, hence, we spent $2(|\Cmc|-|P|) = 2\ell$ sliding steps. 
Therefore, $M$ can be obtained from $S$ in $3k+2\ell+n-2$ sliding steps. 
Thus, $(H, S, b)$ is a yes-instance of the \mdis~problem.

Now assume that $(H, S, b)$ is a yes-instance of the \mdis~problem. 
That is, let $M \subseteq E(H)$ be a matching of size $|S|$ that can be obtained from $S$ using at most $b$ sliding steps. 
We first describe some interesting properties of any feasible solution to the instance $(H, S, b)$ of the \mdis~problem. 
If the initial configuration contains a path of length two, then we must spend at least one token slide to fix the path. 
Similarly, if the initial configuration contains a path of length three, then we must spend at least two token slides to fix the path. 
If both end vertices of the path have unsaturated neighbors, then we can fix the matching in exactly two token slides. 
Otherwise, we need at least three token slides to fix the matching. 
Since $S$ contains $n-2$ paths of length two and $|\Cmc|$ paths of length three, a feasible solution must use at least $n-2+2|\Cmc|$ slides. 
Moreover, at most $\ell$ paths of length three can be fixed in exactly two steps since $W$ and $W'$ are of size $\ell$. 
Therefore, a feasible solution must use at least $n-2 + 3k + 2\ell = b$ token slides. 

Let $X = \bigcup_{e=uv \in E(G)}\{x_u^e, x_v^e\}$. 
Observe that $H[X]$ is a matching of size $n-1$. 
Let us define a bijection $\pi: E(G) \to E(H[X])$ such that $\pi$ maps each edge $e=uv \in E(G)$ to the edge $x_u^ex_v^e$. 
For each $u \in D$, the token on $y_u^1y_u^2$ can be slid to $y_u^1x_u^e$ for some edge $e$ incident on $u$ since these are the possible options with one token slide.
For any vertex $u \in D$ with incident edges $e=uv_1$ and $e' = uv_2$, the matching $M$ cannot contain both $x_u^ex_{v_1}^e$ and $x_u^{e'}x_{v_2}^{e'}$. 
In case both edges are in the matching $M$, then the token on $y_u^1y_u^2$ needs to slide for at least two steps which contradicts  feasibility. 
Therefore, $R = \bigcup_{e \in M\cap E(H[X])} \{\pi^{-1}(e)\}$ is a matching in $G$. 
Next, we discuss the rainbow property of $R$. 
We can fix at most $\ell$ paths of length three in exactly two steps. 
Let $P \subseteq \Cmc$ be the set of $k$ colors such that the paths of length three corresponding to the colors in $P$ are fixed in three steps. 
Since there paths do not have unmatched neighbors on both end vertices, the only option is to move the token on the middle edge to an unmatched edge with both end vertices being unmatched. 
By our construction of the graph $H$, a token on a path (the middle token) corresponding to a color $c$ can reach an unmatched edge $e$ in three steps if and only if $e \in E(H[X])$ and $\phi(\pi^{-1}(e)) = c$. 
Therefore, $R$ is a matching and is colorful. 
Thus, $(G, \phi, k)$ is a yes-instance of the \rmat~problem. 
\end{proof}

\tikzset{
xdashed/.style={to path={
\pgfextra{
 \draw[black, thick, dashed]
      (\tikztostart) -- (\tikztotarget);} (\tikztotarget) \tikztonodes}},
xsolid/.style={to path={
\pgfextra{
 \draw[black, thick]
      (\tikztostart) -- (\tikztotarget);} (\tikztotarget) \tikztonodes}},
xnothing/.style={to path={
\pgfextra{
 \draw[white, thick]
      (\tikztostart) -- (\tikztotarget);} (\tikztotarget) \tikztonodes}}

}

\subsubsection{\textsf{W[1]}-hardness for parameter \texorpdfstring{$b$}{b} on 3-degenerate graphs}

\begin{theorem}
    The \mdis~problem in the token sliding model is \textsf{W[1]}-hard when parameterized by $b$, even on the class of $3$-degenerate graphs. 
\end{theorem}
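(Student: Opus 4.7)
The plan is to give a parameterized reduction from \mcli, which is \W{1}-hard with respect to the clique size $\kappa$. Given an instance $(G, V_1, \ldots, V_\kappa)$ of \mcli, I would construct an instance $(H, S, b)$ of \mdis~in which $H$ is $3$-degenerate and $b = \Oof(\kappa^2)$. The guiding principle is that $S$ will be \emph{almost} a matching: only $\Oof(\kappa^2)$ displaced tokens must slide, and the destinations of their slides encode a multicolored clique in $G$.

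The graph $H$ would consist of a large \emph{backbone} of edges that already forms a matching in $S$ and never moves, plus two gadget families. For each color class $V_i$, a \emph{vertex-selection} gadget contains $\kappa-1$ displaced tokens whose single-step slides ``unsaturate'' the $\kappa-1$ copies $z_{v_i^\star}^1, \ldots, z_{v_i^\star}^{\kappa-1}$ associated with a chosen representative $v_i^\star \in V_i$; the copies of every other candidate in $V_i$ remain saturated by the backbone. For each unordered pair $\{i,j\}$, an \emph{edge-selection} gadget has one displaced token that, after $\Oof(1)$ slides, lands on an edge incident to both $z_u^j$ and $z_v^i$ for the encoded $G$-edge $uv$ with $u \in V_i$ and $v \in V_j$. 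Summing gives $\Oof(\kappa^2)$ slides in total, and this is exactly the budget $b$.

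Correctness hinges on the matching constraint: the $(i,j)$-edge-selection can place its matching edge incident to $z_u^j$ only if $z_u^j$ is currently unsaturated, which happens precisely when the $V_i$-selection gadget has chosen $u$. Hence in any feasible discovery all $\kappa-1$ pair gadgets touching color $i$ must agree on the same $v_i^\star$, and the $\binom{\kappa}{2}$ encoded $G$-edges therefore form a multicolored $\kappa$-clique. The reverse direction is immediate: a clique prescribes a discovery of exactly the stated length. Because $b$ is set tight, budget considerations forbid any wasted slides, so the feasible discoveries are in bijection with the cliques.

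The main obstacle is preserving $3$-degeneracy even though each candidate $v \in V_i$ is touched by $\kappa-1$ different pair gadgets. I would implement each copy $z_v^j$ as a vertex of degree at most three and route every selector's high-degree hub only through pendants or degree-two path interiors. A peeling order that first removes pendants, then path interiors, and finally hubs never encounters a vertex of degree more than three, certifying $3$-degeneracy. The remaining design detail --- how a vertex-selection gadget unsaturates exactly the $\kappa-1$ intended copies with exactly $\kappa-1$ slides --- would be handled by a length-$\Oof(\kappa)$ path with alternating initial tokens that supports a single global shift, one slide per copy.
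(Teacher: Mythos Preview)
Your plan has the right overall shape---a parameterized reduction from \mcli with vertex-selection and edge-selection gadgets and a budget bounded by a function of $\kappa$---but the piece you flag as a ``remaining design detail'' is in fact the crux of the whole reduction, and the mechanism you sketch does not obviously work. You need the $\kappa-1$ slides in the $V_i$-gadget to unsaturate the $\kappa-1$ copies of \emph{one and the same} vertex $v_i^\star$; yet nothing you describe prevents slide number $j$ from unsaturating $z_{u_j}^j$ for a different $u_j\in V_i$ for each $j$. If that happens, every edge-gadget $(i,j)$ is still satisfiable (it just uses whatever $u_j$ was freed), and no clique is encoded. The phrase ``a length-$O(\kappa)$ path with alternating initial tokens that supports a single global shift'' does not resolve this: either you need one such path per candidate $v\in V_i$, in which case you must explain what forces all $\kappa-1$ slides onto the same candidate's path with no spare budget, or you need a single shared path, in which case you must explain how that path is wired to all candidates' copies while staying $3$-degenerate and while making a shift correspond to a unique candidate. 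You also say nothing about why tokens from one gadget cannot cheaply migrate to another gadget and short-circuit the intended accounting.

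For comparison, the paper's construction enforces consistency the other way round. Each edge-block $H_{i,j}$ has strictly more tokens than it can host as a matching, so two tokens \emph{must} leave it; long chains of length $\ell=8\kappa^2$ make cross-block movement expensive enough that exactly two tokens leave each edge-block, one toward $H_i$ and one toward $H_j$. Consistency at color $i$ is then enforced inside $H_i$ by a single displaced token on $P_i$ that can only reach $q_i$ along an augmenting path through \emph{all} $\kappa-1$ T-gadgets of a single vertex $u\in V_i$, and that augmenting path exists only if each of those T-gadgets received its incoming token from the corresponding edge-block. The chains are what make the budget $\Theta(\kappa^4)$ rather than your claimed $O(\kappa^2)$; they are not decoration but the device that localizes the token count in each block and makes the tight-budget argument go through. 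Your proposal would need an analogous isolation mechanism, and until you supply one the soundness direction is unproven.
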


We present a parameterized reduction from the \mcc~problem, which is known to be \textsf{W[1]}-hard~\cite{downey1995fixed}. 
Recall that in the \mcc~problem, we are given a graph~$G$ and an integer $\kappa$, where $V(G)$ is partitioned into $\kappa$ independent sets \mbox{$V_1$, $V_2$, $\ldots$, $V_\kappa$}, and the goal is to find a multicolored  clique of size $\kappa$, i.e., a clique containing one vertex from each set~$V_i$, for $i \in [\kappa]$.

Let $(G, \kappa)$ be an instance of the \mcc~problem. 
The edge set $E(G)$ can be partitioned into $E = \{E_{i,j} \mid 1 \leq i < j \leq \kappa\}$ such that for each $1 \leq i < j \leq \kappa$, we have \mbox{$E_{i,j} = \{uv \in E \mid u\in V_i$}, $v \in V_j\}$. 
We construct an instance $(H, S \subseteq E(H), b)$ of the \mdis~problem.
The graph $H$ has an induced subgraph $H_i$ for each $i \in [\kappa]$ and an induced subgraph $H_{i,j}$ for each $1 \leq i < j \leq \kappa$. 
We refer to these induced subgraphs as {\em edge-blocks} and {\em vertex-blocks}, respectively.
Both vertex-blocks and edge-blocks are constructed using a fundamental structure called gadget $\Ttt$, and they are connected using a structure called chain $\Ctt$. \\

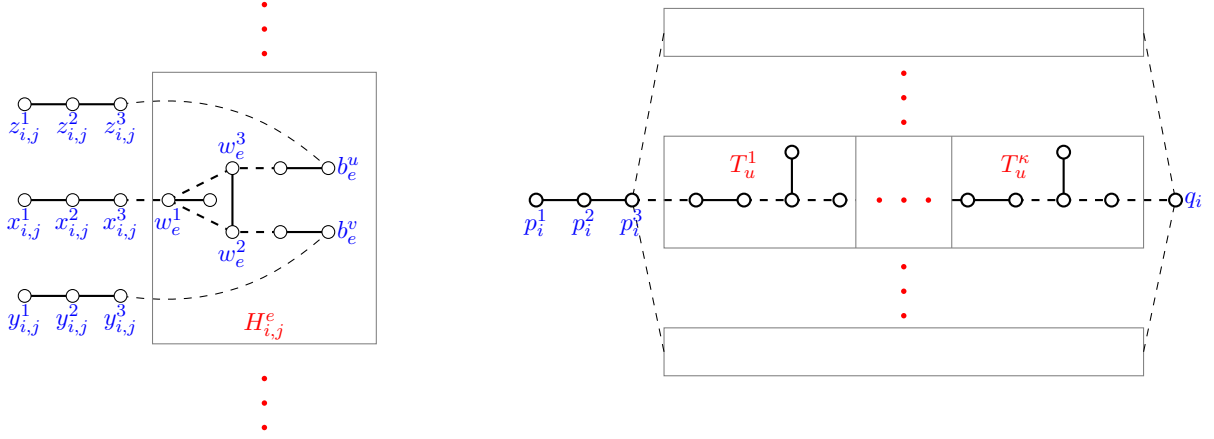
\begin{figure}[ht]
    \centering
    \begin{tikzpicture}[scale=0.85, every node/.style={transform shape}]
        \def\x{9.5}
        \def\y{4}
        \def\z{1}
        \draw[black, thick] (4,1) -- (4.75,1) -- (5.5,1);
        \draw[black, thick, dashed] (5.5,1) -- (6.25,1) -- (7.25, 1.5) -- (8, 1.5);
        \draw[black, thick] (7, 1) -- (6.25, 1);
        \draw[black, thick] (7.25, 1.5) -- (7.25, 0.5);
        \draw[black, thick, dashed] (6.25, 1) -- (7.25, 0.5) -- (8, 0.5);
        \draw[black, thick] (8,1.5) -- (8.75,1.5);
        \draw[black, thick] (8,0.5) -- (8.75,0.5);
        \draw[black, dashed] (5.5, 2.5) to[bend left=25] (8.75, 1.5);
        \draw[black, dashed] (5.5, -0.5) to[bend right=25] (8.75, 0.5);
        \draw[black, fill = white] (4, 1) circle (0.1cm) node[blue, below] {$x_{i,j}^1$};
        \draw[black, fill = white] (4.75, 1) circle (0.1cm) node[blue, below] {$x_{i,j}^2$};
        \draw[black, fill = white] (5.5, 1) circle (0.1cm) node[blue, below] {$x_{i,j}^3$};
        \draw[black, fill = white] (6.25, 1) circle (0.1cm) node[blue, below] {$w_e^1$};
        \draw[black, fill = white] (7.25, 1.5) circle (0.1cm) node[blue, above] {$w_e^3$};
        \draw[black, fill = white] (8, 1.5) circle (0.1cm);
        \draw[black, fill = white] (6.9, 1) circle (0.1cm);
        \draw[black, fill = white] (7.25, 0.5) circle (0.1cm) node[blue, below] {$w_e^2$};
        \draw[black, fill = white] (8, 0.5) circle (0.1cm);
        \draw[black, fill = white] (8.75, 1.5) circle (0.1cm) node[blue, right] {$b_e^u$};
        \draw[black, fill = white] (8.75, 0.5) circle (0.1cm) node[blue, right] {$b_e^v$};
        \draw[black, thick] (4,2.5) -- (4.75,2.5) -- (5.5,2.5);
        \draw[black, thick] (4,-0.5) -- (4.75,-0.5) -- (5.5,-0.5);
        \draw[black, fill = white] (4, 2.5) circle (0.1cm) node[blue, below] {$z_{i,j}^1$};
        \draw[black, fill = white] (4.75, 2.5) circle (0.1cm) node[blue, below] {$z_{i,j}^2$};
        \draw[black, fill = white] (5.5, 2.5) circle (0.1cm) node[blue, below] {$z_{i,j}^3$};
        \draw[black, fill = white] (4, -0.5) circle (0.1cm) node[blue, below] {$y_{i,j}^1$};
        \draw[black, fill = white] (4.75, -0.5) circle (0.1cm) node[blue, below] {$y_{i,j}^2$};
        \draw[black, fill = white] (5.5, -0.5) circle (0.1cm) node[blue, below] {$y_{i,j}^3$};
        \draw[gray, thin] (6,3) rectangle (9.5, -1.25) node[below=1.5, midway, red, thick] {$H_{i,j}^e$};
        \path (7.75,3.5) -- (7.75,4) node [red, font=\Huge, midway, sloped] {$\dots$};
        \path (7.75,-1.75) -- (7.75,-2.75) node [red, font=\Huge, midway, sloped] {$\dots$};

        \draw (12,\z) -- (12.75,\z) to[xsolid] (13.5,\z) to[xdashed] (14.5,\z) to[xsolid] (15.25, \z) -- (16,\z) to[xdashed] (17,\z);
        \draw (18.5, \z) to[xsolid] (19.5,\z) -- (20.25,\z) -- (21, \z) to[xdashed] (22,\z);
        \draw[black, thick] (16, \z) -- (16, \z+0.75);
        \draw[black, thick] (20.25, \z) -- (20.25, \z+0.75);
        \draw[gray, thin] (14, \z+1) rectangle (21.5, \z-0.75);
        \draw[gray, thin] (14, \z+3) rectangle (21.5, \z+2.25);
        \draw[gray, thin] (14, \z-2) rectangle (21.5, \z-2.75);
        \draw[gray, thin] (17,\z+1) -- (17, \z-0.75);
        \draw[gray, thin] (18.5,\z+1) -- (18.5, \z-0.75);
        \draw[black, dashed] (14,\z-2.375) -- (13.5,\z) -- (14,\z+2.625);
        \draw[black, dashed] (21.5,\z-2.375) -- (22,\z) -- (21.5,\z+2.625);
        \path (17.4,\z) -- (18.25,\z) node [red, font=\Huge, midway, sloped] {$\dots$};
        \path (17.75,\z+1.35) -- (17.75,\z+2) node [red, font=\Huge, midway, sloped] {$\dots$};
        \path (17.75,\z-1) -- (17.75,\z-2) node [red, font=\Huge, midway, sloped] {$\dots$};
        \draw[black, thick, fill=white] (12,\z) circle (0.1cm) node[blue, below] {$p_i^1$};
        \draw[black, thick, fill=white] (12.75,\z) circle (0.1cm) node[blue, below] {$p_i^2$};
        \draw[black, thick, fill=white] (13.5,\z) circle (0.1cm) node[blue, below] {$p_i^3$};
        \draw[black, thick, fill=white] (14.5,\z) circle (0.1cm);
        \draw[black, thick, fill=white] (15.25,\z) circle (0.1cm) node[red, above=0.2cm] {$T_{u}^1$};
        \draw[black, thick, fill=white] (16,\z) circle (0.1cm);
        \draw[black, thick, fill=white] (16.75,\z) circle (0.1cm);
        \draw[black, thick, fill=white] (18.75,\z) circle (0.1cm);
        \draw[black, thick, fill=white] (19.5,\z) circle (0.1cm) node[red, above=0.2cm] {$T_{u}^{\kappa}$};
        \draw[black, thick, fill=white] (20.25,\z) circle (0.1cm);
        \draw[black, thick, fill=white] (21,\z) circle (0.1cm);
        \draw[black, thick, fill=white] (22,\z) circle (0.1cm) node[blue, right] {$q_i$};
        \draw[black, thick, fill=white] (20.25,\z+0.75) circle (0.1cm);
        \draw[black, thick, fill=white] (16,\z+0.75) circle (0.1cm);
        
    \end{tikzpicture}
    \caption{An illustration of an edge-block (left) and a vertex-block (right)}
    \label{fig:mdis:whardblocks}
\end{figure}

\noindent{\bf $\Ttt$-gadget.} A gadget $\Ttt$ consists of a path of four vertices $c_1, c_2, c_3, c_4$ and a neighbor $c_5$ to $c_3$. We define $S(\Ttt) = \{c_1c_2, c_3c_5\}$ as the initial configuration for the gadget $\Ttt$.\\

\noindent{\bf Edge-block.} For each $1 \leq i < j \leq \kappa$, we construct the edge-block $H_{i,j}$ as follows. 
First, in $H_{i,j}$, we add three paths each having three vertices $X_{i,j} = \{x_{i,j}^1, x_{i,j}^2, x_{i,j}^3\}, Y_{i,j} = \{y_{i,j}^1, y_{i,j}^2, y_{i,j}^3\}$, and $Z_{i,j} = \{z_{i,j}^1, z_{i,j}^2, z_{i,j}^3\}$. 
Next, for each edge $e=uv \in E_{i,j}$, we construct a subgraph $H_{i,j}^e$ that will be added to $H_{i,j}$.
In $H_{i,j}^e$, we begin by adding a cycle on three vertices $w_e^1, w_e^2, w_e^3$ and a neighbor~$w_e^4$ to $w_e^1$. 
Then, for $e = uv$, we add four vertices $a_e^u$, $b_e^u$, $a_e^v$, and $b_e^v$, and edges $a_e^ub_e^u$ and $a_e^vb_e^v$ in~$H_{i,j}^e$. 
Further, we connect $a_e^u$ and $w_e^3$ by an edge and connect $a_e^v$ and $w_e^2$ by an edge.  
Finally, we add the following edges to connect $H_{i,j}^e$ with the rest of $H_{i,j}$: $ b_e^vy_{i,j}^3, b_e^uz_{i,j}^3$, and $w_e^1x_{i,j}^3$. 
We define $S(H_{i,j}) = E(X_{i,j}) \cup E(Y_{i,j}) \cup E(Z_{i,j}) \cup \{\{a_e^ub_e^u,a_e^vb_e^v\} \cup \{w_e^1w_e^4, w_e^2w_e^3\} \mid e=uv \in E_{i,j}\}$ as the initial configuration for $H_{i,j}$.
An illustration of an edge-block is given in Figure~\ref{fig:mdis:whardblocks} (left).\\

\noindent {\bf Vertex-block.} For each $i \in [\kappa]$, we construct the vertex-block $H_i$ as follows. 
We add a path of three vertices $P_i = \{p_i^1, p_i^2, p_i^3\}$ and a vertex $q_i$. 
For each vertex $u \in V_i$, and for each $j \in [\kappa] \setminus \{i\}$, we add a $\Ttt$-gadget $\Ttt_u^j$. 
For each $j \in [\kappa] \setminus \{i\}$ with $j < \kappa$, we add an edge $c_4(\Ttt_u^j)c_1(\Ttt_u^{j+1})$ (or $c_4(\Ttt_u^j)c_1(\Ttt_u^{j+2})$ if $i = j+1$). 
Finally, we add two edges $p_i^3c_1(\Ttt_u^1)$ (or $p_i^3c_1(\Ttt_u^2)$ if $i =1$) and $c_4(\Ttt_u^{\kappa})q_i$ (or $c_4(\Ttt_u^{\kappa-1})q_i$
if $i = \kappa$). 
We define $S(H_{i}) = \{E(P_i) \cup \{S(\Ttt_u^j) \mid u \in V_{i} \wedge j \in [\kappa] \setminus \{i\}\}\}$ as the initial configuration for~$H_{i}$.
An illustration of a vertex-block is given in Figure~\ref{fig:mdis:whardblocks} (right).\\

\noindent{\bf Chain $\Ctt$.} We connect the edge-blocks and vertex-blocks using a structure called a {\em chain}. 
For an edge $e \in E(G)$ and a vertex $u$ incident with $e$, Let $\Ctt_e^u$ be a chain corresponding to the pair $e$ and~$u$. 
Let $\ell = 8\kappa^2$. 
In $\Ctt_e^u$, we add a path of $\ell+1$ vertices $R = \{r_i \mid i \in [\ell+1]\}$. 
For each $i \in [\ell]$, we add a neighbor $t_i$ to $r_i$. 
We call $r_1$ as $d_e^u$ and $r_{\ell+1}$ as $f_e^u$. 
Finally, we define $S(\Ctt_e^u) = \{r_it_i \mid i \in [\ell]\}$.\\

\noindent{\bf Graph $H$.} We have $\kctwo$ edge-blocks and $\kappa$ vertex-blocks in $H$. 
For each $1 \leq i < j \leq \kappa$, for each edge $e\in E_{i,j}$, and for each vertex $u$ incident with $e$ with $u \in V_i$ (assume the other vertex incident with~$e$ is $v \in V_j$), we add a chain $\Ctt_e^u$.
We add the following edges to connect the chain with blocks: $b_e^ud_e^u$ and $f_e^u c_5(\Ttt_u^j)$. 
This completes the construction of the graph. \\

\noindent{\bf Initial configuration and budget.} We define the initial configuration $S$ as follows:

\begin{eqnarray*}
S &=& \bigcup_{1 \leq i < j \leq \kappa} S(H_{i.j}) \cup \bigcup_{i \in [\kappa]} S(H_i) \cup \bigcup_{e=uv \in E(G)} \big(S(\Ctt_e^u) \cup S(\Ctt_e^v)\big)
\end{eqnarray*}
and we set budget $b = (2\ell+12)\kctwo + 2\kappa^2-\kappa$. \\

\begin{lemma}
\label{lem:mdis:whard:forward}
If $(G,\kappa)$ is a yes-instance of the \mcc~problem, then $(H, S, b)$ is a yes-instance of the \mdis~problem in the sliding model.   
\end{lemma}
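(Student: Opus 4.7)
The plan is to turn a multicolored clique $\{u_1,\ldots,u_\kappa\}$ with $u_i\in V_i$ into an explicit sliding schedule that executes exactly $b=(2\ell+12)\kctwo+2\kappa^2-\kappa$ slides. For each pair $i<j$ set $e_{i,j}:=u_iu_j$. The $\kctwo$ selected edges pick out one subgraph $H_{i,j}^{e_{i,j}}$ per edge-block, the two chains $\Ctt_{e_{i,j}}^{u_i},\Ctt_{e_{i,j}}^{u_j}$ attached to $e_{i,j}$, and the two gadgets $\Ttt_{u_i}^j$ and $\Ttt_{u_j}^i$ reached by those chains; these will be \emph{activated}. All other copies $H_{i,j}^{e'}$ (with $e'\neq e_{i,j}$) and all gadgets $\Ttt_u^\star$ of non-selected vertices $u\in V_i\setminus\{u_i\}$ keep their initial token set, which already forms a matching on those regions and will never be touched by our schedule.

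The schedule decomposes into three phases whose slide-costs sum to $b$. In \emph{Phase 1} (edge-blocks), for every pair $i<j$ I would specify a constant-size local schedule on $H_{i,j}^{e_{i,j}}\cup X_{i,j}\cup Y_{i,j}\cup Z_{i,j}$ that slides $a_e^\cdot b_e^\cdot$ onto the chain attachment edges $b_e^\cdot d_e^\cdot$ (thereby injecting one token into each of the two chains), reroutes $w_e^1w_e^4$ and $w_e^2w_e^3$ onto the internal edges $a_e^uw_e^3,\,a_e^vw_e^2$, and pushes the second token of each path $X_{i,j},Y_{i,j},Z_{i,j}$ onto the now-freed attachment edges $w_e^1x_{i,j}^3,\,b_e^vy_{i,j}^3,\,b_e^uz_{i,j}^3$; ordering the liberating moves in $H_{i,j}^{e_{i,j}}$ before the $X/Y/Z$ moves keeps the local cost at most $12$ slides per pair, contributing $12\kctwo$ in total. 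In \emph{Phase 2} (chains), each activated chain $\Ctt_e^u$ of length $\ell=8\kappa^2$ is propagated by an $\ell$-slide cascade: in the order $i=\ell,\ell-1,\ldots,1$ slide $r_it_i$ onto $r_ir_{i+1}$, each target being free right before its slide (the slide at $i=\ell$ uses that $r_{\ell+1}c_5(\Ttt_u^j)$ has been occupied in Phase~3, while the slide at $i=1$ uses that $r_2$ was just vacated by the slide at $i=2$). Two chains per pair give $2\ell\kctwo$ slides. In \emph{Phase 3} (vertex-blocks), for each $H_i$ I would realise along the activated gadget chain of $u_i$ the long-path matching consisting of $p_i^1p_i^2$, $p_i^3c_1(\Ttt_{u_i}^{j_1})$, the edges $c_2(\Ttt_{u_i}^{j_r})c_3(\Ttt_{u_i}^{j_r})$ for $r=1,\ldots,\kappa-1$, the connectors $c_4(\Ttt_{u_i}^{j_r})c_1(\Ttt_{u_i}^{j_{r+1}})$ for $r=1,\ldots,\kappa-2$, and $c_4(\Ttt_{u_i}^{j_{\kappa-1}})q_i$, where $j_1<\cdots<j_{\kappa-1}$ enumerates $[\kappa]\setminus\{i\}$. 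Starting from the initial tokens on the activated subgraph, this target is reached by one slide $p_i^2p_i^3\to p_i^3c_1(\Ttt_{u_i}^{j_1})$ together with exactly two slides per activated gadget (namely $c_1c_2\to c_2c_3$ and $c_3c_5\to c_5f_{e_{i,j_r}}^{u_i}$, the latter being the slide that feeds the corresponding chain cascade), for a cost of $2(\kappa-1)+1=2\kappa-1$ per vertex-block and $\kappa(2\kappa-1)=2\kappa^2-\kappa$ overall.

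Summing the three phases yields $12\kctwo+2\ell\kctwo+(2\kappa^2-\kappa)=b$, matching the budget exactly; edge-disjointness of the three phase regions (apart from the boundary edges $b_e^\cdot d_e^\cdot$ and $r_{\ell+1}c_5(\Ttt_\cdot^\cdot)$, whose slides are synchronised across phases) ensures that every intermediate configuration is admissible, and the final configuration is a matching on $H$ piecewise and hence globally by construction. The main obstacle will be sequencing the $\ell$-slide chain cascade with the boundary slides of Phases~1 and~3: both endpoints of each chain must be ``opened'' in exactly the right order for the cascade to execute in $\ell$ (and not more) slides, which fixes the relative order of the three phases' boundary slides and requires a careful case analysis at each chain's two endpoints.
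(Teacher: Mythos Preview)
Your schedule does not terminate in a matching; the Phase~2 cascade is the fatal step. After sliding $r_it_i\to r_ir_{i+1}$ for $i=\ell,\ell-1,\ldots,1$, the chain carries tokens on the \emph{consecutive} backbone edges $r_1r_2,r_2r_3,\ldots,r_\ell r_{\ell+1}$, and every pair $r_ir_{i+1},\,r_{i+1}r_{i+2}$ shares the vertex $r_{i+1}$. This already violates the matching condition on every activated chain. The same clash occurs at both chain endpoints: Phase~1 leaves a token on $b_e^u d_e^u=b_e^u r_1$ while you also place one on $z_{i,j}^3 b_e^u$ (sharing $b_e^u$) and Phase~2 adds one on $r_1r_2$ (sharing $r_1$); at the far end the Phase~3 token on $c_5 f_e^{u_i}=c_5 r_{\ell+1}$ meets the Phase~2 token on $r_\ell r_{\ell+1}$ at $r_{\ell+1}$. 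Your parenthetical justifications (``$r_2$ was just vacated by the slide at $i=2$'') also do not parse: $r_2t_2\to r_2r_3$ does not vacate $r_2$, and in any case the sliding constraint concerns unoccupied \emph{edges}, not vertices. Finally, the Phase~3 slides you list ($p_i^2p_i^3\to p_i^3c_1$, $c_1c_2\to c_2c_3$, $c_3c_5\to c_5f$) never place a token on the connectors $c_4(\Ttt_{u_i}^{j_r})c_1(\Ttt_{u_i}^{j_{r+1}})$ or on $c_4(\Ttt_{u_i}^{j_{\kappa-1}})q_i$, so your declared target matching is not reached.

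The intended role of the chain is precisely the opposite of a cascade: the pendant edges $r_it_i$ already form a matching and are meant to \emph{stay put}. The paper's schedule moves a \emph{single} token from the edge-block all the way along the backbone to the vertex-block, namely $a_{e}^{u}b_{e}^{u}\to b_{e}^{u}r_1\to r_1r_2\to\cdots\to r_\ell r_{\ell+1}\to r_{\ell+1}c_5(\Ttt_{u}^j)$, at cost $\ell+2$ per chain. Together with the local moves inside the edge-block (slide $x_{i,j}^2x_{i,j}^3$ three steps to $w_e^3a_e^{u_i}$, slide $w_e^2w_e^3$ one step to $w_e^2a_e^{u_j}$, keep $w_e^1w_e^4$ in place, and slide $y_{i,j}^2y_{i,j}^3$, $z_{i,j}^2z_{i,j}^3$ one step each onto $y_{i,j}^3b_e^{u_j}$, $z_{i,j}^3b_e^{u_i}$) and the one-step slides $c_3c_5\to c_3c_4$ at the receiving gadgets, this gives $2\ell+12$ slides per pair $i<j$. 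After these moves, each gadget row of $u_i$ carries tokens on $c_1c_2$ and $c_3c_4$, creating a genuine alternating (augmenting) path $p_i^2\,p_i^3\,c_1\,c_2\,c_3\,c_4\,c_1\,\cdots\,c_4\,q_i$ of length $4(\kappa-1)+1$, along which $2\kappa-1$ one-step slides complete the matching on $H_i$. This is the argument you should reproduce; your three-phase accounting reaches the correct total $b$ only by coincidence, since the cascade uses $\ell$ slides in place of the $\ell+2$ slides of a single-token traversal while overcounting Phase~1 (your local schedule actually costs about $8$, not $12$), and in any case the resulting configuration is not a matching.
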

\begin{proof}
Let $K = \{u_1,\ldots,u_{\kappa}\} \subseteq V(G)$ be a $\kappa$-clique in $G$ such that for each $i \in [\kappa]$, $u_i \in V_i$. 
For $1 \leq i < j \leq \kappa$, let $e_{i,j} = u_iu_j$.
Without loss of generality, we assume that the path $Z_{i,j}$ is connected with the vertex $b_{e_{i,j}}^{u_i}$ and the path $Y_{i,j}$ is connected with the vertex $b_{e_{i,j}}^{u_j}$. 
We do the following sequence of token slides:
\begin{enumerate}
    \item slide the token on $x_{i,j}^2x_{i,j}^3$ to $w_{e}^3a_{e_{i,j}}^{u_i}$ in three steps,
    \item slide the token on $a_{e_{i,j}}^{u_i}b_{e_{i,j}}^{u_i}$ to $f_{e_{i,j}}^{u_i}c_5(\Ttt_{u_i}^j)$ via $\Ctt_{e_{i,j}}^{u_i}$ in $\ell+2$ steps,
    \item slide the token on $c_5(\Ttt_{u_i}^j)c_3(\Ttt_{u_i}^j)$ to $c_3(\Ttt_{u_i}^j)c_4(\Ttt_{u_i}^j)$ in one step,
    \item slide the token on $w_{e}^2w_{e}^3$ to $w_{e}^2a_{e_{i,j}}^{u_j}$ in one step,
    \item slide the token on $a_{e_{i,j}}^{u_j}b_{e_{i,j}}^{u_j}$ to $f_{e_{i,j}}^{u_j}c_5(\Ttt_{u_j}^i)$ via $\Ctt_{e_{i,j}}^{u_j}$ in $\ell+2$ steps,
    \item slide the token on $c_5(\Ttt_{u_j}^i)c_3(\Ttt_{u_j}^i)$ to $c_3(\Ttt_{u_j}^i)c_4(\Ttt_{u_j}^i)$ in one step,
    \item slide the token on $z_{i,j}^2z_{i,j}^3$ to $z_{i,j}^3b_{e_{i,j}}^{u_i}$ in one step, and
    \item slide the token on $y_{i,j}^2y_{i,j}^3$ to $y_{i,j}^3b_{e_{i,j}}^{u_j}$ in one step.
\end{enumerate}

\begin{figure}[ht]
    \centering
    \begin{tikzpicture}[scale=0.85, every node/.style={transform shape}]
        \def\x{9.5}
        \def\y{4}
        \def\z{2.5}
        \def\w{-1}

        \draw[black, thick] (4,1) -- (4.75,1) -- (5.5,1);
        \draw[black, thick, dashed] (5.5,1) -- (6.25,1) -- (7.25, 1.5) -- (8, 1.5);
        \draw[black, thick] (7, 1) -- (6.25, 1);
        \draw[black, thick] (7.25, 1.5) -- (7.25, 0.5);
        \draw[black, thick, dashed] (6.25, 1) -- (7.25, 0.5) -- (8, 0.5);
        \draw[black, thick] (8,1.5) -- (8.75,1.5);
        \draw[black, thick] (8,0.5) -- (8.75,0.5);
        \draw[black, thick, dashed] (8.75, 1.5) -- (9.5, 1.5);
        \draw[black, thick, decorate,decoration={zigzag,segment length = 2mm, amplitude = 1mm}] (9.5, 1.5) -- (10.75, 1.5);
        \draw[black, thick, dashed] (8.75, 0.5) -- (9.5, 0.5);
        \draw[black, thick, decorate,decoration={zigzag,segment length = 2mm, amplitude = 1mm}] (9.5, 0.5) -- (10.75, 0.5);
        \draw[black, dashed] (5.5, 2.5) to[bend left=25] (8.75, 1.5);
        \draw[black, dashed] (5.5, -0.5) to[bend right=25] (8.75, 0.5);
        \draw[black, thick, dashed] (10.75, 1.5) to[bend left=60] (18.125,\z+0.75);
        \draw[black, thick, dashed] (10.75, 0.5) to[bend left= 10] (18.125,\w+0.75);
        \draw[black, fill = white] (4, 1) circle (0.1cm); 
        \draw[black, fill = white] (4.75, 1) circle (0.1cm);
        \draw[black, fill = white] (5.5, 1) circle (0.1cm);
        \draw[black, fill = white] (6.25, 1) circle (0.1cm);
        \draw[black, fill = white] (7.25, 1.5) circle (0.1cm); 
        \draw[black, fill = white] (8, 1.5) circle (0.1cm);
        \draw[black, fill = white] (6.9, 1) circle (0.1cm);
        \draw[black, fill = white] (7.25, 0.5) circle (0.1cm);
        \draw[black, fill = white] (8, 0.5) circle (0.1cm);
        \draw[black, fill = white] (8.75, 1.5) circle (0.1cm);
        \draw[black, fill = white] (8.75, 0.5) circle (0.1cm);
        \draw[black, fill = white] (9.5, 1.5) circle (0.1cm);
        \draw[black, fill = white] (9.5, 0.5) circle (0.1cm);
        \draw[black, fill = white] (10.75, 1.5) circle (0.1cm);
        \draw[black, fill = white] (10.75, 0.5) circle (0.1cm);
        \draw[black, thick] (4,2.5) -- (4.75,2.5) -- (5.5,2.5);
        \draw[black, thick] (4,-0.5) -- (4.75,-0.5) -- (5.5,-0.5);
        \draw[black, fill = white] (4, 2.5) circle (0.1cm);
        \draw[black, fill = white] (4.75, 2.5) circle (0.1cm); 
        \draw[black, fill = white] (5.5, 2.5) circle (0.1cm); 
        \draw[black, fill = white] (4, -0.5) circle (0.1cm); 
        \draw[black, fill = white] (4.75, -0.5) circle (0.1cm); 
        \draw[black, fill = white] (5.5, -0.5) circle (0.1cm); 
        \draw[red, thick, ->] (5.125, 1.125) -- (5.125, 2) -- (7.675, 2) -- (7.675, 1.675) node[green!50!black, midway, left=1.25cm] {$1$};
        \draw[red, thick, ->] (8.375, 1.375) -- (8.375, 1.125) -- (11.25, 1.125) -- (11.25, 2.25) node[green!50!black, midway, right] {$2$};
        \draw[red, thick, ->] (18.25, \z+0.5) -- (18.65, \z+0.5) -- (18.65, \z+0.1) node[green!50!black, midway, right] {$3$};
        \draw[red, thick, ->] (7.365, 1) -- (7.75, 1) -- (7.75, 0.675) node[green!50!black, midway, right] {$4$};
        \draw[red, thick, ->] (8.375, 0.675) -- (8.375, 0.9) -- (11.25, 0.9) -- (11.25, 0.675) node[green!50!black, midway, right] {$5$};
        \draw[red, thick, ->] (18.25, \w+0.5) -- (18.65, \w+0.5) -- (18.65, \w+0.1) node[green!50!black, midway, right] {$6$};
        \draw[red, thick, ->] (5.125, 2.625) -- (5.125, 3) -- (6.675, 3) -- (6.675, 2.625) node[green!50!black, midway, right] {$7$};
        \draw[red, thick, ->] (5.125, -0.375) -- (5.125, 0) -- (6.675, 0) -- (6.675, -0.375) node[green!50!black, midway, right] {$8$};

        \draw (12,\z) -- (12.75,\z) to[xsolid] (13.5,\z) to[xdashed] (14,\z); 
        \draw (16.625, \z) to[xsolid] (17.375, \z) -- (18.125,\z) to[xdashed] (18.875,\z);
        \draw  (21.5,\z) to[xdashed] (22,\z);
        \draw[black, thick] (18.125,\z) -- (18.125,\z+0.75);
        \draw[gray, thin] (14, \z+1) rectangle (21.5, \z-0.75);
        \draw[gray, thin] (16.125,\z+1) -- (16.125, \z-0.75);
        \draw[gray, thin] (19.375,\z+1) -- (19.375, \z-0.75);
        \node [red, font=\Huge] at (15.0625,\z) {$\dots$};
        \node [red, font=\Huge] at (20.4375,\z) {$\dots$};
        \draw[black, thick, fill=white] (12,\z) circle (0.1cm) node[blue, below] {$p_i^1$};
        \draw[black, thick, fill=white] (12.75,\z) circle (0.1cm) node[blue, below] {$p_i^2$};
        \draw[black, thick, fill=white] (13.5,\z) circle (0.1cm) node[blue, below] {$p_i^3$};
        \draw[black, thick, fill=white] (16.625, \z) circle (0.1cm);
        \draw[black, thick, fill=white] (17.375, \z) circle (0.1cm) node[blue, above=0.2cm] {$\Ttt_{u}^j$};
        \draw[black, thick, fill=white] (18.125,\z) circle (0.1cm);
        \draw[black, thick, fill=white] (18.875,\z) circle (0.1cm);
        \draw[black, thick, fill=white] (18.125,\z+0.75) circle (0.1cm);
        \draw[black, thick, fill=white] (22,\z) circle (0.1cm) node[blue, right] {$q_i$};
        
        \draw (12,\w) -- (12.75,\w) to[xsolid] (13.5,\w) to[xdashed] (14,\w); 
        \draw (16.625, \w) to[xsolid] (17.375, \w) -- (18.125,\w) to[xdashed] (18.875,\w);
        \draw  (21.5,\w) to[xdashed] (22,\w);
        \draw[black, thick] (18.125,\w) -- (18.125,\w+0.75);
        \draw[gray, thin] (14, \w+1) rectangle (21.5, \w-0.75);
        \draw[gray, thin] (16.125,\w+1) -- (16.125, \w-0.75);
        \draw[gray, thin] (19.375,\w+1) -- (19.375, \w-0.75);
        \node [red, font=\Huge] at (15.0625,\w) {$\dots$};
        \node [red, font=\Huge] at (20.4375,\w) {$\dots$};
        \draw[black, thick, fill=white] (12,\w) circle (0.1cm) node[blue, below] {$p_j^1$};
        \draw[black, thick, fill=white] (12.75,\w) circle (0.1cm) node[blue, below] {$p_j^2$};
        \draw[black, thick, fill=white] (13.5,\w) circle (0.1cm) node[blue, below] {$p_j^3$};
        \draw[black, thick, fill=white] (16.625, \w) circle (0.1cm);
        \draw[black, thick, fill=white] (17.375, \w) circle (0.1cm) node[blue, above=0.2cm] {$\Ttt_{v}^i$};
        \draw[black, thick, fill=white] (18.125,\w) circle (0.1cm);
        \draw[black, thick, fill=white] (18.875,\w) circle (0.1cm);
        \draw[black, thick, fill=white] (18.125,\w+0.75) circle (0.1cm);
        \draw[black, thick, fill=white] (22,\w) circle (0.1cm) node[blue, right] {$q_j$};
    \end{tikzpicture}
    \caption{An illustration of token movements in a feasible solution. The zigzag-lines denote the chains. }
    \label{fig:mdis:whardtokenmove}
\end{figure}
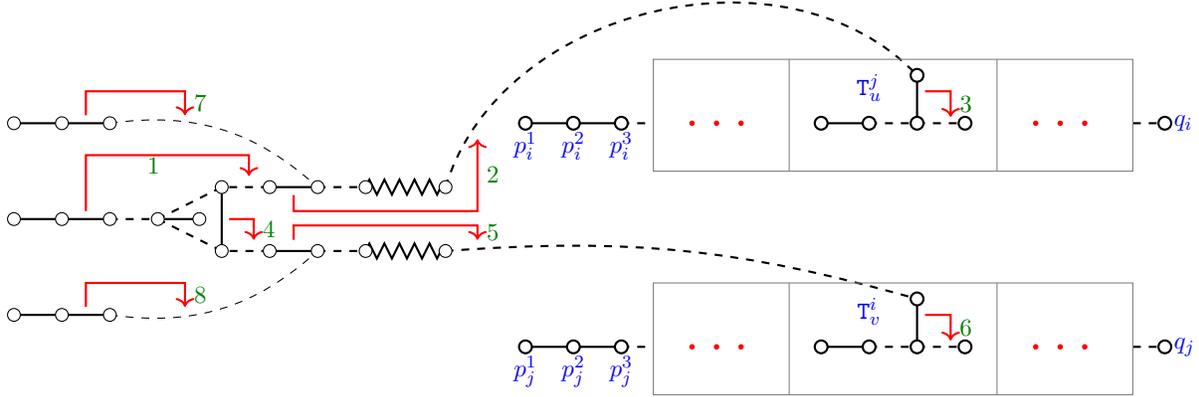
The above steps are illustrated in Figure~\ref{fig:mdis:whardtokenmove}. 
At the end of the above procedure, we spent $(2\ell+12)\kctwo$ token slides and we still have to move the tokens on the vertex-blocks. 
For each $i \in [\kappa]$, observe that the token movements due to the edge-blocks will create a $4(\kappa-1) + 1$-length augmenting path  between $p_i^3$ and $q_i$ and going through the gadgets of $u_i$. 
Therefore, we can slide every alternate edge to move the token from $p_i^2p_i^3$ to $c_4(T_{u_i}^{\kappa})q_i$ (or $c_4(T_{u_i}^{\kappa-1})q_i$ if $i = \kappa$).
Therefore, we can reach a matching starting from $S$ after at most $(2\ell+12)\kctwo + \kappa(2(\kappa-1)+1) = (2\ell+12)\kctwo + \kappa(2\kappa-1) = (2\ell+12)\kctwo + 2\kappa^2-\kappa = b$ token slides. 
\end{proof}

Let $M \subseteq E(H)$ be a feasible solution to the \mdis~instance $(H, S, b)$. We state some useful properties of $M$ that help us prove  Lemma~\ref{lem:mdis;whard:reverse}. 

\pagebreak
\begin{claim}
For every chain $\Ctt$ in $H$, $S(\Ctt) \subseteq M$.
    \label{clm:mdis:whard:chains}
\end{claim}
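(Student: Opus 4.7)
The plan is to argue by contradiction using a tight budget analysis. Suppose some chain $\Ctt = \Ctt_e^u$ contains a pendant edge $r_i t_i \notin M$. Since $t_i$ has degree one in $H$ with unique neighbor $r_i$, vertex $t_i$ must be unmatched in $M$, and the token initially on $r_i t_i$ must have been slid away. Its first slide can only take it to one of the two main-path edges $r_{i-1}r_i$ or $r_i r_{i+1}$; from there, any further slide keeps it in the chain or takes it (eventually) out through $b_e^u r_1$ or $r_{\ell+1} c_5(\Ttt_u^j)$. The goal is to show that in every case the total slide count strictly exceeds the budget $b = (2\ell + 12)\kctwo + 2\kappa^2 - \kappa$.

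The core of the argument is a matching lower bound on the slide count, by comparing to the forward direction of \Cref{lem:mdis:whard:forward}, which already saturates the budget exactly. Each edge-block $H_{i,j}$ contributes exactly $2\ell + 12$ slides — namely $2(\ell + 2)$ for the two forced chain traversals of step (2) and step (5), plus the $8$ internal adjustments of steps (1), (3), (4), (6), (7), (8) — while each vertex-block $H_i$ contributes exactly $2\kappa - 1$ slides to realize its augmenting path. I would argue that each of these three quantities is a \emph{lower} bound on the slides required by the corresponding component (for any feasible matching), using the structural rigidity of the gadgets: each chain traversal to relocate the token starting on $a_e^ub_e^u$ costs at least $\ell + 2$; each $\Ttt$-gadget requires at least its tally of internal fixes; and each augmenting path in a vertex-block requires one slide per alternate edge.

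Given this matching lower bound, the extra displacement of the pendant token on $r_i t_i$ contributes an \emph{additional} slide not charged to any of these components. I would then distinguish two cases for the final destination of the displaced token: (a) it ends on a main-path edge $r_j r_{j+1}$ inside the chain, which forces $r_j t_j, r_{j+1}t_{j+1} \notin M$ and cascades into at least two further pendant displacements, each contributing more slides; or (b) it leaves the chain through $b_e^u$ or $c_5(\Ttt_u^j)$, in which case its trajectory alone contributes at least $\min(i, \ell - i + 1) + 1 \geq 2$ slides and moreover conflicts with the token that was slid through the chain in the forward strategy.

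The main obstacle is ruling out any ``compensating saving'' elsewhere in the graph: one must show that no alternative configuration of the chain (with $< \ell$ pendants in $M$) allows fewer slides in any edge-block, vertex-block, or other chain. This follows because the only external edges incident to chain vertices are $b_e^u r_1$ and $r_{\ell+1} c_5(\Ttt_u^j)$, so any interaction of the chain with the rest of $H$ is already accounted for in the $\ell + 2$ traversal cost; the pendant tokens are structurally independent of the choices made in the other gadgets. Combining this with the previous step yields a strict lower bound of $b + 1$ on the total slide count, contradicting feasibility of $M$ and completing the proof.
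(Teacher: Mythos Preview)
The paper states this claim without proof, so there is no direct comparison to make. Your budget-tightness approach is the natural one and is essentially what the paper relies on implicitly (the subsequent Claims~\ref{clm:mdis:whard:tokenlimit}--\ref{clm:mdis:whard:vertexblocklb} are all proved under the assumption that Claim~\ref{clm:mdis:whard:chains} holds).

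That said, your sketch has a real circularity hazard that you do not fully resolve. The lower bounds you invoke --- ``each chain traversal to relocate the token starting on $a_e^u b_e^u$ costs at least $\ell+2$'' and the per-block counts --- are exactly the contents of Claims~\ref{clm:mdis:whard:tokenlimit}--\ref{clm:mdis:whard:vertexblocklb}, whose proofs in the paper explicitly cite Claim~\ref{clm:mdis:whard:chains}. You say you would re-derive them ``using the structural rigidity of the gadgets,'' but the delicate point is this: the vertex-count argument shows that at least two tokens must leave each edge-block, and any such token must \emph{enter} a chain, but nothing yet forces it to \emph{fully traverse} the chain. It could stop on a chain edge (a path edge $r_j r_{j+1}$, or a pendant edge $r_j t_j$ after its occupant moved), paying far fewer than $\ell+2$ slides. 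Ruling this out is precisely what Claim~\ref{clm:mdis:whard:chains} is for, so you cannot assume it while proving it.

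One way to close the gap is a token-count argument on the chain itself: the chain has $2\ell+1$ vertices, so any matching restricted to chain edges has size at most $\ell$; moreover, a short case analysis shows that the only size-$\ell$ matchings on the chain are $S(\Ctt)$ and $\{r_1t_1,\dots,r_{\ell-1}t_{\ell-1},r_\ell r_{\ell+1}\}$. Hence if $S(\Ctt)\not\subseteq M$ then either (i) $|M\cap E(\Ctt)|<\ell$, so a net token left the chain and must be re-accounted for at the other end, or (ii) $M\cap E(\Ctt)$ is the second matching, which occupies $r_{\ell+1}$ and blocks the exit edge $r_{\ell+1}c_5(\Ttt_u^j)$ needed by the edge-block token. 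Either branch then feeds into your budget contradiction, but you should make this explicit rather than leaving it inside the ``no compensating saving'' paragraph.
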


The above claim ensures that the chains act as the connectors between blocks and do not play any role in the solutions. 
That is, the matched edges in a chain are the same in both initial configuration and in any feasible solution. 
The length of the chain restricts the number of tokens moved across the chains. 
\begin{claim}
    \label{clm:mdis:whard:tokenlimit}
    The number of tokens moved between the blocks through the chains is at most $2\kctwo$. 
\end{claim}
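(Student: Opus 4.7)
The plan is to combine Claim~\ref{clm:mdis:whard:chains} with a displacement lower bound in the line graph $L(H)$. By Claim~\ref{clm:mdis:whard:chains}, for every chain $\Ctt$ we have $S(\Ctt) \subseteq M$, so the $\ell$ tokens initially placed on the edges $r_i t_i$ for $i \in [\ell]$ occupy exactly these edges in the final matching as well. Any additional token whose trajectory uses chain edges of $\Ctt_e^u$ must therefore enter at one boundary (through $b_e^u d_e^u$ or $f_e^u c_5(\Ttt_u^j)$) and leave at the other, i.e., it is moved between the two blocks attached to the chain.

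The key observation is that every token slide moves a single token to an edge adjacent to its current edge in $L(H)$. By tracking tokens through the discovery sequence, one obtains a bijection $\sigma: S \to M$ such that the number of slides $T$ satisfies $T \geq \sum_\tau d_{L(H)}(s_\tau, \sigma(s_\tau))$, where $s_\tau$ and $\sigma(s_\tau)$ are the starting and ending edges of token $\tau$. For each token moved between the two blocks attached to a chain $\Ctt_e^u$, every path between $s_\tau$ and $\sigma(s_\tau)$ in $L(H)$ must traverse all $\ell$ ``long'' chain edges $r_i r_{i+1}$, since these form the unique bottleneck in $L(H)$ between the two sides of the chain. Hence each such token contributes at least $\ell$ to the displacement sum.

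Now suppose for contradiction that $x \geq 2\kctwo + 1$ tokens are moved between blocks through the chains. The total number of slides is then at least $x \cdot \ell \geq (2\kctwo + 1) \cdot 8\kappa^2 = 2\ell\kctwo + 8\kappa^2$. Expanding the budget gives $b = (2\ell + 12)\kctwo + 2\kappa^2 - \kappa = 2\ell\kctwo + 6\kappa^2 - 6\kappa + 2\kappa^2 - \kappa = 2\ell\kctwo + 8\kappa^2 - 7\kappa$, which is strictly smaller than $2\ell\kctwo + 8\kappa^2$ for every $\kappa \geq 1$. This contradicts the feasibility of the discovery sequence, so $x \leq 2\kctwo$.

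The main obstacle is justifying the per-traversal lower bound of $\ell$ rigorously. One must argue that even when considering an adversarial bijection between $S$ and $M$ (exploiting the fact that tokens are indistinguishable) or when multiple tokens traverse the same chain, each net between-block crossing still costs $\ell$ units of displacement. This reduces to verifying that the only paths in $L(H)$ joining the two sides of a chain pass through the $\ell$ long edges $r_i r_{i+1}$, which is straightforward from the construction.
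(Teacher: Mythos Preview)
Your argument is correct and follows the same counting strategy as the paper: each token whose net displacement carries it from one block to another contributes at least a chain's worth of slides, and comparing this to the budget forces the number of such tokens to be at most $2\binom{\kappa}{2}$. The paper uses the per-crossing bound $\ell+2$ (counting the two connector edges as well) and the inequality $(2\binom{\kappa}{2}+1)(\ell+2)>b$, while you use the slightly weaker bound $\ell$; both suffice since $\ell=8\kappa^2$ was chosen with room to spare. Your line-graph displacement framing and the explicit verification that every $L(H)$-path across a chain must visit all $\ell$ edges $r_ir_{i+1}$ spell out details the paper leaves implicit, but there is no real difference in approach. The worry you raise about ``adversarial bijections'' is not actually an obstacle: once you fix the tracking bijection coming from the discovery sequence, the number of slides equals the sum of trajectory lengths, and every cross-block trajectory individually has length at least $\ell$ regardless of what other tokens do.
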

\begin{proof}
    If a token is moved from a block to another block through a chain $\Ctt$, it has to slide at least $\ell+2$ steps since the chain $\Ctt$ has $\ell$ matching edges which can be treated as fixed according to Claim~\ref{clm:mdis:whard:chains}. 
    By contradiction, if we move $2\kctwo + 1$ tokens between the blocks through the chains, then it takes at least $(2\kctwo+1)(\ell+2) > (2\ell+2)\kctwo + 2\kappa^2-\kappa = b$ since $\ell = 8\kappa^2$. 
\end{proof}
We continue by providing an exact characterization on how many tokens are moved from the edge-blocks across the chains.
\begin{claim}
\label{clm:mdis:whard:edgeperblock}
    For each $1 \leq i < j \leq \kappa$, exactly two tokens must be moved from $H_{i,j}$ where one token moves to $H_i$ and the other one to $H_j$ through the chains. 
\end{claim}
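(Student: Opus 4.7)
The plan is to combine a local token-counting argument inside $H_{i,j}$ with Claims~\ref{clm:mdis:whard:chains} and~\ref{clm:mdis:whard:tokenlimit}.

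First, I would show that at least two tokens must leave $H_{i,j}$ through chains. By Claim~\ref{clm:mdis:whard:chains}, every chain $\Ctt$ retains its initial matching $\{r_it_i\mid i\in[\ell]\}\subseteq M$, so in particular $r_1$ is saturated by $r_1t_1$; consequently, no boundary edge $b_e^{\cdot}d_e^{\cdot}=b_e^{\cdot}r_1$ between $H_{i,j}$ and a chain can lie in $M$. Let $\alpha$ denote the number of $M$-edges with both endpoints in $V(H_{i,j})$, and let $\gamma_{\mathrm{out}},\gamma_{\mathrm{in}}\ge 0$ count the tokens that leave, respectively enter, $V(H_{i,j})$ across chain boundaries during the discovery sequence. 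Token conservation yields $\alpha=6+4|E_{i,j}|+\gamma_{\mathrm{in}}-\gamma_{\mathrm{out}}$, while the matching constraint on $V(H_{i,j})$ gives $2\alpha\le|V(H_{i,j})|=9+8|E_{i,j}|$. Eliminating $\alpha$ produces $\gamma_{\mathrm{out}}-\gamma_{\mathrm{in}}\ge 3/2$, and by integrality together with $\gamma_{\mathrm{in}}\ge 0$ we conclude $\gamma_{\mathrm{out}}\ge 2$.

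Summing this local lower bound over all $\kctwo$ edge-blocks yields at least $2\kctwo$ chain traversals in total, matching the upper bound of Claim~\ref{clm:mdis:whard:tokenlimit}. Equality then forces both (i)~exactly two tokens leaving each $H_{i,j}$ and (ii)~$\gamma_{\mathrm{in}}=0$ for every edge-block, meaning no token enters $H_{i,j}$ from a chain.

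Finally, I would show that the two outgoing tokens split with one traversing some $\Ctt_e^u$ (reaching $H_i$) and the other some $\Ctt_e^v$ (reaching $H_j$). The conflict at $y_{i,j}^2$ can only be resolved by sliding the token on $y_{i,j}^2y_{i,j}^3$ to some $y_{i,j}^3b_e^v$ with $v\in V_j$, since those are the only edges incident to $y_{i,j}^3$ apart from $y_{i,j}^2y_{i,j}^3$; the resulting conflict at $b_e^v$ can then only be discharged by pushing a token through $\Ctt_e^v$ (as $\gamma_{\mathrm{in}}=0$ forbids absorbing it internally). The symmetric cascade starting at $z_{i,j}^2$ forces an outgoing token through some $\Ctt_e^u$ with $u\in V_i$ into $H_i$. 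Together with the exact count of two from the previous step, this pins down the desired split.

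The hard part will be the last step: the subgraph $H_{i,j}^e$ is sufficiently interconnected (via the triangle $w_e^1w_e^2w_e^3$ and shared vertices $a_e^u,a_e^v$) that an unlucky cascade from $y_{i,j}^2$ could in principle be rerouted out through an $H_i$-side chain. Ruling this out cleanly requires a careful case analysis tracking which vertices of $H_{i,j}^e$ remain saturated in $M$, or, alternatively, a refined counting argument applied to a ``half-block'' that only touches the chains attached to $V_j$.
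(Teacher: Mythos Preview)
Your first two steps—the local vertex-counting lower bound showing $\gamma_{\mathrm{out}}-\gamma_{\mathrm{in}}\ge 2$, and the global tightness argument via Claim~\ref{clm:mdis:whard:tokenlimit} forcing exactly two outgoing tokens per edge-block and $\gamma_{\mathrm{in}}=0$—are precisely the paper's argument, carried out with slightly more explicit bookkeeping. (The paper simply states $|V(H_{i,j})|=8|E_{i,j}|+9$, $|S(H_{i,j})|=4|E_{i,j}|+6$, and that the maximum matching has size $4|E_{i,j}|+4$, then invokes Claim~\ref{clm:mdis:whard:tokenlimit}.)

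For the $H_i/H_j$ split, however, your cascade argument has a real gap. The sentence ``the resulting conflict at $b_e^v$ can then only be discharged by pushing a token through $\Ctt_e^v$ (as $\gamma_{\mathrm{in}}=0$ forbids absorbing it internally)'' is not correct: $\gamma_{\mathrm{in}}=0$ only forbids tokens \emph{entering} $H_{i,j}$ from chains, not internal rearrangement. The token on $a_e^vb_e^v$ can slide to $w_e^2a_e^v$, and the displaced conflict then propagates through the triangle $w_e^1w_e^2w_e^3$ to the $u$-side and exits via $\Ctt_e^u$ into $H_i$—exactly the rerouting you flag as worrisome in your last paragraph. So the cascades from $y_{i,j}^2$ and $z_{i,j}^2$ are \emph{not} forced to exit on opposite sides by the argument as written.

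It is worth noting that the paper's own proof of this claim actually stops at ``exactly two tokens leave each edge-block (to vertex-blocks)''; the split into $H_i$ and $H_j$ is only established in the proof of Claim~\ref{clm:mdis:whard:edgeblocksix}. There the route is structural rather than a cascade: one argues that any matching of size $4|E_{i,j}|+4$ inside $H_{i,j}$ singles out a unique ``compatible'' edge $e=uv$, must contain $w_e^3a_e^u$ and $w_e^2a_e^v$, and must vacate precisely $a_e^ub_e^u$ and $a_e^vb_e^v$; since $b_e^u$ borders only $V_i$-side chains and $b_e^v$ only $V_j$-side chains, the split follows immediately. Your proposed ``half-block'' refinement would have to cut through each $H_{i,j}^e$ to separate the $u$- and $v$-sides, at which point you are essentially reconstructing this compatible-edge analysis. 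I would recommend adopting the matching-structure route of Claim~\ref{clm:mdis:whard:edgeblocksix} rather than pursuing the cascade.
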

\begin{proof}
    The number of vertices in the edge-block $H_{i,j}$ is $8|E_{i,j}|+9$ and the number of tokens in the block is $4|E_{i,j}|+6$. 
    Since $H_{i,j}$ can accommodate a matching of size at most $4|E_{i,j}|+4$, at least two tokens must be moved out of the edge-block. 
    A token on an edge-block can move to a vertex-block by crossing exactly one chain, but a token on an edge-block can move to another edge-block by crossing at least two chains. 
    Since there are $\kctwo$ edge-blocks and at most $2\kctwo$ tokens can move through the chains, there are exactly two tokens moved from each edge-block. 
\end{proof}
    Next, we count the minimum number of token slides required in each edge block. 
\begin{claim}
\label{clm:mdis:whard:edgeblocksix}
    For each $1 \leq i < j \leq \kappa$, the matching $M \cap E(H_{i,j})$ can be obtained from $S \cap E(H_{i,j})$ using at least six token slides. 
\end{claim}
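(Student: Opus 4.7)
The plan is to lower-bound the number of token slides strictly within $E(H_{i,j})$ by analyzing the three elementary matching conflicts present in $S \cap E(H_{i,j})$ and the cascade each forces. In $S \cap E(H_{i,j})$, each of the paths $X_{i,j}$, $Y_{i,j}$, $Z_{i,j}$ carries tokens on both of its edges sharing its central vertex; these are three matching conflicts, and each requires at least one slide to resolve.

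The main step will be to argue that the $X$-conflict forces at least four within-slides. Since $x_{i,j}^1$ is a pendant whose unique incident edge is $x_{i,j}^1 x_{i,j}^2$, the token on $x_{i,j}^1 x_{i,j}^2$ has no admissible first move, so the token on $x_{i,j}^2 x_{i,j}^3$ must slide, and its only admissible destinations are the edges $x_{i,j}^3 w_e^1$ for some $e \in E_{i,j}$. This slide immediately creates a new conflict at $w_e^1$ with the token on $w_e^1 w_e^4$. Because $w_e^4$ has $w_e^1$ as its only neighbor, the $w_e^1 w_e^4$ token can only be relieved from $w_e^1$ by sliding into the triangle $w_e^1 w_e^2 w_e^3$, which then propagates the conflict. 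I will perform a finite case analysis over all continuations of the cascade of length at most three to show that at least three within-slides are necessary to route the $X$-token onto a matching-compatible edge such as $w_e^3 a_e^u$ or $w_e^2 a_e^v$, and that any such 3-slide routing leaves a residual conflict at $w_e^2$ or $w_e^3$ with the still-present token on $w_e^2 w_e^3$ (or, symmetrically, with a displaced $w_e^1 w_e^4$ token). Clearing this residual conflict will require one further within-slide, bringing the $X$-attributable total to four.

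The $Y$- and $Z$-conflicts each contribute one additional within-slide. By the analogous pendant argument applied at $y_{i,j}^1$ and $z_{i,j}^1$, the tokens on $y_{i,j}^2 y_{i,j}^3$ and $z_{i,j}^2 z_{i,j}^3$ can only slide to edges of the form $y_{i,j}^3 b_{e'}^v$ and $z_{i,j}^3 b_{e''}^u$ for some $e', e'' \in E_{i,j}$, and these two slides involve tokens and edges vertex-disjoint from those counted in the $X$-cascade. Summing the contributions yields the lower bound of $4+1+1=6$ within-slides.

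The main obstacle will be the case analysis inside $H_{i,j}^e$ certifying the four-slide lower bound for the $X$-cascade: one must enumerate the 3-slide routings of the $X$-token through the triangle and the pendant $w_e^4$, including the alternative of displacing the $w_e^1 w_e^4$ token first, and verify in each case that a residual conflict persists that cannot be absorbed by the $Y$- or $Z$-slides.
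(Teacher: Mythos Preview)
Your plan has a genuine gap: it never establishes what $M' := M \cap E(H_{i,j})$ actually is, and the cascade argument silently assumes this structure. The paper's proof begins by invoking the preceding claim (exactly two tokens leave $H_{i,j}$, so the block hosts a maximum matching of size $4|E_{i,j}|+4$) to pin down $M'$ completely: there is a unique ``compatible'' edge $e=uv \in E_{i,j}$ such that in $H_{i,j}^e$ the edges $w_e^1 w_e^4,\, w_e^2 a_e^v,\, w_e^3 a_e^u$ are matched while $w_e^2 w_e^3,\, a_e^u b_e^u,\, a_e^v b_e^v$ are not, and every other $H_{i,j}^{e'}$ is untouched. Only with this explicit target in hand does the six-slide lower bound become a concrete slide-distance computation between $S \cap E(H_{i,j})$ and a fixed set $M'$.

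Your cascade analysis presupposes exactly this target without deriving it. When you write that the $X$-token must be ``routed onto a matching-compatible edge such as $w_e^3 a_e^u$ or $w_e^2 a_e^v$,'' you are asserting the structure of $M'$ rather than proving it; nothing in your local conflict analysis rules out, for instance, the $X$-token and the $Y$-token heading into different sub-blocks $H_{i,j}^{e}$ and $H_{i,j}^{e'}$, or the tokens on some $a_{e'}^u b_{e'}^u$ being the ones that leave the block. Likewise, the claimed vertex-disjointness of the $Y$- and $Z$-slides from the $X$-cascade fails once you allow the $Y$/$Z$ tokens to land on $y_{i,j}^3 b_e^v$ and $z_{i,j}^3 b_e^u$ for the \emph{same} $e$ used by the $X$-cascade (they share the vertices $b_e^u, b_e^v$ with the leaving tokens). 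Finally, tracking conflicts forward through a specific move sequence (``this slide creates a new conflict, which forces \ldots'') is an upper-bound style argument; it does not by itself lower-bound the total slide count over all transformations. The fix is to first use the size constraint from Claim~\ref{clm:mdis:whard:edgeperblock} to characterize $M'$ exactly, and then lower-bound the slide count to reach that specific $M'$---which is precisely the paper's route.
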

\begin{proof}
    It is clear from the above claim that the edge-block $H_{i,j}$ must accommodate a matching of size $4|E_{i,j}|+4$. 
    This implies the following conditions.
    \begin{itemize}
        \item There exists an edge $e=uv \in E_{i,j}$ with $u \in V_i, v \in V_j$ such that the edges $w_e^1w_e^4, w_e^2a_e^v, w_e^3a_e^u$ are matched and the edges $w_e^2w_e^3, a_e^ub_e^u, a_e^vb_e^v$ are unmatched in $H_{i,j}^e$, and
        \item for every other edge $e' \not= e \in E_{i,j}$, the matched edges in $H_{i,j}^{e'}$ is the same as in the initial configuration. 
    \end{itemize}
    Otherwise, we cannot accommodate a matching of size $4|E_{i,j}|+4$. 
    We call the edge $e$ as {\em compatible to $M$} in $H_{i,j}$. 
    Let $M' = M \cap E(H_{i,j})$ and $e =uv$ with $u \in V_i, v \in V_j$ be the edge compatible to $M$ in $H_{i,j}$.
    It is clear that $(S \cap E(H_{i,j})) \setminus M' = \{x_{i,j}^2x_{i,j}^3, y_{i,j}^2y_{i,j}^3, z_{i,j}^2z_{i,j}^3, w_e^2w_e^3, a_e^ub_e^u, a_e^vb_e^v\}$ and $M' \setminus (S \cap E(H_{i,j})) = \{z_{i,j}^3b_e^u, y_{i,j}^3b_e^v, w_e^3a_e^u, w_e^2a_e^v\}$. 
    We have shown that exactly two tokens must move out from $H_{i,j}$. 
    Moreover, the edges $a_e^ub_e^u$ and $a_e^vb_e^v$ are not in $M'$. 
    Also, these edge are closest to the chains, so the tokens on these edges will slide through the chains. 
    Therefore, the tokens on the edges $a_e^ub_e^u$ and $a_e^vb_e^v$ are moved to some free vertices on the vertex-blocks $H_i$ and $H_j$, respectively. 
    The best possible way to move the tokens from the initial solution $S$ restricted to $H_{i,j}$ to $M'$ requires at least six token steps. 
    The token sliding is illustrated in steps 1, 4, 7, and 8 in Figure~\ref{fig:mdis:whardtokenmove}. 
\end{proof}
Now we show the characterization on how tokens are moved within the vertex-blocks. 
\begin{claim}
\label{clm:mdis:whard:vertexblocklb}
    For each $i \in [\kappa]$, a token on $P_i$ must move at least $2\kappa-1$ sliding steps. 
\end{claim}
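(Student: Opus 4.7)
The plan is as follows. First, I would establish that the two initial tokens of $S$ on $P_i$, namely those at $p_i^1p_i^2$ and $p_i^2p_i^3$, share the vertex $p_i^2$, so at least one of them must relocate for the final configuration to be a matching. Since $p_i^1$ has degree $1$ in $H$ (its only neighbor being $p_i^2$), the only line-graph neighbor of $p_i^1p_i^2$ is $p_i^2p_i^3$; consequently, the relocating token can exit the pair $\{p_i^1p_i^2, p_i^2p_i^3\}$ only by ultimately landing on an edge of the form $p_i^3c_1(\Ttt_u^{j_1})$ for some $u \in V_i$ and $j_1 \in [\kappa]\setminus\{i\}$. This already contributes one \emph{initial slide}.

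Second, I would trace the forced cascade triggered by this initial slide through the gadget chain of the selected $u$. In each visited gadget $\Ttt_u^{j}$, the incoming token (sitting on $p_i^3c_1$ or on the preceding connector $c_4c_1$) conflicts with the original token on $c_1c_2$, forcing it to slide to $c_2c_3$; this in turn conflicts at $c_3$ with the token on $c_3c_4$ (which, as a consequence of the edge-block processing of Lemma~\ref{lem:mdis:whard:forward}, must have already been moved from $c_3c_5$ to $c_3c_4$, because the arrival of the chain escort at $c_5f$ displaces the original $c_3c_5$ token). The $c_3c_4$ token must therefore slide to the connector $c_4c_1$ of the next gadget, or to $c_4q_i$ on the last gadget. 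Hence each visited gadget contributes at least two additional slides.

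Third, I would rule out early termination of the cascade. By Claims~\ref{clm:mdis:whard:tokenlimit} and~\ref{clm:mdis:whard:edgeperblock}, all $2\kctwo$ chain-crossings available in any feasible solution are consumed by the mandatory edge-block escorts, so no $H_i$-token can escape through a chain; and an extra chain traversal would in any case cost $\ell + 2 = 8\kappa^2 + 2$ slides, far exceeding the at most $2(\kappa - 1)$ slides that might be saved by skipping gadgets. Hence the cascade must run through all $\kappa - 1$ gadgets of the chosen $u$-chain and terminate at $q_i$, for a total of $1 + 2(\kappa - 1) = 2\kappa - 1$ slides.

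The main obstacle I anticipate is formalizing the ``forced-direction'' argument inside each gadget and ruling out alternative strategies, such as interleaving cascades across several $u$-chains in parallel or reordering slides so that shifts are shared between neighbouring gadgets. I plan to handle these via a case analysis on the possible local matchings obtainable in each gadget, showing that every gadget traversed by the cascade must contribute its two shift slides regardless of the ordering.
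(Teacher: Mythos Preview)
Your cascade argument takes a different route from the paper. The paper argues at a higher level: it first observes (via the earlier claims) that no token can leave the vertex-block $H_i$, so the excess token on $P_i$ must be absorbed somewhere inside $H_i$; it then asserts that the only place it can land is an edge incident on $q_i$, and since the line-graph distance from $P_i$ to any such edge is $4\kappa-3$, realising this along an alternating path costs at least $2\kappa-1$ slides. Your proposal instead tries to trace an explicit forced cascade gadget by gadget.

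Your cascade argument has a genuine gap. In the second step you invoke Lemma~\ref{lem:mdis:whard:forward} --- the \emph{forward} direction --- to conclude that in every visited gadget the token on $c_3c_5$ has already been displaced to $c_3c_4$ by a chain escort. But Claim~\ref{clm:mdis:whard:vertexblocklb} is a lower bound over \emph{all} feasible solutions $M$, not just the one constructed in Lemma~\ref{lem:mdis:whard:forward}; at this point in the reverse direction nothing yet tells you \emph{which} gadgets $\Ttt_u^j$ receive escorts, and this is precisely what Lemma~\ref{lem:mdis;whard:reverse} is ultimately trying to establish. Concretely, suppose the cascade enters a gadget $\Ttt_{u}^{j}$ whose chain did \emph{not} carry an escort. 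After $c_1c_2 \to c_2c_3$, the original token on $c_3c_5$ is still present; it can now slide to the connector edge $c_5 f_e^{u}$ (the chain endpoint $f_e^{u}=r_{\ell+1}$ is unsaturated in $S(\Ctt)$), and the cascade terminates after only three slides in total. Your third paragraph rules out full chain \emph{traversals}, but landing on the single boundary edge $c_5 f_e^{u}$ is not a chain traversal, so Claims~\ref{clm:mdis:whard:tokenlimit} and~\ref{clm:mdis:whard:edgeperblock} do not block it. To make your approach work you would have to argue that every such $c_5 f_e^{u}$ along the chosen $u$-chain is already occupied --- which is essentially the paper's assertion that the token must go all the way to $q_i$.
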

\begin{proof}
    We have shown that the excess tokens in the edge-blocks should move to the vertex blocks through the chains. 
    It is clear that no tokens can slide from a vertex-block to a edge-block through chains. 
    Otherwise, the token has to cross the chains to reach other blocks, which falsifies Claim~\ref{clm:mdis:whard:edgeperblock}. 
    Therefore, a token on a vertex block can settle within the vertex-block. 
    In the path $P_i$, both edges have tokens. 
    At least one token on $P_i$ must move to another edge within the vertex-block. 
    Moreover, it must be an edge incident on $q_i$. 
    The minimum distance between an edge incident on~$q_i$ and an edge in $P_i$ is at least $4\kappa-3$. 
    At the best case, if there exists an augmenting path between these edges, then one can slide the token in $2\kappa-1$ steps. 
\end{proof}
\begin{lemma}    
\label{lem:mdis;whard:reverse}
If $(H, S, b)$ is a yes-instance of the \mdis~problem in the sliding model then $(G,\kappa)$ is a yes-instance of the \mcc~problem.  
\end{lemma}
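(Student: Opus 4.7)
The plan is to combine the tightness of Claims~5.1--5.5 with a structural argument on the vertex blocks to read off a multicolored $\kappa$-clique from the compatible edges selected inside the edge blocks.

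First, I would verify that the budget $b$ is attained \emph{exactly} by the mandatory costs. By \Cref{clm:mdis:whard:edgeperblock} exactly $2\kctwo$ tokens cross chains, and by \Cref{clm:mdis:whard:chains} each such crossing costs at least $\ell+2$ slides, contributing $2\kctwo(\ell+2)$. By \Cref{clm:mdis:whard:edgeblocksix} each of the $\kctwo$ edge blocks also uses at least $6$ local slides, adding $6\kctwo$. Each chain arrival at a vertex $c_5(\Ttt)\subseteq H_i$ collides with the matched edge $c_3c_5$ and therefore forces one additional slide, giving $\kappa-1$ extra slides per vertex block. Finally \Cref{clm:mdis:whard:vertexblocklb} contributes another $2\kappa-1$ cascade slides per vertex block to resolve the doubly-matched $p_i^2$. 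Summing yields exactly $b=(2\ell+12)\kctwo+2\kappa^2-\kappa$. Hence every one of these inequalities is an equality and no token performs any unnecessary step.

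Next I would use the tight edge-block bound to extract combinatorial structure. \Cref{clm:mdis:whard:edgeblocksix} already produces, for each $1\leq i<j\leq\kappa$, a uniquely determined compatible edge $e_{i,j}=u_{i,j}v_{j,i}\in E_{i,j}$ with $u_{i,j}\in V_i$ and $v_{j,i}\in V_j$. By tightness, the two tokens that leave $H_{i,j}$ must travel the chains $\Ctt_{e_{i,j}}^{u_{i,j}}$ and $\Ctt_{e_{i,j}}^{v_{j,i}}$, and are therefore delivered precisely to $c_5(\Ttt_{u_{i,j}}^{j})\subseteq H_i$ and $c_5(\Ttt_{v_{j,i}}^{i})\subseteq H_j$. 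So the compatible edges select, for every ordered pair $(i,j)$, a vertex $u_{i,j}\in V_i$.

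The key step is to show that $u_{i,j}$ does not depend on $j$; call the common value $u_i^\star$. Here the tight vertex-block budget is essential: the $2\kappa-1$ cascade slides allowed inside $H_i$ are exactly the length of a matching-shift along the single row $\{\Ttt_u^j:j\neq i\}$ of one fixed $u\in V_i$, exploiting the $c_3c_4$-edges created by the arrival fixes along that row to form a continuous augmenting shift from $p_i^2p_i^3$ to an edge incident on $q_i$. If the $\kappa-1$ chain arrivals at $H_i$ were distributed across the rows of two distinct vertices, then routing the $p_i$-token to an edge at $q_i$ by slides would necessarily enter and exit at least two rows, strictly exceeding the $2\kappa-1$ cascade budget and contradicting tightness. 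Hence all $\kappa-1$ arrivals at $H_i$ lie in a single common row, which yields $u_{i,j}=u_i^\star$ for every $j\neq i$. Because $u_i^\star u_j^\star = u_{i,j}v_{j,i} = e_{i,j}\in E_{i,j}\subseteq E(G)$ for every $1\leq i<j\leq\kappa$, the set $\{u_1^\star,\dots,u_\kappa^\star\}$ is a multicolored $\kappa$-clique in $G$, so $(G,\kappa)$ is a yes-instance of \mcc. The main obstacle of the plan will be making the single-row rigidity argument precise: one has to carefully enumerate the very restricted set of free vertices available inside $H_i$ after the arrival fixes, and rule out every alternative slide strategy that might attempt to mix rows by exiting through an unmatched chain-connector edge $c_5 f_{e}^{u}$, showing that any such strategy either violates the cascade budget or falsifies \Cref{clm:mdis:whard:edgeperblock} by activating a second chain.
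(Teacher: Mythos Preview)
Your proposal is correct and follows essentially the same route as the paper: both arguments first show that the lower bounds from the claims sum to exactly $b$, forcing every inequality to be tight, then extract for each edge block a unique compatible edge whose two departing tokens must traverse the directly adjacent chains, and finally use the tightness of the $2\kappa-1$ bound in each vertex block (your ``single-row rigidity'' is exactly the paper's observation that the required $(4\kappa-3)$-augmenting path from $p_i^3$ to $q_i$ exists only if all $\kappa-1$ arrivals land in the gadgets $\Ttt_{u}^j$ of one common vertex $u\in V_i$) to conclude that the compatible edges assemble into a multicolored clique. Your closing caveat about ruling out detours through unmatched chain-connector edges $c_5 f_e^u$ is precisely the place where the paper is also terse; the tightness of the chain-crossing count via \Cref{clm:mdis:whard:edgeperblock} is what closes that door.
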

\begin{proof}
Let $M \subseteq E(H)$ be a feasible solution for the instance $(H, S, b)$ of the \mdis~problem in the sliding model, where $b = (2\ell+12)\kctwo + 2\kappa^2-\kappa$. 
For each $1 \leq i < j \leq \kappa$, let $e_{i,j}=u_iu_j \in E_{i,j}$ be the compatible edge to $M$ in $H_{i,j}$. 
The minimal number of steps used in any edge-block is at least six (by Claim~\ref{clm:mdis:whard:edgeblocksix}), and both tokens moving through the chain need $\ell+2$ steps for traversing the chain. Thus, we must spend $2(\ell+2)+6$ sliding-steps per edge-block to obtain~$M$ from $S$. 
The minimum number of token slides required for any vertex-block is at least $2\kappa-1$ (by Claim~\ref{clm:mdis:whard:vertexblocklb}). 
Therefore, at least $2\kappa^2-\kappa$ slides are required for vertex-blocks. 
Thus, the remaining budget is at most $2\kctwo$. 

Now consider a token that is pushed ``out'' from an edge-block into a chain and all the way to a vertex-block; the token cannot remain on the chain. For instance, the token on $a_{e_{i,j}}^{u_i}b_{e_{i,j}}^{u_i}$ is moved to $f_{e_{i,j}}^{u_i}c_5(\Ttt_{u_i}^j)$, and the token on $c_5(\Ttt_{u_i}^j)c_3(\Ttt_{u_i}^j)$ has to move further as it cannot be in~$M$. 
Otherwise the token on $f_{e_{i,j}}^{u_i}c_5(\Ttt_{u_i}^j)$ has to move at least three steps to find an unmatched edge. 
This contradicts the remaining budget.
Therefore, the token on $c_5(\Ttt_{u_i}^j)c_3(\Ttt_{u_i}^j)$ must move to $c_3(\Ttt_{u_i}^j)c_4(\Ttt_{u_i}^j)$ in one step. 
Similarly in each vertex-block, $\kappa-1$ steps (one for each gadget-$\Ttt$) must be used to ``fix'' the tokens received from the edge-blocks. 
Therefore, at least $2\kctwo$ slides are required to handle the tokens that move from edge-blocks to vertex-blocks. 

For each $i \in [\kappa]$, at most one token on $P_i$ can be in $M$. 
The only possibility to move any token on $P_i$ is to match an edge incident with $q_i$. 
Otherwise, the token has to cross the chains to reach other blocks, which falsifies Claim~\ref{clm:mdis:whard:edgeperblock}. 
The minimum number of steps required to move a token from~$P_i$ to some edge incident with $q_i$ is $2\kappa-1$. 
Moreover, this is possible if and only if there exists an augmenting path of length $4\kappa-3$. 
Therefore, $M$ has an alternating path of length $4\kappa-3$ that starts from $p_i^1p_i^2$ to $c_4(\Ttt_{u}^{\kappa})q_i$ for some $u \in V_i$. 
This alternating path can be formed by the extra tokens pushed from the edge-blocks. 
That is, for each $j \not= i \in [\kappa]$, one among the two tokens pushed from~$H_{i,j}$ should reach the gadget $\Ttt_u^j$. 
Similarly for each $i \in [\kappa]$, there exists a vertex $u_i \in V_i$ such that each gadget $\Ttt_{u_i}^j$ for $j \not=i \in [\kappa]$ must receive a token from the respective edge-blocks. 
Finally, $2\kctwo$ tokens can reach $(\kappa-1)$ gadgets of each $\kappa$ vertices if there exists a clique of size $\kappa$ in $G$. 
\end{proof}

\subsubsection{\FPT for parameter \texorpdfstring{$k$}{k}}
We next show that the \mdis~problem is fixed-parameter tractable with respect to parameter $k + b$. Then, we show that for any instance of the problem one can bound~$\budget$ by a (quadratic) function of $k$ which implies fixed-parameter tractability of the problem when parameterized by $k$ alone. We start with some relevant definitions and lemmas.

\begin{definition}
For any set of edges $M \subseteq E(G)$ in a graph $G = (V, E)$, a vertex $v \in V(G)$ is \emph{overloaded} if $v$ is incident to at least two edges in $M$, $v$ is \emph{saturated} if it is incident to a single edge of $M$, and $v$ is \emph{unsaturated} otherwise. 
\end{definition}

Note that if $M$ is a matching, then there can be only saturated and unsaturated vertices.  

\begin{definition}
For a token $t$ on edge $e = \{u,v\} \in S$ and a vertex $w \in V(G)$, let $d(t,w)$ denote the minimum number of slides the token has to slide in order to become incident to  vertex $w$. 
For an integer $i \in \{1, \ldots, b\}$ and a token $t \in S$, let $Y^{i}_{t}$ denote the set $\{w \in V(G) \mid d(t,w) = i - 1\}$ and let~$Z^{i}_{t}$ denote the set $\{w \in V(G) \mid d(t,w) = i\}$. 
Let $G^i_t$ denote the graph $G[Y^{i}_{t} \cup Z^{i}_{t}] - E(G[Z^{i}_{t}])$.
\end{definition}

\begin{lemma}\label{lem:VertexCoverNeighbors}
Let $(G, S, b)$ be a yes-instance of the \mdis~problem in the sliding model. 
Let $b' \le b$ be the number of slides used by some token $t \in S$ to reach an edge $e = uv$. 
Note that both $u$ and $v$ as well as the edge $e$ are contained in $G^{b'}_t$. 
If either $u$ or $v$ has at least $2k + 1$ neighbors in $G^{b'}_t$, then there exists at least one neighbor, say $w$, such that sliding $t$ to $e' = uw$ or $wv$ (and leaving the final positions of all other tokens unchanged) produces another feasible solution. 
\end{lemma}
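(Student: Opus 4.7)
The plan is to use a clean counting argument to find a ``safe'' neighbor $w$ of $u$ (assuming, without loss of generality, that $u$ is the endpoint of $e = uv$ with at least $2k+1$ neighbors in $G^{b'}_t$; call this set of neighbors $N$). My goal is to exhibit some $w \in N$ such that replacing $uv$ by $uw$ in the final matching gives another feasible solution whose total slide cost does not exceed that of the original.

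First I would observe that every $w \in N$ gives an edge $uw \in E(G^{b'}_t)$, so by the definition $G^{b'}_t = G[Y^{b'}_t \cup Z^{b'}_t] - E(G[Z^{b'}_t])$ at least one of $u, w$ must lie in $Y^{b'}_t$. Hence $t$ can slide to become incident to that endpoint in at most $b' - 1$ steps and then slide once onto $uw$, reaching $uw$ in at most $b'$ slides. Thus redirecting $t$ from $uv$ to $uw$ does not increase the number of slides used by $t$, and the other tokens' slide counts are kept unchanged by the hypothesis.

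Next I would perform the counting. Let $M$ denote the final matching of the original solution. The $k - 1$ tokens other than $t$ occupy $k - 1$ edges of $M$ and therefore saturate at most $2(k - 1) = 2k - 2$ vertices of $G$. Since $v \in N$ (because $uv \in E(G^{b'}_t)$) and choosing $w = v$ would not change the configuration, I discard $v$. Removing $v$ together with the at most $2k - 2$ vertices saturated by tokens other than $t$ still leaves at least $(2k + 1) - 1 - (2k - 2) = 2$ valid candidates $w \in N$. For any such $w$ the set $M' := (M \setminus \{uv\}) \cup \{uw\}$ is a matching of size $k$, since $w$ is unsaturated in $M \setminus \{uv\}$ and $u$ is incident only to $t$ in $M$.

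To finish, I would explicitly build a feasible discovery sequence ending at $M'$: keep the moves of all tokens other than $t$ (whose targets in $M'$ coincide with their targets in $M$), and reroute $t$ along a shortest slide-sequence of length at most $b'$ that terminates at $uw$. The main (but minor) obstacle is verifying collision-freeness of the modified sequence. I would handle this by first executing all slides of the other tokens so that they reach their positions in $M'$, and then sliding $t$ along a shortest slide-path in $G$ to $uw$; such a path of length at most $b'$ exists because $uw \in E(G^{b'}_t)$, and the destination edge is free because $w$ is unsaturated in $M'$ and $u$ is only used by $t$. Since each token's slide count is at most what it was in the original sequence, the total number of slides is at most $b$, producing the required feasible solution.
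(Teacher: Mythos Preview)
Your proof is correct and follows essentially the same counting argument as the paper's: the $k-1$ other tokens saturate at most $2k-2$ vertices, leaving an unsaturated neighbor $w$ of the high-degree endpoint, and since $uw \in E(G^{b'}_t)$ one of its endpoints lies in $Y^{b'}_t$, so the edge is reachable within $b'$ slides. Your final paragraph on collision-freeness goes beyond what the paper argues (the paper simply asserts feasibility), and your particular ``move the others first, then $t$'' ordering is not quite airtight---intermediate edges on $t$'s path could be blocked---but the conclusion still holds via the standard fact, used implicitly throughout the paper (e.g.\ in \Cref{lem:metaThm:reduction}), that for unlabeled tokens the discovery cost from $S$ to $M'$ is at most the cost of any bijective assignment between $S$ and $M'$.
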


\begin{proof}
Assume, without loss of generality, that $u \in Y^{b'}_{t}$. 
Given that in the original solution $t$ is incident to $u$, we know that $t$ will also be incident to a neighbor $v$ of $u$ in $G^{b'}_t$. 
We also know that a single token can saturate at most $2$ vertices. 
Thus, $k - 1$ tokens can saturate at most $2k - 2$ vertices. 
This implies that if all tokens in $S \setminus \{t\}$ end up incident to vertices in $N_G(u) \cap (Y^{b'}_{t} \cup Z^{b'}_{t})$, they can saturate at most $2k - 2$ of those vertices. 

If $u$ has more than $2k - 2$ neighbors, there must exist one unsaturated vertex $w$ such that $e' = uw$ is a safe destination for the token $t$ while all tokens in $S \setminus \{t\}$ move to their original destinations. 
Notice that since $u \in Y^{b'}_{t}$, the distance from $t$ to $e$ is equal to the distance from $t$ to~$e'$, as needed. 
Similar arguments hold if $v$ has more than $2k - 2$ neighbors as all of those neighbors must belong to~$Y^{b'}_t$, which completes the proof (recall that we delete edges between vertices in $Z^{b'}_{t}$ when constructing $G^{b'}_t$). 
\end{proof}

\begin{lemma}\label{lem:MatchingYZ}
Let $(G, S, b)$ be a yes-instance of the \mdis~problem. 
Let $b' \le b$ be the number of slides used by some token $t \in S$ to reach an edge $e = uv$. 
If a matching $M$ of size at least $2k - 1$ exists in $G^{b'}_t$, then there exists an edge $e' \neq e \in M$ such that sliding $t$ to $e'$ (and leaving the final positions of all other tokens unchanged) produces another feasible solution. 
\end{lemma}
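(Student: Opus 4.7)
The plan is to reroute token $t$ from $e$ to some edge $e' \in M$ while leaving the other $k-1$ tokens' destinations unchanged, and to prove existence of $e'$ via a counting argument on $M$. Fix any feasible discovery sequence producing a matching $F$ of size $k$ with $t$ at $e = uv$. The remaining $k-1$ tokens occupy the edges of $F \setminus \{e\}$, which collectively cover a vertex set $U$ of size $2(k-1) = 2k-2$; note that $u, v \notin U$ because $F$ is a matching. For the modified configuration $(F \setminus \{e\}) \cup \{e'\}$ to remain a matching, the edge $e'$ must be vertex-disjoint from $U$.

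Next I would bound the number of edges of $M$ that meet $U$. Since $M$ is itself a matching in $G^{b'}_t$, each vertex of $U$ lies in at most one edge of $M$, so at most $|U| = 2k-2$ edges of $M$ share a vertex with $U$. Hence, among the $|M| \geq 2k-1$ edges of $M$, at least $|M| - (2k-2) \geq 1$ are vertex-disjoint from $U$. In the easy case $e \notin M$, any such vertex-disjoint edge serves as $e'$ and the constraint $e' \neq e$ is automatic.

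The step I expect to be the main obstacle is the case $e \in M$: because $e$'s endpoints lie outside $U$, the edge $e$ itself is one of the vertex-disjoint candidates, so the naive count only guarantees a single such edge, which could be $e$ alone. To secure an $e' \neq e$, I would choose $M$ to exclude $e$ whenever possible: if any matching of size $\geq 2k-1$ in $G^{b'}_t$ avoids $e$, use it instead; otherwise $e$ lies in every such matching, and I would apply an alternating-path exchange along $M \triangle (F \setminus \{e\})$ inside $G^{b'}_t$ to exhibit a second disjoint edge in $M$ different from $e$ (exploiting that the $2k-2$ bad edges of $M$ must inject into $U$, which heavily restricts the structure and produces a short alternating path not using $e$). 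Once $e'$ is fixed, feasibility follows immediately: $(F \setminus \{e\}) \cup \{e'\}$ is a matching of size $k$, and since $e' \in G^{b'}_t$ has at least one endpoint in $Y^{b'}_t$ at distance $b'-1$ from $t$, token $t$ can reach $e'$ in at most $b'$ slides via the same two-stage strategy used for $e$, so the modified discovery sequence still uses at most $b$ slides in total.
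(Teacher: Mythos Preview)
Your core counting argument is exactly the paper's: the $k-1$ tokens other than $t$ cover a vertex set $U$ with $|U|\le 2k-2$, and since $M$ is a matching each vertex of $U$ lies in at most one edge of $M$, so at most $2k-2$ edges of $M$ meet $U$ and at least one edge of $M$ is disjoint from $U$ and hence a safe destination for $t$. The paper's proof stops right there and never addresses the clause $e'\neq e$.

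You correctly observe that this counting alone may leave $e$ itself as the only free edge when $e\in M$, and you propose an alternating-path repair. That repair cannot succeed, because the lemma with the strict requirement $e'\neq e$ is actually false. Take $k=2$, so $2k-1=3$; let $M=\{e,m_1,m_2\}$ be a matching in $G^{b'}_t$ on six distinct vertices, and let the second token's final edge $f$ have one endpoint in $m_1$ and the other in $m_2$. Then $\{e,f\}$ is a matching (so we do have a feasible solution with $t$ on $e$), yet both $m_1$ and $m_2$ meet $U=V(f)$ and the unique edge of $M$ disjoint from $U$ is $e$. No $e'\in M\setminus\{e\}$ forms a matching together with $f$. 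Your symmetric-difference $M\triangle(F\setminus\{e\})$ here is the path $m_1,f,m_2$ together with the isolated edge $e$; augmenting along it only shrinks $M$, so the alternating-path idea produces nothing new.

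This is a defect of the lemma's wording, not of the argument the algorithm needs. What the \FPT algorithm actually uses is that \emph{any} $2k-1$ edges chosen from $M$ contain at least one safe destination for $t$ (possibly $e$ itself), and your counting already proves this: at most $2k-2$ edges of $M$ are blocked by $U$, so every $(2k-1)$-subset of $M$ contains a free edge. You should simply drop the $e'\neq e$ clause rather than attempt to repair it.
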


\begin{proof}
Given that in the original solution, $t$ slides $b'$ steps, we know that $t$ ends up on one of the edges in $G^{b'}_t$. 
In fact, that edge either has both endpoints in $Y^{b'}_t$ or one endpoint in $Y^{b'}_t$ and the other in $Z^{b'}_t$. 
We also know that a single token can saturate at most $2$ vertices, thus $k - 1$ tokens can saturate at most $2k - 2$ vertices. 
This implies that if all tokens in $S$ besides the token $t$ end up incident to vertices in $G^i_t$, they can saturate at most $2k - 2$ vertices in that subgraph. 
The latter vertices can appear as both endpoints of a matching edge in $M$ or as one of the endpoints. 
In either case, there must exist at least one edge $e$ in $M$ that is not incident to any saturated vertices after all tokens in $S \setminus \{t\}$ move to their destinations. 
Since $M$ is a matching, it is safe to replace $e$ by $e'$ as the destination edge for the token $t$, as needed. 
\end{proof}

\begin{theorem}\label{thm:MatchingDiscoveryFPTkb}
The \mdis~problem  in the token sliding model parameterized by both the number of tokens $k$ and the budget $b$ is fixed-parameter tractable. 
\end{theorem}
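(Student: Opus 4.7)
The plan is to enumerate, for each token, the number of slides it makes and then branch over a small candidate set of destination edges, relying on Lemmas~\ref{lem:VertexCoverNeighbors} and~\ref{lem:MatchingYZ} to bound the branching factor by a function of $k$ alone.

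Specifically, for each token $t \in S$ and each $b' \in \{0, 1, \ldots, b\}$ I would precompute a candidate set $\mathrm{Cand}(t, b') \subseteq E(G^{b'}_t)$ of size $O(k^2)$ as follows. Using Edmonds' algorithm, compute a maximum matching $M^*$ in $G^{b'}_t$ in polynomial time. If $|M^*| \geq 2k - 1$, set $\mathrm{Cand}(t, b')$ to be any $2k - 1$ edges of $M^*$; Lemma~\ref{lem:MatchingYZ} guarantees that any solution in which $t$ uses $b'$ slides can be modified so that $t$'s destination lies in $\mathrm{Cand}(t, b')$. Otherwise $|M^*| \leq 2k - 2$, and taking both endpoints of the edges of $M^*$ yields a vertex cover $C$ of $G^{b'}_t$ with $|C| \leq 4k - 4$. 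For each $v \in C$, retain at most $2k$ of its incident edges (chosen arbitrarily); by iterated application of Lemma~\ref{lem:VertexCoverNeighbors}, any solution can be rerouted so that the destination of $t$ lies among these retained edges. Since every edge of $G^{b'}_t$ has at least one endpoint in $C$, we obtain $|\mathrm{Cand}(t, b')| \leq |C| \cdot 2k = O(k^2)$.

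Given these candidate sets, the algorithm enumerates all budget allocations $(b_1, \ldots, b_k)$ with $b_i \geq 0$ and $\sum_i b_i \leq b$, giving at most $\binom{b+k}{k}$ choices, and for each allocation it enumerates all $k$-tuples $(e_1, \ldots, e_k)$ with $e_i \in \mathrm{Cand}(t_i, b_i)$, accepting whenever $\{e_1, \ldots, e_k\}$ is a matching of size $k$. The total running time is bounded by $\binom{b+k}{k} \cdot O(k^2)^k \cdot \mathrm{poly}(n) = 2^{O(b + k \log k)} \cdot \mathrm{poly}(n)$, which is \FPT with respect to $k + b$. Soundness is immediate; for completeness, any yes-instance witness can be transformed into one respecting all candidate constraints by sequential rerouting.

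The main obstacle is justifying the sequential rerouting when multiple tokens simultaneously violate the candidate-set constraints. The key observation is that $\mathrm{Cand}(t_i, b_i)$ depends only on $t_i$ and $b_i$, not on the current destinations of the other tokens. Hence, starting from any witness solution $(\pi, b_1, \ldots, b_k)$, one can process tokens $t_1, t_2, \ldots, t_k$ one at a time and whenever $\pi(t_i) \notin \mathrm{Cand}(t_i, b_i)$, invoke the appropriate lemma to replace $\pi(t_i)$ by an edge of $\mathrm{Cand}(t_i, b_i)$ while keeping all other tokens fixed; each such replacement preserves feasibility and does not disturb previously corrected tokens, so after $k$ steps we obtain a witness respecting all candidate constraints, completing the proof.
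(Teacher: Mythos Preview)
Your proposal is correct and follows essentially the same approach as the paper: guess the per-token budget, build a bounded-size candidate set for each token via a maximum matching in $G^{b'}_t$ (using Lemma~\ref{lem:MatchingYZ} when the matching is large and Lemma~\ref{lem:VertexCoverNeighbors} via the vertex cover when it is small), and brute-force over all combinations. The only cosmetic differences are that the paper first guesses the subset $S'$ of moving tokens and computes the matching in $G^{r_i}_{t_i} - (S\setminus S')$, whereas you absorb this into the budget allocation (with $b_i=0$ meaning the token stays put) and work directly in $G^{b_i}_{t_i}$; neither change affects correctness, though you should state explicitly that $\mathrm{Cand}(t,0)$ is the singleton consisting of $t$'s original edge, since the definition of $G^{i}_t$ in the paper is only given for $i\geq 1$.
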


\begin{proof}
Let $(G, S, b)$ be an instance of the \mdis~problem, where $|S| = k$.
The algorithm proceeds as follows.
We first guess a set $S' \subseteq S$ of tokens that will slide form their original positions. 
If such a guess leaves any overloaded vertices we can ignore it and proceed to the next guess.
Note that this \textit{guessing} procedure requires $\mathcal{O}(2^k)$ time in the worst case.
Next, for each $r \in \{0, 1, \ldots, b\}$ and for a fixed $S'$, we guess a partition of $r$ over the tokens in $S'$.
In other words, we try all possible ways of distributing the budget $r$ over the tokens in $S'$. 

Now, for a fixed subset $S'$, a fixed $r$, and a fixed distribution $D$ of $r$ over the tokens of $S'$, let $r_i$ be the budget allocated to a token $t_i \in S'$ under $D$. 
The algorithm computes the sets $Y^{r_i}_{t_i}$ and $Z^{r_i}_{t_i}$ for each $t_i \in S'$. 
The next step is to produce a set of candidate target edges for each $t_i$, which we denote by $T_i$. 
To do so, we run a maximum matching algorithm in the graph $G^{r_i}_{t_i} - (S \setminus S')$. 
There are two cases to consider: 

\begin{itemize}
\item If, for $t_i \in S'$, the algorithm finds a maximum matching $M$ in $G^{r_i}_{t_i} - (S \setminus S')$ such that $|M| \ge 2k - 1$, then we set $T_i$ to any $2k - 1$ edges of $M$. 
\item If $|M| < 2k - 1$, then $|V(M)| < 4k - 2$ and $G^{r_i}_{t_i} - (S \setminus S')$ has a vertex cover of size at most $4k - 2$. 
When $|E(G^{r_i}_{t_i} - (S \setminus S'))| \leq (4k - 2)(2k + 2)$, we simply set $T_i = E(G^{r_i}_{t_i} - (S \setminus S'))$. 
Otherwise, we let $T_i$ include $M$ as well as at most $2k + 1$ distinct edges incident to every vertex of the vertex cover.   
\end{itemize}

Once the candidate sets have been constructed, the algorithm exhaustively checks whether it can reach a valid matching in the graph $G$, and if so, terminates with a yes. 
The algorithm terminates with a no if no valid matching is reached during this search. 

It is not hard to see that this algorithm runs in \FPT-time. 
It loops $b + 1$ times and in each iteration, it chooses one of $2^k$ subsets of the $k$ tokens to move. 
It also distributes the budget $r \le b$ on this subset, thus far having $\binom{b + k - 1}{k - 1}$ choices in total per iteration. 
Per one such choice, the algorithm runs up to $k$ times the maximum matching procedure in $\Oof(\sqrt{n} \cdot m)$ time. 
Additionally, per one such choice, it encounters up to $(2k - 1)^2$ target edges per moving token. 
Checking whether the constructed solution is valid can be performed in time $\Oof(k^2)$. 
The whole algorithm requires $(b + 1) \cdot \binom{b + k - 1}{k - 1} \cdot k \cdot \Oof(\sqrt{n} \cdot m) + (b + 1) \cdot \binom{b + k - 1}{k - 1} \cdot (2k - 1)^2 \cdot \Oof(k^2)$ time, which is $\Oof\big((b + 1) \cdot \binom{b + k - 1}{k - 1} \cdot k \cdot \Oof(\sqrt{n} \cdot m)\big)$.

The correctness of the algorithm follows from Lemmas~\ref{lem:VertexCoverNeighbors} and~\ref{lem:MatchingYZ} and the fact that we exhaustively try candidate edges per token.
In particular, if a solution exists, the algorithm must find a subset $S' \subseteq S$ of tokens that move, and the values $r_{t_i}$ for each $t_i \in S'$ that represents the number of slides performed by each token $t_i$.
Given Lemma~\ref{lem:MatchingYZ}, we know that for a certain $t_i$, if the size of a maximum matching~$M$ is more than $2k - 2$, then one edge $e \in M$ is a safe destination for~$t_i$.
If for a certain token $t_i$ the maximum matching algorithm returns a matching $M$ of size less than~$2k - 1$, then, by the definition of a maximum matching, we know that all edges are incident to the endpoints of $M$.
There are at most $4k - 2$ endpoints which we know the token $t_i$ is incident to one of them after applying the solution $\mathcal{S}$.
Let that vertex be $u$.
By Lemma~\ref{lem:VertexCoverNeighbors}, we know there exists one vertex $v$ such that $e = uv$ is a safe destination for $t_i$.
\end{proof}

We now show that the same algorithm is also a fixed-parameter tractable algorithm for  parameter~$k$ alone by proving that $k$ upper-bounds the parameter $b$.
For that, we first recall the following optimality criterion.

\begin{theorem}[Berge~\cite{berge1957two}]\label{thm:Berge}
A matching $M$ in a graph $G$ is a maximum matching if and only if~$G$ has no $M$-augmenting path, i.e., a path that starts and ends on free (unsaturated) vertices, and alternates between edges in $M$ and edges not in $M$.  
\end{theorem}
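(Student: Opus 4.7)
The plan is to prove both directions by contrapositive, exploiting the symmetric difference $M \triangle M'$ of two matchings as the central object.

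For the forward direction (if $M$ is maximum, then no $M$-augmenting path exists), I would argue the contrapositive. Suppose $P$ is an $M$-augmenting path with endpoints $u$ and $v$, both unsaturated by $M$. Then $P$ alternates between edges of $E(G) \setminus M$ and edges of $M$, beginning and ending with non-matching edges. Consider $M' = M \triangle E(P)$. Since the endpoints of $P$ are $M$-unsaturated, and every internal vertex of $P$ has one $M$-edge swapped for one non-$M$-edge, $M'$ is still a matching. Moreover, $P$ has one more edge outside $M$ than inside $M$, so $|M'| = |M| + 1$, contradicting maximality of $M$.

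For the reverse direction, assume $M$ is not maximum and let $M'$ be a matching with $|M'| > |M|$; I would show an $M$-augmenting path exists. Form the edge set $F = M \triangle M'$ and consider the subgraph $H = (V(G), F)$. Every vertex of $H$ has degree at most $2$, since it is incident to at most one edge of $M$ and one of $M'$. Hence each connected component of $H$ is either an isolated vertex, a simple path, or a simple cycle. Along any such path or cycle the edges alternate between $M$ and $M'$ (two consecutive edges of the same matching would share an endpoint of degree $\geq 2$ within that matching, which is impossible). Consequently every cycle component has even length and contributes equally many edges from $M$ and $M'$.

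Since $|M'| > |M|$, we have $|F \cap M'| > |F \cap M|$ (the edges in $M \cap M'$ cancel in the symmetric difference). By the cycle-balance observation, the surplus of $M'$-edges must come from at least one path component $P$ that contains strictly more edges of $M'$ than of $M$. A path whose edges alternate between $M$ and $M'$ has this property precisely when both of its end-edges lie in $M'$; then its two endpoints are unsaturated by $M$ (else an $M$-edge would extend the path further), so $P$ is an $M$-augmenting path, completing the proof.

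The main obstacle — really the only delicate step — is the structural claim that the components of $H = (V(G), M \triangle M')$ are paths and even cycles with strictly alternating edge types; once this is in hand, the counting argument that isolates an augmenting path is routine. I would therefore give that structural observation explicitly and in full rather than treating it as folklore, since everything else follows immediately.
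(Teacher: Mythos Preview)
Your proof is correct and is in fact the standard textbook argument for Berge's theorem. Note, however, that the paper does not give its own proof of this statement: it is quoted as a classical result with a citation to Berge~\cite{berge1957two} and is used only as a tool (in the proof of \Cref{lem:MatchingDiscoverykUpperBound}). So there is nothing in the paper to compare against; your write-up simply supplies the well-known proof that the authors chose to cite rather than reproduce.
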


\begin{lemma}\label{lem:MatchingDiscoverykUpperBound}
In any yes-instance $(G, S, b)$ of the \textsc{Matching Discovery} problem in the sliding model, $b$ can be upper bounded by $2(k^2 + k)$, where $k = |S|$.
\end{lemma}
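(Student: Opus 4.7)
The plan is to show that every yes-instance admits a solution using at most $2(k^2+k)$ slides, from which the lemma follows immediately. I will fix a solution $\mathcal{S}$ that minimizes the total number of slides, reaching some matching $M$ of size $k$, and prove that each of the $k$ tokens performs at most $2(k+1)$ slides in $\mathcal{S}$; summing over all tokens then gives the bound $2k(k+1)=2(k^2+k)$.

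First, a standard shortcutting observation implies that, in a minimum-slide solution, each token's walk $e_0, e_1, \ldots, e_{d_t}$ through $L(G)$ uses pairwise distinct edges: any repetition $e_i=e_j$ would let us excise the sub-walk between steps $i$ and $j$ and produce a valid, strictly shorter solution. Next, suppose for contradiction that some token $t$ has $d_t > 2(k+1)$, and let $f_t=e_{d_t}$ be its terminal edge. Set $V^* = V(M) \setminus V(f_t)$, so that $|V^*| \le 2(k-1)$. I will argue that for every intermediate index $0 < i < d_t$, the edge $e_i$ must be incident to some vertex of $V^*$: otherwise $M^{(i)} = (M \setminus \{f_t\}) \cup \{e_i\}$ is a matching of size $k$, and truncating $t$'s walk at $e_i$ (while keeping the other tokens' trajectories unchanged) yields a valid solution with $d_t - i > 0$ fewer slides, contradicting minimality of $\mathcal{S}$.

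Having forced more than $2k+1$ walk edges into the neighborhood of the $(2k-2)$-vertex set $V^*$, I will apply either Lemma~\ref{lem:VertexCoverNeighbors} or Lemma~\ref{lem:MatchingYZ} to the auxiliary graph $G^{d_t}_t$ (which contains $f_t$ together with all edges reachable from $t$'s starting position in exactly $d_t$ slides) to extract yet another shortcut: concretely, the walk exhibits either a vertex with many walk-neighbors or a sufficiently large sub-matching, at least one element of which constitutes a safe alternative destination for $t$ realizable with fewer slides. This produces the final contradiction to the minimality of $\mathcal{S}$, completing the argument.

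The main obstacle I expect is verifying validity of the modified sliding sequence when we ``stop'' $t$ at $e_i$: we must ensure that, in the original trajectory restricted to the other tokens, no token slides onto $e_i$ after time $i$, as this would produce a collision once $t$ is parked there. I anticipate handling this by reordering slides so that $t$'s truncated walk is performed before (or after) those of the other tokens, and by invoking the minimality of $\mathcal{S}$ to rule out pathological overlaps. This reordering bookkeeping, together with the careful choice of which intermediate $e_i$ to replace $f_t$ by, is the most delicate part of the proof.
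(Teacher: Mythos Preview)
Your approach differs substantially from the paper's, and the core of it---Step~6---does not go through. You correctly observe (modulo the collision issue you flag) that if an intermediate $e_i$ is not incident to $V^* = V(M)\setminus V(f_t)$, then stopping $t$ at $e_i$ yields a cheaper solution; hence all of $e_0,\ldots,e_{d_t-1}$ touch $V^*$. But this alone does \emph{not} bound $d_t$: a set of $2k-2$ vertices can be incident to arbitrarily many distinct edges of $G$, so a shortest path in $L(G)$ confined to their neighborhood can still be long. Your appeal to Lemmas~\ref{lem:VertexCoverNeighbors} and~\ref{lem:MatchingYZ} is misplaced: both lemmas produce an alternative destination lying inside $G^{b'}_t$, i.e., at the \emph{same} slide-distance $b'$ from $t$'s start, so invoking them at $b'=d_t$ cannot yield ``fewer slides.'' The pigeonhole you sketch is also far too weak---$2k+2$ walk edges spread over $2k-2$ vertices gives roughly two edges per vertex, nowhere near the $2k+1$ threshold of Lemma~\ref{lem:VertexCoverNeighbors}, and extracting a matching of size $2k-1$ from the walk edges would require $d_t$ on the order of $4k$, not $2k+2$. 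Even if those thresholds were met, the lemmas would only let you swap $f_t$ for another edge at distance $d_t$, not shorten the walk. Finally, both the distinct-edge shortcutting and the ``park $t$ at $e_i$'' step need a collision argument you have not supplied; simple reordering can fail when two tokens' trajectories pass through each other's intermediate edges.

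The paper bypasses per-token bounds entirely. It inducts on $|S|$: delete an arbitrary token $t$, solve the resulting $i$-token yes-instance in at most $2(i^2+i)$ slides by induction, note that the $i$-matching obtained is not maximum (since the $(i{+}1)$-token instance is a yes-instance), invoke Berge's theorem to get an augmenting path of length at most $2i+1$, and then re-insert $t$ and augment along this path at an additional cost of at most $3i+2$ slides, giving $2((i{+}1)^2+(i{+}1))$ in total. This sidesteps both the collision bookkeeping and any need to control individual tokens.
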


\begin{proof}
We prove this lemma by induction on the number of tokens in an instance $(G, S, b)$ of \textsc{Matching Discovery} with $n = |V(G)|$ and $m = |E(G)|$.
Let $\mathcal{P}(i)$ be the proposition that if $(G, S, b)$ is a yes-instance of the \textsc{Matching Discovery} problem with $|S| \le i$, $|V(G)| = n$, and $|E(G)| = m$, then there exists a solution to $(G, S, b)$ using at most $2(i^2 + i)$ slides. 
The statement holds trivially for $i = 1$.
Hence, assume $\mathcal{P}(i)$ is true for an arbitrary integer $i > 1$. We show that this implies that $\mathcal{P}(i + 1)$ is also true.

Given a yes-instance $(G, S, b)$ with $|V(G)| = n$, $|E(G)| = m$ and $|S| = i + 1$, we form an instance $(G, S \setminus \{t\}, b)$ by deleting an arbitrary token $t$ from $S$.
$(G, S \setminus \{t\}, b)$ is clearly a yes-instance and there exists a solution using at most $2(i^2 + i)$ slides by the induction hypothesis.

Given that $(G, S, b)$ is also a yes-instance, we know that the matching in  $(G, S \setminus \{t\}, b)$ is not maximum and, by Theorem \ref{thm:Berge}, an augmenting path $P$ of  $r \le 2i + 1$ edges exists in $G$ after solving $(G, S \setminus \{t\}, b)$. 
Let $\sigma$ be a feasible solution to $(G, S \setminus \{t\}, b)$, i.e.,  $\sigma$ is a sequence of slides transforming~$S$ to a matching. 
After applying $\sigma$, token $t$ in $(G, S, b)$ is either on some edge of $P$ or not.
In the former case, at most $i + 1$ slides are needed in addition to the $2(i^2 + i)$ slides to get to a valid matching in $(G, S, b)$, thus $b \leq 2(i^2 + i) + i + 1$.
In the latter case, the token can be at most $2i + 1$ edges away from the path $P$ and in that case, at most $3i + 2$ other slides are needed to get to a valid matching in $(G, S, b)$, thus $b \leq 2(i^2 + i) + 3i + 2$.
In both cases, we form a solution $\sigma'$ to $(G, S, b)$ using at most $2\big((i+1)^2 + (i + 1)\big)$ slides. 
Since the choice of $n$ and $m$ was arbitrary, the statement holds for all yes-instances for any $n$ and $m$. 
\end{proof}

Combining Theorem \ref{thm:MatchingDiscoveryFPTkb} with Lemma \ref{lem:MatchingDiscoverykUpperBound} we get the following result.

\begin{corollary}\label{cor:MatchingDiscoveryFPTk}
The \mdis~problem in the token sliding model is fixed-parameter tractable with respect to parameter $k$. 
\end{corollary}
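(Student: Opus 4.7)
The plan is to combine the two preceding results in a direct way. By Theorem~\ref{thm:MatchingDiscoveryFPTkb}, the \mdis\ problem admits an algorithm running in time $f(k+b)\cdot \mathrm{poly}(n,m)$ for some computable function $f$. So to obtain fixed-parameter tractability for parameter $k$ alone, it suffices to reduce to an equivalent instance in which $b$ is bounded by a function of $k$, and then invoke this algorithm as a black box.

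The key observation is supplied by Lemma~\ref{lem:MatchingDiscoverykUpperBound}: whenever $(G,S,b)$ is a yes-instance, there is always a solution that uses at most $2(k^2+k)$ slides. Consequently, given an arbitrary input instance $(G,S,b)$, I would first check whether $b \geq 2(k^2+k)$. If so, I would replace $b$ by $b' := \min\{b, 2(k^2+k)\}$; this preserves equivalence because any discovery sequence of length at most $b'$ is also of length at most $b$, and conversely any yes-instance with budget $b$ admits a short sequence of length at most $b'$ by Lemma~\ref{lem:MatchingDiscoverykUpperBound}. If $b < 2(k^2+k)$, I leave the instance unchanged. In either case, the new budget satisfies $b' \leq 2(k^2+k)$, so $k+b' = \Oof(k^2)$.

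Now I would run the algorithm of Theorem~\ref{thm:MatchingDiscoveryFPTkb} on $(G,S,b')$. Its running time is $f(k+b')\cdot \mathrm{poly}(n,m) = f(\Oof(k^2))\cdot \mathrm{poly}(n,m) = g(k)\cdot \mathrm{poly}(n,m)$ for the computable function $g(k) := f(k + 2(k^2+k))$, which is exactly the form required for an \FPT\ algorithm with respect to~$k$. There is essentially no obstacle here: both ingredients have already been established, and the argument is the standard ``bounding one parameter by a function of the other'' trick, applied in the easy direction since $b$ is combinatorially controlled by $k$ via Berge's theorem and the inductive argument in Lemma~\ref{lem:MatchingDiscoverykUpperBound}.
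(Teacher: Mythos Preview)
Your proposal is correct and follows exactly the same approach as the paper: the paper's proof simply states that combining Theorem~\ref{thm:MatchingDiscoveryFPTkb} with Lemma~\ref{lem:MatchingDiscoverykUpperBound} yields the result, and you have spelled out precisely this combination.
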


\section{Vertex/edge cuts}\label{sec:separator}

A \emph{vertex cut} in $G$ between $s$ and $t$ is a set of vertices $C \subseteq V(G)$ such that every $s$-$t$-path contains a vertex of $C$. 
Likewise, an \emph{edge cut} in a graph $G$ between two of its vertices $s$ and $t$ is a set of edges $C \subseteq E(G)$ such that every $s$-$t$-path contains an edge of $C$. 
In the \textsc{Vertex Cut} (resp.\ \textsc{Edge Cut}) problem we are given a graph $G$, vertices $s,t$ and an integer~$k$ and the goal is to compute a vertex cut (resp.\ edge cut) of size~$k$.

\subsection{Rainbow vertex/edge cut}

Given an edge-colored graph $(G, \phi)$ with vertices $s,t \in V(G)$, the \textsc{Rainbow Edge Cut} problem asks whether there exists an edge cut $C \subseteq E(G)$ separating~$s$ and $t$ such that all edges in $C$ have pairwise different colors. The \textsc{Rainbow Vertex Cut} problem can be defined analogously where instead of $C \subseteq E(G)$ we look for a subset of vertices $C \subseteq V(G)$ separating $s$ from $t$. 
The \textsc{Rainbow Edge Cut} problem is known to be \NP-complete~\cite[Theorem 5.5]{rainbowcuts}.
We start by showing that the problem remains \NP-complete on planar graphs.
\begin{theorem}
\label{thm:rainbowcut}
    The \textsc{Rainbow Edge Cut} problem is \NP-complete on planar graphs.
\end{theorem}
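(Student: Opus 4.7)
Membership in \NP\ is immediate: given a candidate set of edges, we can check in polynomial time that it separates $s$ from $t$ (e.g., by BFS from $s$ after removal) and that all of its edges carry pairwise distinct colors.

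For \NP-hardness, my plan is to reduce from a planar \NP-complete problem, with \textsc{Planar 3-SAT} being the most natural candidate. Given a planar 3-SAT formula $\varphi$ together with a planar embedding of its variable--clause incidence graph, I would build an edge-colored planar graph $G_\varphi$ with designated vertices $s,t$ via two types of gadgets. For each variable $x_i$, a \emph{variable gadget} introduces two ``channels'' between a pair of interface vertices; these channels are built so that any $s$-$t$ cut must choose to block either the ``true'' channel (committing to $x_i=\mathsf{false}$) or the ``false'' channel (committing to $x_i=\mathsf{true}$), with the corresponding colors $T_i$ and $F_i$ used in a way that records the choice. For each clause $C_j=(\ell_{j,1}\vee\ell_{j,2}\vee\ell_{j,3})$, a \emph{clause gadget} consists of three parallel sub-paths, one per literal, where the distinguishing edge on the sub-path for literal $\ell_{j,k}$ reuses the color $T_i$ (if $\ell_{j,k}=x_i$) or $F_i$ (if $\ell_{j,k}=\bar x_i$) drawn from the corresponding variable gadget.

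The correctness argument would then proceed in two directions. Given a satisfying assignment for $\varphi$, I would construct a rainbow cut by blocking the ``falsified'' channel of each variable gadget and, in each clause gadget, by cutting the sub-paths corresponding to the literals made false by the assignment; the rainbow property holds because each color is used in exactly one variable gadget and in exactly the sub-paths of clauses where it denotes a falsified literal, so any satisfied literal $\ell$ leaves at least one sub-path per clause uncut (hence the clause gadget can be cut with the two remaining colors without conflict). Conversely, a rainbow cut of $G_\varphi$ assigns each $x_i$ a truth value based on which channel of its variable gadget is blocked; since every clause gadget must be cut with three distinct colors, none of which may coincide with a color already consumed in its variable gadget, at least one of the clause's literals must be true under this assignment.

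For planarity, the entire construction follows the given planar embedding of $\varphi$: variable gadgets are placed at the locations of variables, clause gadgets at the locations of clauses, and the connecting pipes between them route along the incidence edges; because the incidence graph is planar, no new crossings are introduced by these connections, and internally each gadget is a small planar widget by design. The hardest step in carrying out this plan is the explicit design of a clause gadget that (i) is planar, (ii) allows exactly the three ``local'' cut choices to interact cleanly with the global color palette, and (iii) does not inadvertently introduce spurious rainbow cuts via its internal bookkeeping colors. If such a direct gadget construction proves delicate, a fallback strategy is to start from the non-planar reduction of~\cite{rainbowcuts} and replace each edge crossing in any drawing by a planar \emph{crossing gadget} that faithfully transmits two independently colored channels across one another with only constantly many fresh colors per crossing; verifying that this local replacement preserves rainbow cuts in both directions would then yield planar \NP-hardness.
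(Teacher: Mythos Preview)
What you have written is a plan, not a proof. The gadgets are never constructed, and you yourself flag the clause gadget design as ``the hardest step'' that remains open. As submitted, this does not establish the theorem.

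More importantly, the outline as stated has a structural gap. In a rainbow edge cut, each color may appear \emph{at most once globally}. Your scheme assigns the single color $T_i$ (resp.\ $F_i$) to the distinguishing edge of \emph{every} clause sub-path in which the literal $x_i$ (resp.\ $\bar x_i$) occurs. If a literal occurs in several clauses, you would need to cut several sub-paths all labeled with that same color, which is impossible in a rainbow cut; your forward direction (``cut the sub-paths corresponding to the literals made false'') already breaks here whenever a falsified literal has multiple occurrences. The standard fix in Planar 3-SAT reductions is to give each variable gadget one output per occurrence and to propagate consistency among the occurrences inside the gadget, but you do not describe any such mechanism, and doing so while maintaining the rainbow constraint and planarity is exactly the nontrivial part. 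The fallback of planarizing the reduction of~\cite{rainbowcuts} via crossing gadgets is likewise only asserted; designing a crossing gadget that preserves rainbow cuts in both directions with only local fresh colors is not routine and would itself require a full argument.

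For comparison, the paper takes a completely different and much shorter route: it reduces from \textsc{Rainbow Matching} on properly edge-colored paths. The key idea is to take $\kappa$ parallel copies of the path between $s$ and $t$, add a black $s$-$t$ edge so that black is forced into every rainbow cut and hence no other black edge can be used, and insert small colored ``ladder'' connections between consecutive copies. This forces any rainbow cut to select exactly one original edge from each copy, and the ladder colors enforce that the chosen edges are pairwise non-adjacent in the path, i.e., form a rainbow matching of size~$\kappa$. The construction is explicitly planar by design, and the rainbow property is inherited from the source problem rather than engineered via SAT gadgets.
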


\begin{proof}
Containment in $\NP$ is clear as \textsc{Rainbow Edge Cut} on general graphs is in $\NP$. Hence we focus on the hardness proof.
We present a reduction from {\sc Rainbow Matching} on paths.
Let~$(P, \kappa)$ be an instance of {\sc Rainbow Matching} where $P$ is a path on $n$ vertices denoted by $v_1, \dots v_n$ and the edges are colored with colors from a color set $\Cmc$. 

We construct an instance $(G, \psi : E(G) \rightarrow \Cmc')$ of \textsc{Rainbow $s$-$t$-Cut} as follows. 
The new color set is $\Cmc' = \Cmc \cup \{\mathsf{black}\} \cup \{c_i \mid i \leq n-2\}$, that is, $\Cmc'$ uses the colors from $\Cmc$ as well as $n-2$ fresh colors and the color black. 

\begin{figure}[ht]
    \centering
    \begin{tikzpicture}

    \def\n{7}
    \def\k{3}
    
    \foreach \i in {1, ..., \n} {
      \foreach \j in {1, ..., \k} {
	    \fill (\i*1.5, -2*\j+1) circle (2pt) node (v_\j_\i) {} node[above] {\tiny $v_{\j,\i}$};
	   }
	}

    \fill (v_2_1) ++(-2,0) circle (3pt) node (s) {} node[left=1mm] {$s$};
    \fill (v_2_\n) ++( 2,0) circle (3pt) node (t) {} node[right=1mm] {$t$};

    \foreach \j in {1, ..., \k} {
      \foreach[count=\i] \c in {blue, red, blue, green, blue, red} {
        \draw[\c] (v_\j_\i) -- (v_\j_\the\numexpr\i+1);
      }
      
      \draw (s) -- (v_\j_1);
      \draw (v_\j_\n) -- (t);
    }
    
    \foreach \i in {1, ..., \numexpr \n - 1} {
      \foreach \j in {1,...,\numexpr \k - 1} {
       \fill (\i*1.5 + 0.75, -2*\j) circle (2pt) node (u_\j_\i) {} node[left] {\tiny $u_{\j,\i}$};
       \draw (v_\j_\i) -- (u_\j_\i);
      }
    }

    \foreach \j in {1, ..., \numexpr \k - 1} {
      \foreach[count=\i] \c in {orange, lime, cyan, purple, yellow} {
        \draw[\c] (u_\j_\i) -- (v_\j_\the\numexpr\i+1);
        \draw[\c] (u_\j_\i) -- (v_\the\numexpr\j+1_\the\numexpr\i+2);
      }
    }

    \foreach \j in {1, ..., \numexpr \k - 1} {
      \draw (s) -- (u_\the\numexpr\j_1);
      \draw (u_\the\numexpr\j_\the\numexpr\n-1) -- (t);
    }

    \draw[semithick] (s) .. controls (1.5, 2) and (1.5*\n, 2) .. (t);

    \draw[dashed, gray] (2,.5) .. controls (2.1, -1.2) and (2.6, -1.5) .. (3,-2.1) .. controls (4, -2.8) and (6.5, -1.7) .. (7.5,-4) .. controls (8, -4.5) and (10, -4.5) .. (10, -5.5);

    \end{tikzpicture}
    \caption{Illustration of the hardness reduction for \textsc{Rainbow $s$-$t$-Cut} on planar graphs.}
    \label{fig:rainbowcut}
\end{figure}
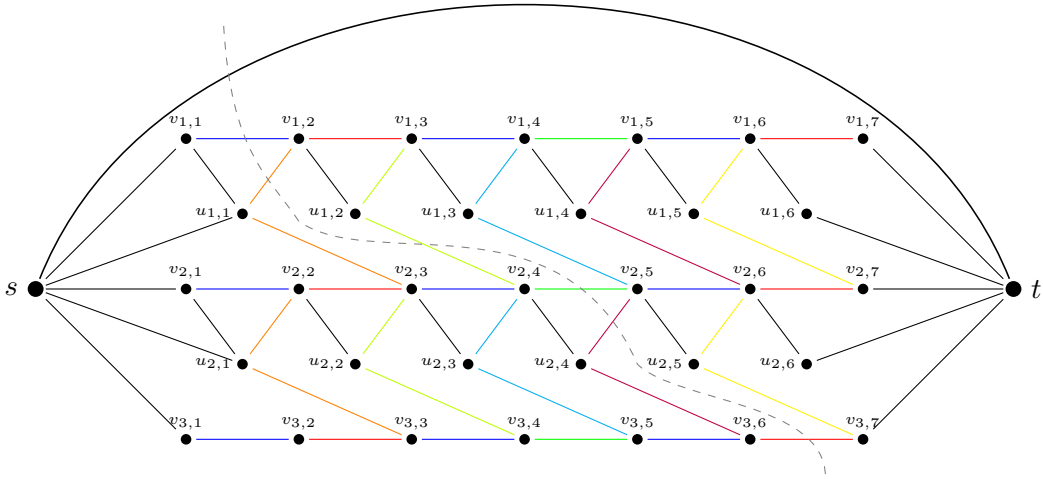

Let us describe the construction of $G$ in detail; see \Cref{fig:rainbowcut} for an illustration.
In the first step, $G$ consists of $\kappa$ disjoint copies of $P$, which we call $P_1, \dots, P_\kappa$. Let the vertices of $P_j$ be called $v_{j,1} \dots v_{j, n}$. Additionally, we add two fresh vertices $s$ and $t$.
For every $j < \kappa$, insert a set $L_j$ of $n-1$ fresh vertices, and call them $u_{j,1}, \dots, u_{j,n-1}$. Hence, $V(G) = \bigcup_{j \leq \kappa} V(P_j) \cup \bigcup_{j < \kappa} V(L_j) \cup \{s,t\}$.

Now we connect $s$ and $t$ with a black edge, which enforces that this black edge must be part of every (rainbow) $s$-$t$-cut. Hence, any other black edge may not be part of any rainbow $s$-$t$-cut of~$G$. 
We connect $s$ and $t$ to the vertices in $P_j$ and $L_j$ as follows. For every $j \leq \kappa$ we insert the edges $\{s, v_{j, 1}\}$ and $\{v_{j,n}, t\}$, which are colored black. 
Likewise, for every $j < \kappa$ we insert the edges $\{s, u_{j,1}\}$ and $\{u_{j,n-1}, t\}$, which are also colored black.
Finally, for every $j < \kappa$ and $i \leq n-2$, we insert the edge $\{v_{j,i}, u_{j,i}\}$ which is colored black and the edges $\{u_{j,i}, v_{j, i+1}\}$ and $\{u_{j,i}, v_{j+1, i+2}\}$, which are colored~$c_{i}$.
This finishes the construction. 

We claim that $P$ has a rainbow matching of size $\kappa$ if and only if~$G$ admits a rainbow $s$-$t$-cut.
Assume that $P$ has a rainbow matching $M = \{e_1, \dots, e_\kappa\}$ of size $\kappa$. Without loss of generality, we assume that the edges in $M$ are ordered with respect to the $v_i$, i.e., if $e_i = \{v_{\ell_i}, v_{\ell_{i+1}}\}$ and $i < j$, then $\ell_i < \ell_j$.
We claim that the set $C$ that consists of the black edge $\{s,t\}$, the copy of $e_i$ in $P_i$, and the obvious edges connecting the $P_i$ and $L_j$ is a rainbow $s$-$t$-cut of $G$. To be precise, we have
\[
C = \{s,t\} \cup \{\{v_{i, \ell_i}, v_{i, \ell_i+1}\}, \{u_{i, \ell_i}, v_{i, \ell_i+1}\} \mid i \leq \kappa\} \cup \bigcup_{i \leq \kappa} \{\{u_{i, j}, v_{i, j+2}\} \mid \ell_i < j \leq \ell_{i+1}-2\}.
\]
\vspace{-5pt}

Observe that $C$ is a cut by construction (see \Cref{fig:rainbowcut}), and that no two edges in $C$ have the same color, as $M$ is a rainbow matching, and for every $j < \kappa$ and $i \leq n-2$ at most one of the edges $\{v_{j,i}, u_{j_i}\}$ and $\{u_{j_i}, v_{j+1, i+2}\}$ is contained in $C$.

Now assume that $G$ admits a rainbow $s$-$t$-cut. As $s$ and $t$ are directly connected, every rainbow $s$-$t$-cut $C$ for $G$ must contain $\{s,t\}$. Furthermore, by construction $C$ contains exactly one edge from every $P_j$, say the edge $\{v_{j,\ell_i}, v_{j,\ell_{i+1}}\}$, as no other black edge is part of $C$.
We claim that $M = \{\{v_{\ell_i}, v_{\ell_{i+1}}\} \mid \{v_{j,\ell_i}, v_{j,\ell_{i+1}}\} \in C \text{ for some } j \leq \kappa\}$ is a rainbow matching of $P$. Obviously, $M$ is rainbow, as $C$ is rainbow. To show that $M$ is indeed a matching, observe that for all $\ell_i \neq \ell_j$ we have $|\ell_i - \ell_j| \geq 2$, that is, $M$ does not contain two (copies of) consecutive edges of $P$. To see this, assume for the sake of contradiction that there are $i \neq j$ such that $|\ell_i - \ell_j| \leq 1$. Let $j_1 < j_2$ be such that $\{v_{j_1, \ell_i}, v_{j_1, \ell_{i+1}}\}$ and $\{v_{j_2, \ell_j}, v_{j_2, \ell_{j+1}}\}$ are contained in $C$. By construction, $C$ must also contain $\{u_{j_1, \ell_i}, v_{j_1, \ell_{i+1}}\}$ and can hence not contain $\{u_{j_1, \ell_i}, v_{j_1+1, \ell_{i}+2}\}$, as they share the same color. This however implies that $\{v_{j_2, \ell_j}, v_{j_2, \ell_{i+1}}\}$ cannot be contained in~$C$, a contradiction. This finishes the proof.
\end{proof}

\vspace{-5pt}
Observe that we can reuse the ideas of the previous proof to show the hardness of the \textsc{Rainbow Vertex Cut} problem.
In fact, we can subdivide every edge (and color the subdivision vertex with the same color as the edge), and color every other vertex black.
Then all observations translate one-to-one and we obtain the following corollary.
\begin{corollary}
\label{cor:rainbowcut}
    The \textsc{Rainbow Vertex Cut} problem is \NP-complete on planar graphs.
\end{corollary}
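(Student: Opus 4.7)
The plan is to adapt the reduction from the proof of \Cref{thm:rainbowcut} by transforming the edge-colored graph $G$ constructed there into a vertex-colored graph $G'$. Concretely, I would subdivide every edge $e$ of $G$ by replacing it with a length-two path through a new vertex $w_e$, color $w_e$ with $\psi(e)$, and color every non-terminal original vertex of $G$ with one fixed fresh color $\mathsf{black}^*$ (and $s,t$ with two more fresh colors, although since $s,t$ cannot lie in any $s$-$t$-vertex cut this is immaterial). Containment in $\NP$ is immediate, so the task reduces to showing that $P$ admits a rainbow matching of size~$\kappa$ if and only if $G'$ admits a rainbow $s$-$t$-vertex cut.

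The pivotal observation is that in $G$ the edge $\{s,t\}$ is a direct black edge, so after subdivision the vertex $w_{\{s,t\}}$ is the unique internal vertex of the length-two path $s$-$w_{\{s,t\}}$-$t$; every $s$-$t$-vertex cut in $G'$ must therefore contain $w_{\{s,t\}}$. Since $w_{\{s,t\}}$ is colored $\mathsf{black}$ (the color inherited from $\{s,t\}$) and all remaining original vertices of $G$ are colored $\mathsf{black}^*$, the rainbow condition forbids any original vertex from appearing in the cut as soon as one further vertex of color $\mathsf{black}^*$ would be included; moreover, $\mathsf{black}\neq\mathsf{black}^*$ can be arranged so that no further black-type collision arises. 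Consequently any rainbow $s$-$t$-vertex cut of $G'$ consists of $w_{\{s,t\}}$ together with a collection of subdivision vertices $w_e$, whose colors agree with the colors of the corresponding edges $e$ in $G$.

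This establishes a bijection between rainbow $s$-$t$-vertex cuts of $G'$ and rainbow $s$-$t$-edge cuts of $G$ that contain the direct edge $\{s,t\}$, because a set $\{w_{e_1},\dots,w_{e_j}\}\cup\{w_{\{s,t\}}\}$ is a vertex cut of $G'$ separating $s$ from $t$ precisely when $\{e_1,\dots,e_j,\{s,t\}\}$ is an edge cut of $G$ separating $s$ from $t$, and the color multisets coincide. The forward direction of the equivalence with \rmat\ on paths then follows by composing with the forward direction of \Cref{thm:rainbowcut}; the backward direction dually extracts a rainbow matching in $P$ from a rainbow vertex cut of $G'$ via the associated edge cut of $G$.

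Planarity is preserved because $G$ is planar (as argued in \Cref{thm:rainbowcut}) and subdividing edges preserves planarity, so $G'$ is planar as required. I do not expect a genuine obstacle here: the only point needing care is to pick $\mathsf{black}^*\neq\mathsf{black}$ (and fresh colors for $s,t$) so that the color count on original vertices and the forced subdivision vertex $w_{\{s,t\}}$ do not clash, and to verify that the pass-through of colors along subdivided edges preserves the rainbow property in both directions of the equivalence; both are bookkeeping once the structural correspondence above is in place.
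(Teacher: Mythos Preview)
Your approach is essentially the paper's: subdivide every edge, color each subdivision vertex with the original edge's color, and give all original vertices a single common color so that the forced inclusion of $w_{\{s,t\}}$ in every $s$-$t$-vertex cut rules them out. However, your choice of a \emph{fresh} color $\mathsf{black}^*$ for the original vertices, distinct from the color $\mathsf{black}$ carried by $w_{\{s,t\}}$, breaks the key step. With that coloring the rainbow constraint permits \emph{one} original vertex in the cut (only a second one would collide), so your sentence ``Consequently any rainbow $s$-$t$-vertex cut of $G'$ consists of $w_{\{s,t\}}$ together with a collection of subdivision vertices $w_e$'' does not follow from what precedes it. And this matters: including an original vertex such as some $u_{j,i}$ simultaneously severs several edges of $G$, in particular two edges of the same color $c_i$, so such a cut does not translate back to a rainbow edge cut of $G$, and the bijection you assert simply does not cover this case. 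You would need a separate argument that a single original vertex never helps, which you have not given.

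The fix is precisely what the paper does: color the original vertices with the \emph{same} color $\mathsf{black}$ as the edge $\{s,t\}$, rather than a fresh $\mathsf{black}^*$. Then the mandatory presence of the $\mathsf{black}$-colored vertex $w_{\{s,t\}}$ in any $s$-$t$-vertex cut immediately excludes every original vertex (and every subdivision vertex coming from a black edge), and your correspondence between rainbow vertex cuts of $G'$ and rainbow edge cuts of $G$ goes through verbatim.
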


\begin{theorem}\label{thm:rainbow-separator}
    The \textsc{Weighted Rainbow Vertex/Edge Cut} problem is fixed-parameter tractable when parameterized by~$k$.
\end{theorem}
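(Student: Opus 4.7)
The plan is to reduce the rainbow condition to a \emph{colorful} condition via color coding and then tackle the resulting colorful-cut subproblem.

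First, I would apply a standard color-coding reduction. Let $\Cmc$ be the original color palette and use a $(|\Cmc|,k)$-perfect hash family $\mathcal{H}$ of size $e^{k}k^{O(\log k)}\log|\Cmc|$ of Naor--Schulman--Srinivasan. For each $h\in\mathcal{H}$, replace $\phi$ by $h\circ\phi$, obtaining a coloring with palette $[k]$. By the defining property of perfect hash families, any $k$-element set carrying pairwise distinct original colors is injectively mapped into $[k]$ by some $h\in\mathcal{H}$; so if an optimal rainbow $k$-cut $C^{*}$ exists, then under at least one $h\in\mathcal{H}$ the cut $C^{*}$ becomes \emph{colorful}, i.e.,\ it uses each of the $k$ new colors exactly once. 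It therefore suffices to solve, for each $h\in\mathcal{H}$, the following \textsc{Colorful Cut} subproblem: find a minimum-weight $s$-$t$-cut of size exactly $k$ that uses every color in $[k]$ exactly once. Returning the overall best answer preserves FPT-ness, since the outer loop contributes only the factor $|\mathcal{H}|=2^{O(k)}\log|\Cmc|$.

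Second, I would solve the \textsc{Colorful Cut} subproblem by a bounded-depth branching algorithm interleaved with max-flow computations. The idea is to process one color class at a time: having already committed to a partial colorful set $X$, compute a minimum $s$-$t$-cut in the residual graph obtained by deleting $X$ and restricting attention to as-yet-unused colors, and then branch on the choice of the next color-class representative. To bound the branching factor in terms of $k$, candidate elements are pruned to those lying in the min-cut structure of the residual graph, for example on augmenting paths or in the flow-saturated subgraph. The vertex-cut case reduces to the edge-cut case through the standard vertex-splitting gadget, in which each colored vertex $v$ is replaced by an edge $v^{-}v^{+}$ inheriting the color and weight of $v$, so that rainbow (and colorful) cuts correspond to each other across the reduction.

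The main obstacle is the \textsc{Colorful Cut} subproblem itself. Unlike the rainbow spanning tree setting of \Cref{thm:rmst-in-p}, where rainbow bases arise as common independent sets of two matroids and weighted matroid intersection solves the problem in polynomial time, $s$-$t$-cuts of a fixed size do not form the independent sets of a matroid, so matroid-intersection techniques are unavailable. Overcoming this requires exploiting the Picard--Queyranne lattice of minimum $s$-$t$-cuts together with Marx's important-separator framework in order to argue that at every branching step only $f(k)$ candidate elements need be tried. Once this is established, the branching tree has size $f(k)$ and depth at most $k$, yielding an overall running time of $f(k)\cdot n^{O(1)}$ and proving that \textsc{Weighted Rainbow Vertex/Edge Cut} is \FPT parameterized by $k$.
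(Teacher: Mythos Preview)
Your color-coding reduction in Step~1 is correct, but Step~2 is a genuine gap: you explicitly defer the key claim (``once this is established'') without establishing it, and the tools you name do not deliver it. The Picard--Queyranne lattice describes the structure of \emph{minimum-weight} cuts, whereas a minimum-weight \emph{colorful} cut need not be a minimum-weight cut at all (its weight can be arbitrarily larger), so the lattice gives you no control over which elements to branch on. Likewise, important separators of size at most $k$ number at most $4^k$, but a colorful size-$k$ cut need not be important---importance concerns inclusion-maximality of the $s$-side, a property orthogonal to colourfulness---so restricting to important separators can lose the optimum. Without a further structural idea, your branching has factor $n$ per color class, not $f(k)$.

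The paper avoids this difficulty entirely by a different route: it invokes the treewidth-reduction theorem (all minimal $s$-$t$ separators of size at most $k$ survive in a graph $H$ of treewidth bounded by a function of $k$), and then applies the optimization version of Courcelle's theorem on $H$ to extract a minimum-weight rainbow separator directly---no branching on cut elements is needed. Incidentally, your Step~1 dovetails nicely with the paper's approach: after hashing the palette down to $[k]$, the MSO formula for ``colorful separator'' uses only $k$ unary color predicates and thus has size bounded in $k$, which makes the invocation of Courcelle's theorem on $H$ completely routine. If you want to salvage your plan, replacing your Step~2 by ``treewidth reduction plus bounded-treewidth dynamic programming over subsets of the $k$ colors'' would work; the important-separator branching you sketch does not.
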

\begin{proof}
    We present an \FPT algorithm for the \textsc{Weighted Rainbow Vertex Cut} problem. The tractability of \textsc{Weighted Rainbow Edge Cut} follows by considering the vertex cut version in the line graph of the input graph. 

    We follow the approach of~\cite{marx2013finding}. We assume that our input graphs are colored and vertex weighted, where the weights are non-negative integer weights not exceeding $n$. 
    We first compute a subgraph~$H$ of $G$ using the following \emph{treewidth reduction theorem} (Theorem 2.15 of~\cite{marx2013finding}). Given a graph $G$, $T\subseteq V(G)$ and a non-negative integer $k$. 
    Let $C$ be the set of all vertices of $G$ participating in a minimal $s$-$t$ separator of size at most $k$ for some $s,t\in T$. 
    For every fixed $k$ and~$|T|$, there is a 
    linear-time algorithm that computes a graph $H$ with the following properties: 
    \begin{enumerate}
        \item 
        $C\cup T \subseteq V(H)$,
        \item 
        For all $s,t\in T$, a set $K\subseteq V(H)$ with $|K|\leq k$ is a minimal $s$-$t$ separator of $H$ if and only if $K\subseteq C\cup T$ and $K$ is a minimal $s$-$t$ separator of $G$, 
        \item 
        The treewidth of $H$ is at most $h(k,|T|)$ for some function $h$.
        \item 
        $H[C\cup T]$ is isomorphic to $G[C\cup T]$. 
    \end{enumerate}
    
    We apply the theorem to $G$ with $T=\{s,t\}$ to obtain a subgraph $H$ of $G$. We inherit the colors and weights from $G$. 
    Since all minimal separators of size at most $k$ are preserved in $H$ it is sufficient to search for a minimum weight rainbow $s$-$t$ separator of size at most $k$ in $H$. 

    Since the graph $H$ has treewidth at most $h(k,2)$ we can apply the optimization version of Courcelle's Theorem for graphs of bounded treewidth. For this, observe that we can formulate the existence of a rainbow $s$-$t$ separator as a formula $\phi(X)$ such that $\phi(S)$ is true for a set $S$ of vertices in a colored graph $G$ if and only if $S$ is a rainbow separator. 
    We now apply Courcelle's Theorem in the optimization version presented in~\cite{arnborg1991easy,courcelle1993monadic}.
    We remark that the running time in the optimization version of the theorem is usually stated as being polynomial time for each fixed number of weight functions. 
    This could potentially imply an exponential running time in the given weights. 
    However, it is easily observed that this is not the case and the dependence on each weight function is in fact linear. 
    We refer to the discussion before Theorem 5.4 in~\cite{arnborg1991easy} to conclude in our case a running time of $f(k)\cdot n^2$ for some computable function $f$.
\end{proof}

\subsection{Vertex/edge cut discovery}

In the \textsc{Vertex Cut Discovery} problem  in the token sliding model we are given a graph $G$, vertices $s, t \in V(G)$, a starting configuration $S \subseteq V(G)$ of size $k$ and a non-negative integer $\budget$. The goal is to decide whether we can discover an $s$-$t$-separator in $G$ (starting from $S$) using at most $b$ token slides.
Similarly, in the \textsc{Edge Cut Discovery} problem, we are given a graph $G$, vertices $s, t \in V(G)$, a starting configuration $S \subseteq E(G)$ of size $k$ and a non-negative integer $\budget$. The goal is again to decide whether we can discover an $s$-$t$-cut in $G$ (starting from $S$) using at most $b$ token slides. 
We denote an instance of \textsc{Vertex Cut Discovery} resp.\ \textsc{Edge Cut Discovery} by a tuple $(G, s, t,S, b)$. We always use $k$ to denote the size of $S$.
It is easy to see that these problems are polynomial-time equivalent, hence all results for one of the problems immediately translates to the other problem as well.

We prove that \textsc{Vertex Cut Discovery}  in the token sliding model is \NP-hard, fixed-parameter tractable with respect to parameter $k$ and \W{1}-hard with respect to parameter $b$. We start by proving hardness by a reduction from the \textsc{Clique} problem. 
Recall that a clique in a graph is a set of pairwise adjacent vertices. In the \textsc{Clique} problem we are given a graph $G$ and integer parameter~$\kappa$ and the question is to decide whether there exists a clique of size at least $\kappa$ in $G$. It is well-known that the \textsc{Clique} problem is $\NP$-hard and its parameterized variant is \W{1}-hard with respect to the solution size~\cite{downey1995fixed}. 

\begin{theorem}
    \label{thm:separator:hardness}
    The \sdis~problem in the sliding model is \NP-hard and \W{1}-hard with respect to parameter~$\budget$ on 2-degenerate bipartite graphs. 
\end{theorem}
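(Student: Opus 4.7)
The plan is to reduce from \textsc{Multicolored Clique}, which is both \NP-complete and \W{1}-hard with respect to the solution size $\kappa$. Given an instance $(G, V_1, \ldots, V_\kappa)$ of \textsc{Multicolored Clique}, I would construct an instance $(H, s, t, S, b)$ of \sdis with $|S| = \kappa + \binom{\kappa}{2}$ tokens and a budget $b$ depending only on $\kappa$, so that $H$ admits a vertex cut discoverable from $S$ in at most $b$ slides if and only if $G$ contains a multicolored $\kappa$-clique. Since $b$ would be a function of $\kappa$ alone, this reduction simultaneously yields \NP-hardness and \W{1}-hardness for the parameter $b$. Achieving $2$-degeneracy and bipartiteness will be handled by using tree-like attachments of constant depth and subdividing edges as needed.

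The high-level design uses two types of \emph{selector gadgets}. For each color class $V_i$, I build a vertex-selector: a short tree-like structure rooted at a ``home'' vertex $h_i$ with leaves $\{\ell_{i,v} : v \in V_i\}$, and place one token $\tau_i$ on $h_i$. Sliding $\tau_i$ in a constant number of steps towards a leaf $\ell_{i,v}$ encodes the choice of vertex $v$ in $V_i$. Symmetrically, for each pair $i < j$ I build an edge-selector whose leaves are in bijection with $E(V_i, V_j)$, carrying a token $\sigma_{i,j}$ on its home vertex. Between $s$ and $t$ I then add a family of $s$-$t$-paths, each passing through one leaf of a vertex-selector and the corresponding endpoint leaves of the associated edge-selectors, so that an $s$-$t$-path is cut by the current tokens only if the leaves chosen by $\tau_i$, $\tau_j$ and $\sigma_{i,j}$ are mutually consistent. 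The budget $b = \Theta(\kappa^2)$ is set to exactly the total cost of sliding every token from its home to one leaf, so no token can leave its own gadget.

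For correctness, the forward direction is a direct encoding: given a multicolored clique $\{v_1, \ldots, v_\kappa\}$ of $G$, slide each $\tau_i$ to $\ell_{i,v_i}$ and each $\sigma_{i,j}$ to the leaf corresponding to $v_iv_j$; by construction these selections form a vertex cut. For the backward direction, the key argument is that with budget $b$ each token is forced to remain inside its own gadget and to occupy exactly one leaf. The path family is then designed so that any inconsistent assignment (an edge $\sigma_{i,j}$ whose selected endpoints disagree with those of $\tau_i$ or $\tau_j$, or a selected edge not present in $E(G)$) leaves some $s$-$t$-path uncut; forcing all paths to be cut therefore forces the selections to form a $\kappa$-clique in $G$.

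The main obstacle will be the simultaneous achievement of three constraints: (i) enforcing consistency between vertex- and edge-selectors via enough $s$-$t$-paths that no ``cheating'' assignment covers all of them, (ii) keeping the gadgets $2$-degenerate (tree-like cores with only degree-two bridge vertices between $s,t$ and the selector leaves, in the spirit of the constructions in \Cref{thm:rainbowpath-np} and \Cref{thm:matching:hardness}), and (iii) avoiding odd cycles so $H$ is bipartite, which I plan to handle by subdividing every edge whose inclusion would create an odd cycle and by arranging the path lengths from $s$ and $t$ through the selectors to have uniform parity. The edge-selector gadgets and their incident path bundles are the most delicate part, because they must simultaneously ``witness'' the two endpoint choices while preserving $2$-degeneracy.
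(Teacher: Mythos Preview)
Your plan has a genuine gap, and in fact it cannot work for a structural reason: you propose $|S|=\kappa+\binom{\kappa}{2}$, which is bounded purely in terms of the source parameter~$\kappa$. A correct parameter-preserving reduction with this property would show that \sdis is \W{1}-hard with respect to $k$ as well as $b$. But the paper establishes elsewhere (via the rainbow meta-theorem) that \sdis is \FPT when parameterized by~$k$. So unless $\FPT=\W{1}$, no reduction with $|S|$ bounded in $\kappa$ can exist; your construction must use a number of tokens that grows with $|V(G)|$ or $|E(G)|$.

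This global obstruction is reflected locally in your gadget design. In your ``selector'' scheme each vertex-selector ends with its single token on exactly one leaf $\ell_{i,v}$. Every $s$-$t$-path you route through a different leaf $\ell_{i,u}$ with $u\neq v$ then has to be cut by some edge-selector token $\sigma_{i,j}$; but $\sigma_{i,j}$ also sits on a single leaf, so for each pair $(i,j)$ only the paths through one particular edge-leaf get cut. The paths through $\ell_{i,u}$ and an edge-leaf different from the one $\sigma_{i,j}$ occupies remain open. In short, for cut problems a ``pick one leaf'' selector does not separate $s$ from $t$, because the unpicked leaves carry uncut paths. This is exactly why the paper's reduction is organised the other way round: it starts with tokens on \emph{all} edge-vertices $y_e$ (so that the $X$--$Y$ paths are already cut), places $\kappa$ further tokens on pendants of~$s$, and adds $\binom{\kappa}{2}$ extra $s$-$t$-paths that force $\binom{\kappa}{2}$ tokens to be \emph{removed} from~$Y$. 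The combinatorics then say that removing $\binom{\kappa}{2}$ tokens from $Y$ while keeping a cut is possible precisely when the $\kappa$ tokens moved onto $X$ encode a clique. Here $|S|=\kappa+|E(G)|$ is unbounded in~$\kappa$ while $b=2\kappa+2\binom{\kappa}{2}$ depends only on~$\kappa$, which is consistent with the \FPT result for parameter~$k$.
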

\begin{proof}
We show \NP-hardness by a reduction from the \clique~problem. The reduction is both a polynomial time reduction as well as a parameterized reduction, showing both claimed results. 
Let~$(G,\kappa)$ be an instance of the \textsc{Clique} problem. 
We may assume that $\kappa\geq 4$, hence, $\binom{\kappa}{2}>\kappa$. 

\begin{figure}[ht]
    \centering
    \begin{tikzpicture}
    \tikzstyle{r} = [rectangle, fill, minimum size = 4pt, inner sep = 2pt, outer sep = 1pt]

    \fill (0,0) circle (3pt) node (s) {} node[below=1mm] {$s$};

    \node[r] (z1) at ($(s) + (-1.5,2/3)$) {};
    \node[r] (zk) at ($(s) + (-1.5,-2/3)$) {};
    \node at (-1.5, 0.1) {$\vdots$};

    \draw (-1.5, 0) ellipse (0.25cm and 1.25cm) node[below=1.4cm] {$Z$};
    \draw (s) -- (z1);
    \draw (s) -- (zk);

    \fill (s) ++(2,0.8) circle (2pt) node (xu) {} node[above=1mm] {$x_u$};
    \fill (s) ++(2,-0.8) circle (2pt) node (xv) {} node[below=1mm] {$x_v$};
    \draw (2,0) ellipse (0.3cm and 1.75cm) node[below=1.9cm] {$X$};
    \draw (s) -- (1.97, 1.75);
    \draw (s) -- (1.97, -1.75);

    \node[r] (yuv) at ($(s) + (6,0)$) {} node[above = 0cm of yuv] {$y_{uv}$};
    \draw (yuv) ellipse (0.3cm and 1.75cm) node[below=1.9cm] {$Y$};
    \draw (xu) --
        node[pos=1/2, fill, circle, inner sep = 1.5pt, outer sep = 1.5pt] {} 
    (yuv);    
    \draw (xv) --
        node[pos=1/2, fill, circle, inner sep = 1.5pt, outer sep = 1.5pt] {} 
    (yuv);    

    \fill (s) ++(8,0) circle (3pt) node (t) {} node[below=1mm] {$t$};

    \draw (t) --
    ($(yuv) +(0.03, 1.75)$);
    \draw (t) --
    ($(yuv) +(0.03, -1.75)$);

    \fill (s) ++(4,4) circle (2pt) node (b1) {};
    \fill (s) ++(4,3) circle (2pt) node (b2) {};
    \draw (4,3.5) ellipse (0.2cm and 1cm) node[left = 2mm] {$P_i$};
    \node at (4,3.6) {$\vdots$};

    \path
    (s) edge[bend left=40] (b1)
    (s) edge[bend left=37] (b2)
    (b1) edge[bend left=40] (t)
    (b2) edge[bend left=37] (t)
    ;    

    \end{tikzpicture}
    \caption{An illustration of the hardness reduction for the \sdis~problem.}
    \label{fig:separator:hardness}
\end{figure}
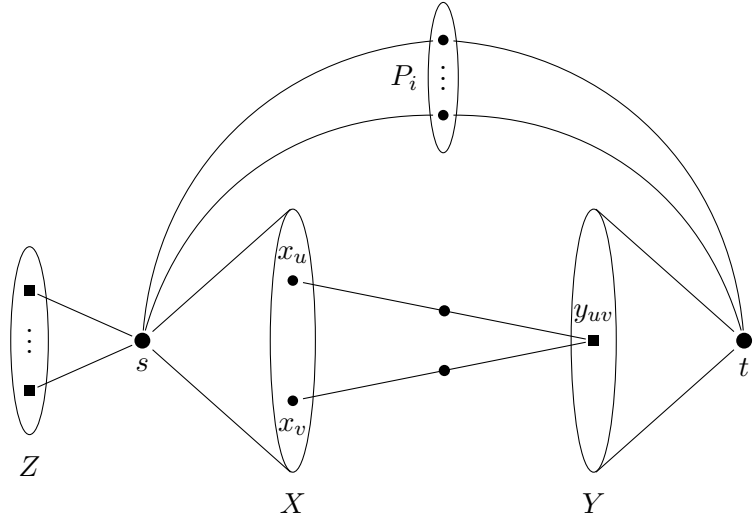 

We construct the following graph $H$ (see \Cref{fig:separator:hardness} for an illustration). 
We add two vertices $s$ and $t$ in $H$ where $s$ has $\kappa$ pendent vertices, which we collect in a set named~$Z$. 
For every vertex $u \in V(G)$, we add a vertex $x_u$ in $H$, which we connect with $s$, and for every $e \in E(G)$ we add a vertex $y_e$ in $H$, which we connect with $t$. Let $X$ denote the set of the $x_u$ and $Y$ denote the set of the $y_e$. 
For every vertex $u \in V(G)$ and for each edge $e \in E(G)$ incident with $u$, connect $x_u$ and $y_e$ via a 1-subdivided edge (that is, a path with one interval vertex).
Furthermore, we add $\binom{\kappa}{2}$ disjoint paths $P_1, \dots, P_{\binom{\kappa}{2}}$, each with a single internal vertex, connecting $s$ and $t$.
We denote the internal vertex of $P_i$ by $p_i$.
This completes the construction of $H$.

Observe that $H$ is 2-degenerate and bipartite. 
To see that $H$ is 2-degenerate observe that all subdivision vertices (the $p_i$ as well as the subdivision vertices connecting $X$ and $Y$) have a degree of~2. 
Their removal yields two stars with centers $s$ and $t$ which are 1-degenerate.
Bipartiteness follows from the subdivisions. We define the initial configuration $S$ as $Z \cup Y$ and $\budget = 2\kappa + 2\binom{\kappa}{2}$.

We claim that $G$ has a clique of size $\kappa$ if and only if $(H,s,t,S,\budget)$ is a yes-instance of the \textsc{Vertex Cut Discovery} problem.

Assume that $G$ has a clique $C$ of size $\kappa$. In order to separate $s$ and $t$, we need to cut all (of the $\binom{\kappa}{2}$-many) paths $P_i$ by moving a token on each $p_i$, and the paths using a vertex from $X$ and a vertex from~$Y$.
Using a budget of $2\kappa$, we move the $\kappa$ tokens on the vertices in $Z$ to the $\kappa$ vertices~$x_u$ for~$u \in C$ by sliding them over $s$. As $C$ is a clique, this frees $\binom{\kappa}{2}$ tokens in~$Y$, namely those $y_e$ where~$e$ is incident with a $x_u$ for $u \in C$ (while the remaining tokens in $Y$ block all other paths connecting $X$ and $Y$). 
We move the tokens on these $y_e$ to the subdivision vertices of the $P_i$ by sliding them over~$t$, again for a cost of 2 per token.
That is, using $\budget = 2\kappa + 2\binom{\kappa}{2}$ slides we discover the vertex cut $\{x_u \mid u \in C\} \cup \{y_{uv} \mid u, v \notin C\} \cup \{p_i \mid i \leq \binom{\kappa}{2}\}$ between $s$ and $t$, witnessing that $(H, s, t, S, \budget)$ is a yes-instance.

Now assume that $(H,s,t,S,\budget)$ is a yes-instance of the \textsc{Vertex Cut Discovery} problem. 
Let $C_0 \dots C_d$ with $S = C_0$ and $d \leq \budget$ be a configuration sequence in $G$ yielding a vertex cut in $G$ between~$s$ and $t$. 
With out loss of generality, we assume that $d$ is minimal, \ie there is no valid solution that can be discovered in less than $d$ steps.
In order to cut the $P_i$, the set $C_d$ must contain~$p_i$ for every $i \leq \binom{\kappa}{2}$. 
Note that the every token in $Z$ as well as every token in $Y$ can reach a $p_i$ in two sliding steps.
Hence, we can furthermore assume that a token from $Y$ never moves to a vertex in $X$. This assumption is justified as $C_d$ contains already $\binom{k}{2}$-many $p_i$, hence at most $k$ vertices of~$X$ are contained in $C_d$ (recall that $\binom{k}{2} \geq k$). 
Now, instead of moving a vertex from $Y$ to~$X$ and moving a vertex from $Z$ to a $p_i$ (each for a cost of 2), we can move a vertex from $Z$ to~$X$ and a vertex from~$Y$ to a $p_i$ for the same cost instead.

Let $z_P$ be the number of tokens moved from $Z$ to some $p_i$ and $z_X = \kappa - z_P$ be the number of tokens moved from $Z$ to $X$.
Then, $\kct - z_P$ tokens are moved from $Y$ to the remaining $p_i$.
Observe that $z_X$ tokens on $X$ can block at most $\binom{z_X}{2} = \binom{\kappa - z_P}{2}$ paths connecting $X$ and $Y$.
Hence, we have $\binom{\kappa - z_P}{2} \geq \binom{\kappa}{2} - z_P$, as all $s$-$t$-paths using vertices from $X$ and $Y$ are cut.
As we assume $\kappa \geq 4$, this inequality is only true for $z_P = 0$, that is, no vertex from $Z$ moves to a $p_i$.
Therefore, the ($\kappa$-many) tokens on $Z$ moved to $X$ and $\kct$ tokens on $Y$ moved to the $p_i$. 
Let $y_{uv} \in Y$ be such a vertex that has no token anymore. Now both $x_u$ and $x_v$ must have a token to cut all the paths between $s$ and~$t$ passing through $y_{uv}$. 
The $\kct$ vertices in $Y$ with no token satisfy the above property if and only if the corresponding edges in $G$ form a $\kappa$-clique in $G$, witnessing $(G, \kappa)$ is a yes-instance.
\end{proof}

Observe that a small modification of the construction yields the same hardness results for the \textsc{Edge Cut Discovery} problem, as we can keep the same graph $H$ and put the tokens on incident edges instead.
To be precise, we put the tokens in $Z$ on the edges connecting $Z$ and $s$ instead, and we put the tokens on $Y$ on the edges connecting $Y$ and $t$ instead. Finally, we set the budget to $k + \binom{k}{2}$. Hence, we can move the tokens on edges incident to $Z$ to the edges $\{s, x_u\}$ for $u \in C$ for total cost of $k$, and the tokens on edges incident to $Y$ to the edges $\{t, p_i\}$ for $i \leq \binom{k}{2}$ for a cost of~$\binom{k}{2}$. 

\begin{corollary}
    The \textsc{Edge Cut Discovery} problem  in the token sliding model is $\NP$-hard and \W{1}-hard with respect to parameter $\budget$ on 2-degenerate bipartite graphs.
\end{corollary}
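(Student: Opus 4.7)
The plan is to mimic the reduction for \textsc{Vertex Cut Discovery} from \Cref{thm:separator:hardness} almost verbatim, transferring the tokens from vertices to their incident edges. Starting again from an instance $(G,\kappa)$ of \clique with $\kappa\geq 4$, I would construct exactly the same graph $H$ (containing $s$, $t$, the pendant set $Z$, the copies $X$ and $Y$ of $V(G)$ and $E(G)$ joined by $1$-subdivided paths, and the $\binom{\kappa}{2}$ internally-subdivided $s$-$t$ paths $P_i$ with middle vertices $p_i$). The graph is still $2$-degenerate and bipartite, so this property is preserved for free. The only changes are the initial token configuration and the budget: place a token on each edge $sz$ for $z\in Z$ and on each edge $ty_e$ for $y_e\in Y$, and set $\budget=\kappa+\binom{\kappa}{2}$.

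For the forward direction, given a $\kappa$-clique $C\subseteq V(G)$ I would, for every $u\in C$, slide the token on one edge $sz$ (for a distinct $z\in Z$) one step along $s$ to land on the edge $sx_u$; this costs exactly $\kappa$ slides and produces an edge cut on all paths through $x_u$ for $u\in C$. Then for each of the $\binom{\kappa}{2}$ edges $e=uv\in E(G[C])$, the corresponding vertex $y_e\in Y$ has both its neighbors in $X$ (namely the subdivision vertices leading to $x_u$ and $x_v$) now cut off by tokens on $sx_u,sx_v$; I can therefore free the token on $ty_e$ and slide it in one step along $t$ to the edge $tp_i$ for a fresh index $i$. This uses $\binom{\kappa}{2}$ additional slides and blocks every path $P_i$. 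The total cost matches $\budget$, and the resulting edge set is an $s$-$t$ edge cut by the same path analysis as in \Cref{thm:separator:hardness}.

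For the reverse direction, I would argue that any successful discovery sequence must cut all $\binom{\kappa}{2}$ paths $P_i$, which forces $\binom{\kappa}{2}$ tokens to end up on edges incident to each $p_i$ (equivalently, on one of $sp_i$ or $tp_i$). Since sliding any token to such an edge costs at least one step, and the cheapest way to reach an edge incident to $p_i$ is from a token currently on $sZ$ or $tY$, a short counting argument (analogous to the $z_P$ vs.\ $z_X$ split in the proof of \Cref{thm:separator:hardness}) shows that at optimum all $\kappa$ tokens on $sZ$ slide to edges $sx_u$ for some $u\in U\subseteq V(G)$ with $|U|=\kappa$, and all $\binom{\kappa}{2}$ tokens that leave $tY$ go directly to edges $tp_i$. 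The remaining tokens on edges $ty_e$ must still block every $s$-$t$ path passing through $y_e$, so an edge $e=uv$ can be vacated at $ty_e$ only if both $sx_u$ and $sx_v$ are occupied. Hence the $\binom{\kappa}{2}$ vacated $y_e$ correspond to edges of $G$ with both endpoints in $U$, which forces $U$ to be a clique of size $\kappa$.

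The main obstacle I expect is bookkeeping in the reverse direction: one has to rule out hybrid strategies (\eg sliding a token from $sZ$ all the way to a $tp_i$ edge, or from $tY$ to an $X$-incident edge) as more expensive than the canonical strategy, and to confirm that the inequality $\binom{\kappa-z_P}{2}\geq \binom{\kappa}{2}-z_P$ forces $z_P=0$ precisely as in \Cref{thm:separator:hardness}. Since the distances in $H$ between $Z$-incident edges, $Y$-incident edges, the $sx_u$/$tp_i$ edges and the $X$-$Y$ connecting edges only differ from the vertex case by a uniform $-1$, the same exchange argument goes through, and both $\NP$-hardness and $\W{1}$-hardness parameterized by $\budget$ follow because the reduction is polynomial and $\budget=\kappa+\binom{\kappa}{2}$ depends only on $\kappa$.
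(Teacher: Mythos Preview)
Your proposal is correct and follows essentially the same approach as the paper: keep the graph $H$ from \Cref{thm:separator:hardness} unchanged, place the tokens on the edges $sz$ for $z\in Z$ and $ty_e$ for $y_e\in Y$, and set $\budget=\kappa+\binom{\kappa}{2}$. The paper's own argument is even terser than yours, simply noting that the forward direction slides each $sz$-token to an edge $sx_u$ and each freed $ty_e$-token to an edge $tp_i$; your added discussion of the reverse direction and the $z_P$ bookkeeping is more detailed than what the paper provides but entirely in line with it.
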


Finally, combining \Cref{thm:rainbow-separator} and \Cref{thm:meta-theorem} (proved in Section~\ref{sec:rainbow}), we get the following result:

\begin{theorem}
    The \textsc{Vertex/Edge Cut Discovery} problem  in the token sliding model is fixed-parameter tractable with respect to parameter $k$. 
\end{theorem}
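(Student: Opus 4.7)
The plan is to combine the two previously established results in the excerpt: the meta-theorem (Theorem~\ref{thm:meta-theorem}) which states that an \FPT algorithm for \textsc{Weighted Rainbow-$\Pi$} parameterized by $k$ yields an \FPT algorithm for $\Pi$-\textsc{Discovery} parameterized by $k$ in the token sliding model, and Theorem~\ref{thm:rainbow-separator} which establishes that \textsc{Weighted Rainbow Vertex/Edge Cut} is \FPT with respect to $k$. The derivation is then essentially a one-line invocation.

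More precisely, given an instance $(G, s, t, S, b)$ of \textsc{Vertex Cut Discovery} (or \textsc{Edge Cut Discovery}) with $|S|=k$, I would first apply the reduction encoded in the meta-theorem to produce a colored, weighted instance of \textsc{Weighted Rainbow Vertex Cut} (resp.\ \textsc{Weighted Rainbow Edge Cut}) where the colors correspond to the tokens in $S$ and the weights capture sliding distances between token positions and candidate solution elements, so that a minimum-weight rainbow cut of size~$k$ of total weight at most $b$ corresponds exactly to a discovery sequence transforming $S$ into an $s$-$t$-cut using at most $b$ slides. The construction is provided in Section~\ref{sec:rainbow} as part of the proof of Theorem~\ref{thm:meta-theorem}, and it is polynomial in the size of the input while preserving the parameter $k$.

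Having produced this instance, I would then invoke Theorem~\ref{thm:rainbow-separator} to solve it in time $f(k)\cdot n^{\Oof(1)}$ for some computable function $f$, which gives the required \FPT running time for $\Pi$-\textsc{Discovery}. The edge variant is handled by the standard reduction to the vertex variant on the line graph, just as in the proof of Theorem~\ref{thm:rainbow-separator}.

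I do not expect any real obstacle here, since the heavy lifting (treewidth reduction and Courcelle's theorem for the rainbow cut problem, and the token sliding-to-rainbow reduction in the meta-theorem) has already been performed in earlier sections. The only point to double-check is that the weights produced by the meta-theorem's reduction satisfy the polynomial bound required by Theorem~\ref{thm:rainbow-separator} (they do, since sliding distances are bounded by $n$), and that the parameter in the resulting rainbow instance is exactly $k$ rather than some larger function of~$k$. Both conditions hold by construction, so the proof reduces to a short citation-style argument combining the two theorems.
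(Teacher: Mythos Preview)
Your proposal is correct and matches the paper's own proof essentially verbatim: the paper simply states that the result follows by combining Theorem~\ref{thm:rainbow-separator} with the meta-theorem (Theorem~\ref{thm:meta-theorem}). Your additional sanity checks on the weight bounds and the preservation of the parameter~$k$ are valid and consistent with the construction in Section~\ref{sec:rainbow}.
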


\section{Tractability via rainbow problems} \label{sec:rainbow}
In this section, we establish an algorithmic meta-theorem showing tractability of discovery problems  in the token sliding model (when parameterized by $k$) via the tractability of (weighted) rainbow variants of optimization problems. 
We utilize the color-coding technique introduced by Alon et al.~\cite{alon1995color} along with \FPT algorithms for the rainbow problems to design \FPT algorithms for the discovery problems parameterized by $k$. 

Let $\Pi$ be an optimization problem and $(G, k)$ be an instance of $\Pi$.
In the \textsc{Rainbow $\Pi$} problem we consider that $G$ is either a vertex-colored graph or an edge-colored graph. 
The parameter~$k$ refers to the solution size to be optimized. 
We consider problems that seek to optimize the selection of edges or vertices, but not both. 
For instance, the rainbow variants for separators and shortest paths are vertex selection problems whereas for spanning trees and matchings we have edge-selection problems. 
For an edge (or vertex) selection problem, the  \wrbpi~problem refers to the weighted variant where the input graph $G$ additionally has weights on the edges (or vertices) and we seek a solution of weight at most $b$ (amongst all solutions satisfying the cardinality constraint~$k$). 

\begin{theorem}\label{thm:meta-theorem}
For an optimization problem $\Pi$, if the \wrbpi~problem parameterized by $k$ admits an \FPT algorithm then the \pidis~problem in the token sliding model parameterized by $k$ admits an \FPT algorithm. 
\end{theorem}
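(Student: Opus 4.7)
The plan is to reduce \pidis in the token sliding model (parameterized by $k$) to \wrbpi (parameterized by $k$) via the color-coding technique of Alon, Yuster and Zwick. Let $(G, S, b)$ with $S = \{s_1, \ldots, s_k\}$ be an instance of \pidis, and assume without loss of generality that $\Pi$ is a vertex-selection problem (the edge case is analogous, working with line-graph distances).

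The first step is to establish a two-way correspondence between discovery sequences and weighted rainbow solutions, phrased in terms of bijections $\pi : S \to C$ onto solutions $C$ of $\Pi$ of size $k$. In the forward direction, any discovery sequence of length at most $b$ from $S$ to a target $C$ induces such a bijection by tracing each token through its slides, and since token $s_i$ must traverse at least $\mathrm{dist}_G(s_i, \pi(s_i))$ edges of $G$, we obtain $\sum_i \mathrm{dist}_G(s_i, \pi(s_i)) \le b$. Conversely, given any bijection $\pi : S \to C$ of total distance at most $b$, one can realize $\pi$ by a slide sequence of length at most $b$; potential token collisions on shared shortest paths are resolved by an exchange argument that reassigns $\pi$ whenever two tokens interfere, using the fact that any blocking token is itself destined for some distinct final position in $C$ and can therefore be swapped into a conflict-free assignment without increasing the total distance.

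Given this correspondence, the algorithm is as follows. Sample a uniformly random coloring $\phi : V(G) \to [k]$ and define a weight function $w(x) = \mathrm{dist}_G(s_{\phi(x)}, x)$. Invoke the assumed \FPT algorithm for \wrbpi on the colored weighted graph $(G, w, \phi)$ with budget $b$; return YES if it finds a rainbow solution of weight at most $b$, otherwise retry. Any fixed target $C$ of size $k$ becomes colorful under $\phi$ with probability exactly $k!/k^k \ge e^{-k}$, so $\Oof(e^k)$ repetitions amplify the success probability to a constant. Derandomization via the $(n,k)$-perfect hash families of Naor, Schulman and Srinivasan replaces the random colorings with a deterministic family of size $2^{\Oof(k)} \log n$, producing a deterministic \FPT algorithm for \pidis whose running time is dominated by the assumed running time of the \wrbpi oracle times a factor depending only on $k$.

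The main obstacle is the reverse direction of the correspondence: a bijection $\pi$ with distance sum at most $b$ does not automatically translate into a slide sequence of length $b$, because two tokens may want to traverse the same intermediate vertex. The crux of the argument is therefore the scheduling/exchange step that resolves such conflicts without inflating the total slide count, ensuring that the sliding budget $b$ transfers cleanly to the rainbow weight budget $b$ and back. Once this is in place, the theorem follows immediately from combining the assumed \FPT algorithm for \wrbpi with the color-coding framework.
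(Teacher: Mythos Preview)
Your approach is essentially the paper's: reduce to the weighted rainbow problem via color coding, defining the weight of an element as its distance to the token of the matching color, and then derandomize with perfect hash families. You are in fact more careful than the paper on one point: you spell out the exchange argument showing that a bijection $\pi:S\to C$ of total distance at most $b$ can actually be realized by a sliding sequence of length at most $b$, whereas the paper simply asserts the corresponding inequality $\mathrm{dist}(S,M)\le\sum_e\mathrm{dist}(e,\pi(\phi(e)))$ without justification.

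There is, however, one imprecision in your probability analysis (which the paper's Lemma~8.2 also glosses over). You argue that once the fixed target $C$ becomes colorful under $\phi$, the weighted rainbow instance is a yes-instance; but colorfulness alone does not bound the weight. The weight $\sum_{c\in C}\mathrm{dist}_G(s_{\phi(c)},c)$ is the cost of the \emph{particular} bijection $c\mapsto s_{\phi(c)}$ that $\phi$ happens to induce, and this need not be the cost-$\le b$ bijection you extracted from the discovery sequence. The easy fix is either to additionally iterate over all $k!$ assignments of colors to tokens (so that whenever $\phi|_C$ is a bijection, one of the $k!$ induced weightings certifies weight $\le b$), or simply to observe that $\phi|_C$ equals the one good bijection with probability $k^{-k}$, which is still an \FPT number of repetitions. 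Either way the theorem goes through unchanged.
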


Without loss of generality, assume that the problem $\Pi$ is an edge selection problem. 
Let $(G, S \subseteq E(G), b)$ be an instance of the \pidis~problem. 
Let $\Cmc$ be a palette of $k$ colors, and $\pi: \Cmc \to S$ be a bijection. 
We color the edges $E(G) \setminus S$ uniformly at random using $\Cmc$, yielding an edge coloring $\phi : E(G) \to \Cmc$.
Now we define a weight function $w: E(G) \to \mathbb{R}_+$ such that for each $e \in E(G)$ we have $w(e) = \mathrm{dist}(e, \pi(\phi(e)))$. 
Intuitively, the weight function denotes the cost of moving a token from the initial configuration to an edge with the same color. 
Observe that the weights of the edges in the initial configuration are zero and they are colored using piece-wise distinct colors. 
Now we have an edge-colored graph $(G, \phi)$. 
Consider the instance $(G, \phi, k, b)$ of the \wrbpi~problem. 

\begin{lemma}
\label{lem:metaThm:reduction}
    If $(G, \phi, k, b)$ is a yes-instance of the  \wrbpi~problem, then $(G,S,b)$ is a yes-instance of the \pidis~problem  in the token sliding model. 
\end{lemma}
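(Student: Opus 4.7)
The plan is to take the rainbow solution guaranteed by the hypothesis and convert it into a discovery sequence. From a yes-answer to the $\wrbpi$ instance I obtain a $\Pi$-feasible edge set $R \subseteq E(G)$ of size $k$ that is rainbow (its edges carry pairwise distinct colors) and satisfies $\sum_{e \in R} w(e) \leq b$. Composing the bijection $\pi \colon \Cmc \to S$ with $\phi$ restricted to $R$ yields a canonical matching $\sigma \colon R \to S$ defined by $\sigma(e) = \pi(\phi(e))$: each target edge $e$ is paired with the unique source token whose color matches $e$. By the very definition of $w$, for every $e \in R$ we have $w(e) = \mathrm{dist}(e,\sigma(e))$, which is exactly the minimum number of slides a lone token needs to travel from $\sigma(e)$ to $e$ in $G$.

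From this matching I would construct a discovery sequence $S = C_0 \vdash C_1 \vdash \dots \vdash C_\ell = R$ of length $\ell \leq b$ by processing the tokens one at a time, each sliding along a shortest edge-path from $\sigma(e)$ to $e$. Because $\sum_{e \in R}\mathrm{dist}(e,\sigma(e)) = \sum_{e \in R} w(e) \leq b$, the cumulative number of slides is at most $b$. Since $R$ is $\Pi$-feasible, reaching $R$ from $S$ within $b$ slides already certifies that $(G,S,b)$ is a yes-instance of $\pidis$.

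The delicate point will be the scheduling, since a shortest path chosen for one token may traverse an edge currently occupied by another token (either a still-unmoved source, or an already-placed target, or an edge in $S \cap R$ that ought to remain fixed). I plan to handle this by processing tokens according to the dependency structure of sources and targets: repeatedly select a token whose shortest path avoids the current positions of all unmoved tokens and slide it, then iterate; whenever a cyclic dependency arises it can be broken by temporarily displacing a single token to a free neighboring edge.

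The main obstacle is making the scheduling argument fully rigorous and confirming that the total number of slides stays within $b$ irrespective of the dependency structure. I expect this to reduce to a routing fact for token sliding: the minimum number of slides needed to transform one valid $k$-edge configuration into another is at most the cost of any bijection between them measured in edge-distances. Granted this, the bound $\sum_{e \in R} w(e) \leq b$ immediately produces the desired discovery sequence of length at most $b$, completing the proof.
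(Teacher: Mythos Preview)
Your approach is the same as the paper's: take a rainbow solution $M$, match each $e\in M$ to the source token $\pi(\phi(e))$, and bound the sliding cost by $\sum_{e\in M}w(e)\le b$. The paper's proof is a single displayed inequality
\[
\mathrm{dist}(S,M)\ \le\ \sum_{e\in M}\mathrm{dist}\bigl(e,\pi(\phi(e))\bigr)\ =\ \sum_{e\in M}w(e)\ \le\ b,
\]
and it simply takes the first inequality---precisely your ``routing fact''---for granted.

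Your instinct that this step deserves justification is sound, but the scheduling heuristic you sketch (dependency order plus temporary one-step displacements to break cycles) is not the clean way to do it and risks introducing extra slides you would then have to account for. The standard argument is a potential/exchange step: while $S\neq R$, pick any $e^*\in R\setminus S$, take a shortest line-graph path $f_0=\sigma(e^*),f_1,\dots,f_d=e^*$, let $j$ be the largest index with $f_j\in S$ (so $f_{j+1}\notin S$), and perform the single slide $f_j\to f_{j+1}$. Update the bijection by setting $\sigma'(e^*)=f_{j+1}$ and, if $j>0$, reassigning the element $e'$ with $\sigma(e')=f_j$ to $\sigma'(e')=f_0$. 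A triangle-inequality check shows $\sum_{e}\mathrm{dist}(\sigma'(e),e)\le\sum_{e}\mathrm{dist}(\sigma(e),e)-1$, so after at most $\sum_{e\in R}w(e)\le b$ slides you reach $R$. This replaces your third paragraph entirely and closes the gap without any detours.
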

\begin{proof}
Let $M \subseteq E(G)$ be a feasible solution for the given instance $(G, \phi, k, b)$ of the \wrbpi problem. 
Note that $M$ is a feasible solution for the instance $(G, k)$ of the problem~$\Pi$ (in the classic setting). 
Otherwise, $M$ cannot be a feasible solution in the discovery variant. 
Therefore~$M$ is a candidate solution for the \pidis~problem. 
Next we argue the feasibility of~$M$. 
We know that $\sum_{e \in M} w(e) \leq b$. 
Let $\mathrm{dist}(S, M)$ be the optimal cost of a transformation from $S$ to~$M$. 
Then, $\mathrm{dist}(S, M)$ can be bounded as:
\[\mathrm{dist}(S, M) \leq \sum_{e \in M} \mathrm{dist}(e, \pi(\phi(e))) = \sum_{e \in M} w(e) \leq b.\]
Thus, $(G, S, b)$ is a yes-instance. 
\end{proof}

Therefore, the color-coding step reduces the discovery problem to the rainbow variant. 

\begin{lemma}
\label{lem:metaThm:probBound}
    If $(G, S, b)$ is a yes-instance of the \pidis~problem, then the probability that $(G, \phi, k, b)$ is a yes-instance of the (edge) \wrbpi~problem is at least $e^{-k}$. 
\end{lemma}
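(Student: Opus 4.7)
The plan is a standard color-coding argument. Starting from a yes-instance $(G, S, b)$ of \pidis, I extract a discovery sequence of at most $b$ slides ending at a $\Pi$-feasible $M \subseteq E(G)$ of size $k$. Tracking where each token ends yields a bijection $\mu : S \to M$; since the slides expended by any one token are at least its graph distance to its target, summing over tokens gives $\sum_{s \in S} \mathrm{dist}(s, \mu(s)) \leq b$. By picking the cost-minimizing pairing and applying the triangle inequality, I may furthermore assume without loss of generality that $\mu(s) = s$ for every $s \in S \cap M$, so that only the tokens on $S \setminus M$ actually travel.

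Next I exhibit a canonical ``witness coloring'' $\phi^{\star}$ defined by $\phi^{\star}(e) = \pi^{-1}(\mu^{-1}(e))$ for each $e \in M$. Since $\pi \circ \mu$ is a bijection, $\phi^{\star}$ restricted to $M$ is a bijection onto $\Cmc$, so $M$ is rainbow under $\phi^{\star}$; moreover its total weight equals $\sum_{e \in M} \mathrm{dist}(e, \mu^{-1}(e)) = \sum_{s \in S} \mathrm{dist}(s, \mu(s)) \leq b$. Consequently, whenever the random coloring $\phi$ happens to agree with $\phi^{\star}$ on $M$, the set $M$ itself certifies that $(G, \phi, k, b)$ is a yes-instance of \wrbpi. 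By the normalization, $\phi$ is already forced to agree with $\phi^{\star}$ on $S \cap M$ (where the constructed colors $\pi^{-1}(s)$ and $\pi^{-1}(\mu^{-1}(e)) = \pi^{-1}(e)$ match), so only the $|M \setminus S|$ edges of $M \setminus S$ need to be colored correctly.

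For the probability estimate I combine a direct count with Stirling's inequality. Since $\phi$ colors each edge of $E(G) \setminus S$ independently and uniformly from a palette of $k$ colors, the probability of exact agreement with $\phi^{\star}$ on $M \setminus S$ is $(1/k)^{|M \setminus S|} \geq k^{-k}$. To reach the sharper $e^{-k}$ bound stated in the lemma I would instead analyse the weaker event ``$M$ is rainbow under $\phi$'', whose probability is $k!/k^k \geq e^{-k}$ via $k! \geq (k/e)^k$, and then exploit that at least one of the $k!$ equally likely rainbow colorings of $M$ is weight-feasible, namely the one induced by $\mu$. The main obstacle I anticipate is precisely bridging the gap between the rainbow event and the weight-feasibility event without paying an additional $1/k!$ factor that would collapse the bound back to $1/k^k$: the refined analysis must leverage that the \wrbpi{} oracle is content with any rainbow solution of weight at most $b$, not necessarily $M$ itself, and must use symmetry over the $k!$ rainbow permutations on $M$ to decouple the two events.
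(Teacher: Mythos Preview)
Your analysis is more careful than the paper's own proof. The paper argues exactly along the lines of your second approach: it fixes a feasible $M$, computes $\Pr[M\text{ is colourful}]\geq k!/k^{k}\geq e^{-k}$, and stops there, tacitly treating ``$M$ is colourful'' as equivalent to ``$(G,\phi,k,b)$ is a yes-instance of \wrbpi{}''. The gap you flag is real and the paper does not address it: when $M$ is rainbow, the induced map $\pi\circ\phi|_{M}\colon M\to S$ is a bijection, but a \emph{random} one conditional on the rainbow event, and its total transport cost $\sum_{e\in M}\mathrm{dist}(e,\pi(\phi(e)))$ need not be at most $b$ unless it happens to coincide with the specific bijection $\mu^{-1}$ extracted from the discovery sequence.

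Your exact-match argument (together with the normalisation $\mu|_{S\cap M}=\mathrm{id}$, which is indeed without loss of generality by the swap you describe) is correct and yields $\Pr[\phi|_{M}=\phi^\star|_{M}]=k^{-|M\setminus S|}\geq k^{-k}$. The symmetry idea you sketch does not close the gap to $e^{-k}$: conditioning on the rainbow event and then asking that the induced bijection be weight-feasible costs, in the worst case, a further $1/k!$ factor, and you land back at $k^{-k}$; allowing solutions other than $M$ does not help in general since $\Pi$ may have $M$ as its unique size-$k$ solution. So the obstacle you anticipate is genuine, and the $e^{-k}$ in the lemma statement appears to be an overclaim carried over from the unweighted colour-coding template. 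This is harmless for the meta-theorem itself: $k^{k}$ repetitions (or the corresponding perfect-hash-family derandomisation) still give constant success probability and the algorithm remains \FPT\ in $k$.
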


\begin{proof}
Let $M \subseteq E(G)$ be a feasible solution for the instance $(G, S, b)$ of the \pidis~problem. 
Note that $M$ is a feasible solution for the instance $(G, k)$ of the problem $\Pi$ (in the classical setting). 
Otherwise, $M$ cannot be a feasible solution in the discovery variant. 
In order for $M$ to become a feasible solution for the rainbow variant, we want to ensure that $M$ has distinct colors from $\phi$. 
That is, $M$ should be colorful. 
Now we estimate the probability that $M$ is colorful. 
We know that the set~$S$ has fixed colors, and the remaining $m-k$ edges are colored uniformly  at random. 
We care only about the colors on $M$ but not the other edges. 
Therefore, 
$$\Pr[M\text{ is colorful}] \geq \frac{k^{m-2k}k!}{k^{m-k}} \geq \frac{1}{e^k}.$$
\end{proof}

\begin{lemma}\label{thm:lem-rand-weighted}
If the \wrbpi~problem admits an \FPT algorithm, then there exists an \FPT-time randomized algorithm that, given a \pidis~instance, either reports a failure or finds a feasible solution. Moreover, if the algorithm is given a yes-instance, then it returns a solution with probability at least $e^{-k}$. 
\end{lemma}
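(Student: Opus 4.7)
The plan is to use the color-coding reduction described immediately before the statement. Given an instance $(G, S, b)$ of \pidis, I would assign to each edge of $S$ a unique color from a palette $\Cmc$ of size $k$ via the bijection $\pi$, and then color each edge of $E(G)\setminus S$ independently and uniformly at random with a color from $\Cmc$. Next, I would compute the weight function $w(e) = \mathrm{dist}(e, \pi(\phi(e)))$ for every edge; this takes polynomial time since all pairwise distances in the line graph of $G$ can be precomputed once. Having constructed the instance $(G, \phi, k, b)$ of \wrbpi, the algorithm invokes the assumed \FPT algorithm for the weighted rainbow problem as a black box.

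If the black box returns a rainbow solution $M$ of weight at most $b$, the algorithm reconstructs an actual discovery sequence by sliding, for each edge $e\in M$, the token initially placed at $\pi(\phi(e))$ along a shortest path to $e$; by the calculation in \Cref{lem:metaThm:reduction}, the total number of slides is at most $\sum_{e\in M} w(e) \le b$, so this sequence is a valid solution to the discovery instance. If the black box reports that no rainbow solution of weight $\le b$ exists, our algorithm reports a failure.

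Correctness in one direction is immediate from \Cref{lem:metaThm:reduction}: any output produced is a genuine feasible discovery sequence, so the algorithm never reports success on a no-instance. For the success probability on a yes-instance, fix some feasible discovery solution $M^\star$ with associated transformation of cost at most $b$. By \Cref{lem:metaThm:probBound}, with probability at least $e^{-k}$ the random coloring makes $M^\star$ colorful, in which case $M^\star$ itself witnesses that $(G,\phi,k,b)$ is a yes-instance of \wrbpi, since by construction $\sum_{e\in M^\star} w(e) = \sum_{e\in M^\star}\mathrm{dist}(e,\pi(\phi(e))) \le \mathrm{dist}(S,M^\star) \le b$. Conditioned on this event, the \FPT algorithm for \wrbpi must return \emph{some} rainbow solution of weight $\le b$, which our procedure then lifts to a valid discovery sequence via \Cref{lem:metaThm:reduction}. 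The total running time is dominated by the single call to the black box and is therefore \FPT in $k$.

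The main point to be careful about is that the specific rainbow solution returned by the black box need not be $M^\star$; however, this does not matter because \Cref{lem:metaThm:reduction} applies uniformly to every rainbow solution of weight at most $b$, so any output of the black box can be translated to a valid discovery sequence. Thus success of the black box on the reduced instance is equivalent to success of the overall algorithm, completing the proof.
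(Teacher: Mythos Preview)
Your argument is essentially the paper's proof (which simply cites \Cref{lem:metaThm:reduction} and \Cref{lem:metaThm:probBound}), fleshed out with an explicit description of the algorithm and the reconstruction step; the approach is identical.

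One caution: the inequality $\sum_{e\in M^\star}\mathrm{dist}(e,\pi(\phi(e))) \le \mathrm{dist}(S,M^\star)$ you insert is not justified by colorfulness of $M^\star$ alone, since the bijection $\pi\circ\phi|_{M^\star}$ induced by the random coloring need not be the \emph{optimal} assignment of tokens to target edges. Fortunately you do not need this line at all: the \emph{statement} of \Cref{lem:metaThm:probBound} already asserts that the weighted rainbow instance is a yes-instance (not merely that $M^\star$ is colorful) with probability at least $e^{-k}$, and that is all the black-box call requires. So simply drop that clause and cite \Cref{lem:metaThm:probBound} directly for the conclusion that the reduced instance is a yes-instance.
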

\begin{proof}
The proof follows immediately from  \Cref{lem:metaThm:reduction,lem:metaThm:probBound}.
\end{proof}
The success probability of our algorithm can be improved to a constant value by repeating the algorithm for $e^k$ times. 
Since, the running time of the algorithm is $f(k)$, for some computable function $f$, times a polynomial in $n$, the total running time is still bounded by a function of $k$ times a polynomial in $n$. 

We derandomize the algorithm of \cref{thm:lem-rand-weighted} using the technique introduced by Alon et al.~\cite{alon1995color}.  
Instead of a random coloring $\phi$, we construct an $(m-k,k)$-perfect hash family $\Fmc$ such that for every $X \subseteq E(G)\setminus S$ with $|X| = k$, there is a function $f \in F$ such that every element of $X$ is mapped to a different element in $\Cmc$ (color palette as a $k$-sized set). 
The family $\Fmc$ can be constructed deterministically in time $O(2^{O(k)}\log m)$~\cite{alon1995color}. 
It follows that the \pidis~problem can be solved in time $f_1(k)f_2(\cdot)n^{O(1)}$ for some computable function $f_1$ where $f_2(\cdot)$ is the time to solve the {\sc Weighted} \rbpi~problem. 
Therefore, the existence of an \FPT algorithm for the {\sc Weighted} \rbpi~problem implies that the \pidis~problem admits an \FPT algorithm with respect to the parameter $k$. 
\section{The jumping and addition/removal models}
\label{sec:other-token-models}

We now turn to the \emph{token jumping} and \emph{unrestricted token addition/removal} models. Recall that in these models we are no longer restricted to sliding tokens along edges.

\subsection{Token jumping} 

We first formally define the \emph{red-blue} variant of a  vertex (resp.\ edge) selection problem and show that it is always at least as hard as the solution discovery variant in the token jumping model.

Let $\Pi$ be an arbitrary vertex (resp.\ edge) selection problem. An instance of the \textsc{Red-Blue-$\Pi$} problem consists of a graph $G$ whose vertices (resp.\ edges) are either colored red or blue, as well as two non-negative integers $k$ and $b$. 
If $\Pi$ is the decision variant of a minimization/maximization problem, the goal is to decide whether there exists a solution $X \subseteq V(G)$  (resp.\ $X \subseteq E(G)$) of~$\Pi$ of size $k$ such that the number of blue elements in $X$ is at most $b$. 
We denote an instance of \textsc{Red-Blue-$\Pi$} by $(G, k, b)$.

\begin{lemma}
Let $\Pi$ be an arbitrary vertex (resp.\ edge) selection problem. There is a parameter-preserving reduction from \textsc{$\Pi$-Discovery} in the token jumping model to \textsc{Red-Blue-$\Pi$} that can be computed in linear time.
\end{lemma}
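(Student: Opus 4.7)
The plan is to use the most natural coloring: given an instance $(G, S, b)$ of \textsc{$\Pi$-Discovery} in the token jumping model, construct the \textsc{Red-Blue-$\Pi$} instance $(G', k, b)$ where $G' = G$ as a graph, every vertex (resp.\ edge) in $S$ is colored red, every other vertex (resp.\ edge) is colored blue, and $k = |S|$. The parameters $k$ and $b$ are copied verbatim, so the reduction is trivially parameter-preserving and computable in linear time (one pass over $V(G)$ or $E(G)$ to assign colors).

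The correctness argument rests on the identity $|C \setminus S| = |S \setminus C|$ whenever $|C| = |S| = k$, together with the fact that in the token jumping model moving an optimal number of tokens costs exactly one jump per misplaced token. For the forward direction, suppose there is a discovery sequence $C_0 = S, C_1, \ldots, C_\ell$ with $\ell \leq b$ and $C_\ell$ a size-$k$ solution for $\Pi$. Taking $C = C_\ell$, I would observe that $|C \setminus S| \leq \ell$, since at most one new element is added to the configuration per jumping step. Because every element of $C \setminus S$ is blue by construction, $C$ is a size-$k$ solution for $\Pi$ in $G'$ with at most $b$ blue elements, witnessing a yes-instance of \textsc{Red-Blue-$\Pi$}.

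For the reverse direction, suppose $C$ is a size-$k$ solution of $\Pi$ in $G'$ with at most $b$ blue elements, i.e.\ $|C \setminus S| \leq b$. Set $m = |C \setminus S| = |S \setminus C|$ and enumerate $S \setminus C = \{s_1, \ldots, s_m\}$ and $C \setminus S = \{c_1, \ldots, c_m\}$ arbitrarily. I would then construct the discovery sequence by, at step $i$, jumping $s_i$ to $c_i$. The step is legal because after $i-1$ jumps the configuration is $(S \setminus \{s_1, \ldots, s_{i-1}\}) \cup \{c_1, \ldots, c_{i-1}\}$, which contains $s_i$ (not yet removed) and does not contain $c_i$ (since $c_i \notin S$ and $c_i \neq c_j$ for $j < i$). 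After $m \leq b$ jumps, the configuration equals $C$, a solution to $\Pi$.

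There is no real obstacle here; the only subtlety is checking that the sequence of jumps in the reverse direction is well-defined (each intermediate configuration respects the ``no two tokens on one element'' constraint), and this follows immediately from the distinctness of the $c_i$. I would state the argument in both directions in two short paragraphs and flag this subtlety explicitly.
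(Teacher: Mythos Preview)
Your proposal is correct and matches the paper's approach exactly: the paper also colors every element of $S$ red and every other element blue, outputs $(G', |S|, b)$, and declares the equivalence ``easy to verify''. Your write-up is in fact more detailed than the paper's, since you spell out both directions of the correctness argument and the legality of the intermediate configurations, which the paper omits.
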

\begin{proof}
Let $(G, S, b)$ be an instance of \textsc{$\Pi$-Discovery} in the token jumping model. Let $G'$ be a copy of $G$ where we color every vertex (resp.\ edge) in $S$ red and every other vertex (resp.\ edge) blue. It is now easy to verify that $(G', |S|, b)$ is an equivalent \textsc{Red-Blue-$\Pi$} instance.
\end{proof}
\begin{corollary} Let $\Pi$ be an arbitrary vertex (resp.\ edge) selection problem. Then, the following results hold in the token jumping model:
\begin{enumerate}
    \item If \textsc{Red-Blue-$\Pi$} is in \PP, then \textsc{$\Pi$-Discovery} is in \PP.
    \item If \textsc{Red-Blue-$\Pi$} is in \FPT with respect to $k$ or $b$, then \textsc{$\Pi$-Discovery} is in \FPT with respect to $k$ or $b$.
    \item If \textsc{$\Pi$-Discovery} is \NP-hard, then \textsc{Red-Blue-$\Pi$} is \NP-hard.
    \item If \textsc{$\Pi$-Discovery} is \W{1}-hard with respect to $k$ or $b$, then \textsc{Red-Blue-$\Pi$} is \W{1}-hard with respect to $k$ or $b$.
\end{enumerate}    
\end{corollary}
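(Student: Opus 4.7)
The plan is to invoke the preceding lemma in a black-box manner, since all four items are immediate structural consequences of the existence of a parameter-preserving linear-time reduction from \textsc{$\Pi$-Discovery} in the token jumping model to \textsc{Red-Blue-$\Pi$}. Let $R$ denote that reduction, which maps an instance $(G, S, b)$ with $|S| = k$ to the equivalent instance $(G', k, b)$ where $G'$ is a recoloring of $G$.

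For items~1 and~2, given an input instance of \textsc{$\Pi$-Discovery}, I would first apply $R$ to obtain an equivalent instance of \textsc{Red-Blue-$\Pi$}, and then invoke the assumed algorithm on this instance. Because $R$ runs in linear time, composing it with a polynomial-time algorithm yields a polynomial-time algorithm for \textsc{$\Pi$-Discovery}, giving item~1. Since $R$ preserves the parameters $k$ and $b$ exactly (the output has parameters $k$ and $b$ of the same value), composing $R$ with an \FPT algorithm parameterized by $k$ (resp.\ $b$) yields an \FPT algorithm for \textsc{$\Pi$-Discovery} parameterized by the same parameter, giving item~2.

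For items~3 and~4, I would argue by contradiction using items~1 and~2. If \textsc{Red-Blue-$\Pi$} were in \PP, then by item~1, \textsc{$\Pi$-Discovery} would also be in \PP, contradicting the assumed \NP-hardness; this gives item~3. Similarly, if \textsc{Red-Blue-$\Pi$} were in \FPT with respect to $k$ (resp.\ $b$), then by item~2, \textsc{$\Pi$-Discovery} would also be in \FPT with respect to the same parameter, contradicting the assumed \W{1}-hardness; this gives item~4.

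Since the preceding lemma has already done the real work of constructing the reduction, there is no genuine obstacle here: the only thing to verify is that the reduction is indeed both polynomial-time and parameter-preserving (with identity map on both $k$ and $b$), which is clear from the construction of $G'$ as a simple recoloring of $G$.
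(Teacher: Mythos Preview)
Your treatment of items~1 and~2 is fine and matches the paper's implicit reasoning (the paper gives no explicit proof, treating the corollary as immediate from the lemma).

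However, your argument for items~3 and~4 contains a genuine logical gap. You write that if \textsc{Red-Blue-$\Pi$} were in \PP, then \textsc{$\Pi$-Discovery} would be in \PP, ``contradicting the assumed \NP-hardness.'' But a problem being in \PP does not contradict it being \NP-hard unless one additionally assumes $\PP \neq \NP$; likewise, being in \FPT does not contradict \W{1}-hardness unless $\FPT \neq \W{1}$. The corollary as stated is unconditional, so this contrapositive route does not establish it.

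The correct (and simpler) argument is direct rather than by contradiction: the lemma gives a polynomial-time, parameter-preserving many-one reduction $R$ from \textsc{$\Pi$-Discovery} to \textsc{Red-Blue-$\Pi$}. If \textsc{$\Pi$-Discovery} is \NP-hard, then every problem in \NP reduces to it in polynomial time; composing with $R$ yields a polynomial-time reduction from every problem in \NP to \textsc{Red-Blue-$\Pi$}, so \textsc{Red-Blue-$\Pi$} is \NP-hard. The same composition argument with parameterized reductions gives item~4. This is precisely what ``at least as hard as'' means and is the intended reading of the lemma.
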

We remark that the other direction does not trivially hold, \ie we can in general not consider an instance of \textsc{Red-Blue-$\Pi$} as an instance of \textsc{$\Pi$-Discovery}, as the number of red vertices/edges might exceed the bound~$k$ on the solution size. Nevertheless, we show that \textsc{Red-Blue Spanning Tree, Red-Blue Shortest Path} and \textsc{Red-Blue Matching} are in \PP, implying that their Discovery variants in the token jumping model are in \PP as well, whereas \textsc{Vertex/Edge Cut Discovery} is \NP-hard in the token jumping model, implying that \textsc{Red-Blue Vertex/Edge Cut Discovery} is \NP-hard as well.

\begin{proposition}
    The \textsc{Red-Blue Spanning Tree} problem can be solved in polynomial time. Hence,
    the \textsc{Spanning Tree Discovery} problem in the token jumping model can be solved in polynomial time.
    \label{prop:spanningTreeJumping}
\end{proposition}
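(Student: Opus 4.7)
The plan is to first establish that \textsc{Red-Blue Spanning Tree} lies in \PP, and then derive the second assertion via a short reduction. For the first part, I would assign weight $0$ to every red edge and weight $1$ to every blue edge of the input graph $G$, and then compute a minimum-weight spanning tree $T$ using any standard polynomial-time algorithm (Kruskal or Prim). Since the total weight of any spanning tree of $G$ equals its number of blue edges, the instance $(G, k, b)$ is a yes-instance precisely when $k = |V(G)|-1$ and the computed tree $T$ has weight at most $b$. This step is entirely routine.

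For the second part, I would reduce \textsc{Spanning Tree Discovery} in the token jumping model to \textsc{Red-Blue Spanning Tree}. Given an input $(G, S, b)$ with $|S|=n-1$, I construct the edge-colored graph $G'$ on the same vertex and edge sets by coloring every edge of $S$ red and every edge in $E(G)\setminus S$ blue, and then invoke the algorithm from the first part on $(G', n-1, b)$. The key observation driving correctness is that, in the token jumping model, the minimum number of jumps needed to transform an $(n-1)$-element edge configuration $S$ into another $(n-1)$-element edge configuration $T$ is exactly $|T \setminus S|$: each jump replaces a single token, intermediate configurations carry no validity constraint beyond each edge hosting at most one token, and this can always be maintained by processing misplaced tokens one at a time (jumping each token of $S \setminus T$ to a distinct edge of $T \setminus S$). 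Thus a spanning tree $T$ of $G$ is reachable from $S$ within budget $b$ if and only if $T$ has at most $b$ edges outside $S$, equivalently, at most $b$ blue edges in $G'$.

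I do not expect any genuine obstacle: the reduction to a $0/1$-weighted minimum spanning tree is immediate, and the only point that warrants a brief justification is the identification of the jumping distance between two equal-size edge configurations with their symmetric-difference count, which follows directly from the definition of a token jump together with the absence of validity constraints on intermediate configurations.
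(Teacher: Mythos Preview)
Your proposal is correct and matches the paper's approach essentially line for line: the paper also assigns weight $0$ to red edges and weight $1$ to blue edges, computes a minimum spanning tree, and accepts iff $k=|V(G)|-1$ and the tree has at most $b$ blue edges. The reduction you give from \textsc{Spanning Tree Discovery} in the jumping model to \textsc{Red-Blue Spanning Tree} is exactly the instance of the paper's general lemma (color $S$ red, everything else blue), which the paper proves just before this proposition and from which the ``Hence'' clause follows.
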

\begin{proof}
Given an instance $(G, k, b)$ of the \textsc{Red-Blue Spanning Tree} problem, we clearly have a no-instance if $k \neq |V(G)|-1$.
Otherwise, we are looking for a 
spanning tree that uses at most $b$ blue edges. 
We find a spanning tree with minimum number of blue edges as follows.
We introduce a weight of $1$ for every blue edge and $0$ for every red edge. Computing a minimum spanning tree on this weighted instance yields a spanning tree in $G$ with the 
minimum number of blue edges. 
We accept the instance if and only if the minimum spanning tree has at most $b$ blue edges.
\end{proof}

\begin{proposition}
    The \textsc{Red-Blue Shortest Path} problem can be solved in polynomial time. Hence,
    the \textsc{Shortest Path Discovery} problem in the token jumping model can be solved in polynomial time.
\end{proposition}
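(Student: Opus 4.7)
The plan is to reduce \textsc{Red-Blue Shortest Path} to a vertex-weighted shortest-path computation on a layered directed acyclic graph, reusing the construction from the proof of \Cref{thm:path-k}. First I would compute the distance function $\mathrm{dist}_G(s,\cdot)$ via a single BFS from $s$. If $k \neq \mathrm{dist}_G(s,t)+1$, then every shortest $s$-$t$-path has the wrong number of vertices, so the instance is immediately a no-instance. Otherwise, I would build the layered DAG $H$ on $V(G)$ that contains exactly those edges $uv \in E(G)$ with $\mathrm{dist}_G(s,v)=\mathrm{dist}_G(s,u)+1$, oriented from $u$ to $v$, and delete all edges between vertices in the same distance layer. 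The $s$-to-$t$ directed paths of $H$ are in bijection with the shortest $s$-$t$-paths of~$G$.

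Next I would assign weight $1$ to every blue vertex and weight $0$ to every red vertex and compute a minimum vertex-weight $s$-to-$t$ path in $H$. Since $H$ is a DAG whose natural topological order is given by the distance layers, this can be done in linear time via dynamic programming (or by a standard splitting trick that converts vertex weights to edge weights). The \textsc{Red-Blue Shortest Path} instance is a yes-instance if and only if the resulting minimum weight is at most $b$.

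For the corollary about \textsc{Shortest Path Discovery} in the token jumping model, I would apply the reduction of the preceding lemma: color every vertex of $S$ red and every other vertex blue. In the jumping model the number of token moves required to transform $S$ into a given $k$-vertex target set $P$ equals $|P \setminus S|$, which for a shortest $s$-$t$-path $P$ is exactly the number of blue vertices of $P$. Hence the discovery instance $(G,s,t,S,b)$ is equivalent to the \textsc{Red-Blue Shortest Path} instance $(G',s,t,k,b)$ just produced, and the polynomial-time algorithm above decides it.

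There is no substantial obstacle; the only point requiring minor care is the switch from edge-weighted to vertex-weighted shortest paths on the DAG, which is standard.
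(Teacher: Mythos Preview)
Your proposal is correct and follows essentially the same approach as the paper: build the layered BFS DAG of shortest $s$-$t$-paths, assign weight $1$ to blue and $0$ to red vertices, and compute a minimum-weight path (the paper spells out the vertex-splitting trick you allude to). Your discovery corollary via the preceding reduction is exactly how the paper derives it as well.
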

\begin{proof}
    Let $(G, k, b)$ be an instance of the \textsc{Red-Blue Shortest Path} problem.
    If $\mathrm{dist}(s,t)\neq k + 1$ we clearly have a no-instance. Otherwise, we are looking for a shortest $s$-$t$-path that uses the maximum number of red vertices. Computing such a path can be done in polynomial time by computing a shortest path in an appropriately constructed auxiliary directed edge-weighted graph~$H$. The graph $H$ contains~$s$ and $t$ as well as all vertices appearing on any shortest path from~$s$ to~$t$. Vertices are grouped into levels based on their distance to $s$ and we only keep in $H$ the edges connecting vertices from different levels (we direct the edges in the obvious way based on the levels). Next, we replace every vertex $v$ (except $s$ and $t$) by two vertices $u_v$ and $w_v$ connected by an edge from $u_v$ to $w_v$, where $u_v$ inherits all incoming edges from $v$ and $w$ inherits all outgoing edges from $v$. We assign the edge $u_v w_v$ a weight of $1$ if $v$ is blue and $0$ otherwise (all other edges are also assigned a weight of $0$). 
    We accept the instance if and only if the shortest path in $H$ has weight at most~$b$. 
\end{proof}

\begin{lemma}\label{lem:MatchingJumping}
    The \textsc{Red-Blue Matching} problem can be solved in polynomial time. Hence,
    the \textsc{Matching Discovery} problem in the token jumping model can be solved in polynomial time.
\end{lemma}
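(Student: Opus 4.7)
The plan is to reduce \textsc{Red-Blue Matching} to minimum-weight perfect matching in an auxiliary graph, a problem solvable in polynomial time by Edmonds' weighted matching algorithm; the statement for \textsc{Matching Discovery} in the token jumping model then follows at once from the parameter-preserving reduction established earlier in this section.

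Given an instance $(G, k, b)$ with $n = |V(G)|$, I would first reject if $2k > n$. Otherwise, assign weight $1$ to every blue edge and weight $0$ to every red edge, and build an auxiliary graph $G'$ by adding a set $F$ of $n - 2k$ fresh vertices, connecting each fresh vertex to every original vertex of $G$ via a weight-$0$ edge. I would then run Edmonds' algorithm to compute a minimum-weight perfect matching $M'$ of $G'$ (reporting \emph{no} if none exists), and accept iff the weight of $M'$ is at most $b$.

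Correctness will rest on a bijection between perfect matchings of $G'$ and matchings of size exactly $k$ in $G$. Since $F$ is independent in $G'$, every perfect matching of $G'$ saturates the $n - 2k$ fresh vertices with distinct original vertices through weight-$0$ edges, leaving $2k$ original vertices that must be paired up by $k$ edges of $G$. Thus $M' \cap E(G)$ is a size-$k$ matching of $G$, and the weight of $M'$ equals its number of blue edges. Conversely, any size-$k$ matching $M$ of $G$ extends to a perfect matching of $G'$ of the same weight by pairing the $n - 2k$ unsaturated original vertices arbitrarily with $F$. Hence the optimum over perfect matchings of $G'$ equals the minimum blue-edge count over size-$k$ matchings of $G$.

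For the second assertion, I would apply the parameter-preserving reduction: color $S$ red and $E(G) \setminus S$ blue. Since $|S| = |M| = k$ for any candidate target matching $M$, the minimum number of token jumps transforming $S$ into $M$ equals $|M \setminus S|$, which is precisely the number of blue edges in $M$; this turns the discovery instance into an equivalent \textsc{Red-Blue Matching} instance, and the first part closes the argument. I do not foresee any substantial obstacle: $|V(G')| = 2n - 2k$ is automatically even, so parity does not rule out perfect matchings, and minimum-weight perfect matching in general graphs is a standard polynomial-time primitive. The only care needed is in verifying the size-constrained matching correspondence outlined above.
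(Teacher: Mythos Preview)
Your proposal is correct and follows essentially the same approach as the paper: both add $n-2k$ apex vertices adjacent to all original vertices so that perfect matchings in the auxiliary graph correspond to size-$k$ matchings in $G$, and both set edge weights so that the optimum perfect matching minimizes the number of blue edges. The only cosmetic difference is that the paper computes a \emph{maximum-weight} matching with an $\epsilon$-perturbation (red $=1$, blue $=1-\epsilon$, apex $=1-\epsilon$) to force the optimum to be perfect, whereas you compute a \emph{minimum-weight perfect} matching directly with blue $=1$ and everything else $=0$; your formulation is slightly cleaner but the underlying reduction is the same.
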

\begin{proof}
    Let $(G, k, b)$ be an instance of the \textsc{Red-Blue Matching} problem.
    In order to solve this problem, we first check whether $G$ as a matching of size $k$. If this is not the case, we have a no-instance. For the rest of the proof, we assume that $G$ has a matching of size $k$. We construct a graph $H$ as follows. First, we copy $G$ and add weights to edges as follows. We introduce a weight of~$1$ for every red edge and a weight of $1-\epsilon$ for some $\epsilon < 1/|E(G)|$ for every blue edge. Additionally, we add $n-2k$ apex vertices, \ie we add $n-2k$ fresh vertices and connect them to the copies of~$V(G)$ by edges which we assign weight $1-\epsilon$. Now, we compute a maximum weight matching by Edmond's Blossom algorithm \cite{blossom}. Obviously, there is a perfect matching in the constructed graph if and only if there is a matching of size $k$ in $G$. Additionally, note that the weights are chosen such that the maximum weight matching will be a perfect matching, in case one exists. That is, there is a matching of size $k$ in $G$ if and only if maximum weight matching in $H$ is a perfect matching. Let~$r$ denote the number of matching red edges in a maximum weight matching in $H$. Then, the weight of the matching is $(n-k) (1 - \epsilon) + \epsilon r$. We conclude that the matching in $H$ has maximum weight if and only if the corresponding matching $G$ has size exactly $k$ and uses the largest possible number of red edges. We have a yes-instance if and only if this matching uses at least $k-\budget$ red edges.   
\end{proof}
\begin{remark*}
We have proven that the \textsc{Red-Blue Matching} problem can be solved in polynomial time. We remark that the problem is closely related to the \textsc{Exact Matching} problem, whose complexity has been open for more than 40 years~\cite{DBLP:conf/mfcs/MaaloulyS22}. In the \textsc{Exact Matching} problem we are given an unweighted graph with edges colored in red and blue, and aim to find out whether there is a perfect matching with exactly $b$ blue edges.
\end{remark*}

\begin{proposition}
    The \textsc{Vertex/Edge Cut Discovery} problem  in the token jumping model is \NP-complete. Hence,
    the \textsc{Red-Blue Vertex/Edge Cut} problem is \NP-complete.
    \label{prop:SeparatorJumping}
\end{proposition}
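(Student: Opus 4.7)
The plan is to prove \NP-hardness via a reduction from \clique, closely adapting the sliding-model reduction of \Cref{thm:separator:hardness}. Because jumping ignores distances, a simpler construction with a much tighter budget will suffice: the $1$-subdivisions used in the sliding proof become unnecessary. Containment in \NP is immediate, since in the token jumping model we may assume $\budget \leq k$ (each token needs to move at most once), and a discovered configuration can be verified to be an $s$-$t$ cut in polynomial time. The ``Hence'' part then follows directly from the parameter-preserving reduction from \pidis in the token jumping model to \textsc{Red-Blue-}$\Pi$ established earlier in this section.

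Given an instance $(G,\kappa)$ of \clique with $\kappa \geq 4$, construct a graph $H$ consisting of $s$, $t$, a set $Z$ of $\kappa$ pendants of $s$, a vertex $x_u$ adjacent to $s$ for each $u \in V(G)$ (forming a set $X$), a vertex $y_e$ adjacent to $t$ for each $e \in E(G)$ (forming a set $Y$), edges $x_u y_e$ for every incidence pair $(u,e)$, and $\binom{\kappa}{2}$ internally disjoint $s$-$t$ paths of length two with internal vertices collected in a set $P$. Set $S = Z \cup Y$ and $\budget = \kappa + \binom{\kappa}{2}$. The forward direction is straightforward: given a $\kappa$-clique $C$ in $G$, jump the $\kappa$ tokens on $Z$ onto $\{x_u : u \in C\}$ and jump the tokens on $\{y_e : e \subseteq C\}$ onto the vertices of $P$; the result is a valid $s$-$t$ cut and uses exactly $\kappa + \binom{\kappa}{2}$ jumps.

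The crux is the backward direction. Let $S'$ be a final cut reached within the budget, and partition it as $Z' \sqcup Y' \sqcup X' \sqcup P'$. Since every $p$-path must be cut, $P' = P$; as each token moves at most once, the number of jumps equals $|X'| + \binom{\kappa}{2}$, forcing $|X'| \leq \kappa$. A parallel count of which tokens left $S$ gives $|Y|-|Y'| = |X'| - \kappa + \binom{\kappa}{2} + |Z'|$. Writing $U = \{u : x_u \in X'\}$ and $E' = \{e : y_e \notin Y'\}$, the cut condition applied to the two paths $s\text{-}x_u\text{-}y_e\text{-}t$ and $s\text{-}x_v\text{-}y_e\text{-}t$ for each $e=uv$ forces $E' \subseteq E(G[U])$, hence $|E'| \leq \binom{|U|}{2}$. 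Combining yields $\binom{|X'|}{2}-|X'| \geq \binom{\kappa}{2}-\kappa$, and strict monotonicity of $n \mapsto \binom{n}{2}-n$ on $n \geq 2$ (with $\kappa \geq 4$) forces $|X'| = \kappa$ with equality throughout; in particular $E' = E(G[U])$ and $|E'|=\binom{\kappa}{2}$, so $G[U]$ is a $\kappa$-clique in $G$.

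The main obstacle I anticipate is making the counting argument sharp enough to force $G[X']$ to be complete on exactly $\kappa$ vertices; this is precisely where the assumption $\kappa \geq 4$ and the monotonicity of $\binom{n}{2}-n$ enter. The reduction for \textsc{Edge Cut Discovery} uses the same graph $H$ with tokens placed on the $\kappa$ edges incident with $s$ and on the $|Y|$ edges incident with $t$, and an otherwise identical counting argument closes the case.
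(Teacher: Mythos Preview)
Your proposal is correct and follows essentially the same approach as the paper: the paper's proof simply states that the construction of \Cref{thm:separator:hardness} works verbatim in the token jumping model once the budget is halved to $\kappa+\binom{\kappa}{2}$. Your version drops the $1$-subdivisions between $X$ and $Y$ (harmless for jumping, since distances no longer matter) and spells out the backward counting argument explicitly, but the reduction, the budget, and the key inequality $\binom{|X'|}{2}-|X'|\geq\binom{\kappa}{2}-\kappa$ forcing $|X'|=\kappa$ are the same as in the paper's sliding-model proof that it points to.
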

\begin{proof}
    The construction in the proof of \Cref{thm:separator:hardness} showing the hardness of the problem in the token sliding model in fact also shows hardness in the token jumping model. We only need to set the budget to $\budget = \kappa + \binom{\kappa}{2}$ (instead of $2\kappa + 2\binom{\kappa}{2}$).
\end{proof}

\begin{theorem}\label{thm:red-blue-separator}
    The \textsc{Weighted Red-Blue Vertex/Edge Cut} problem is fixed-parameter tractable when parameterized by~$k$.
\end{theorem}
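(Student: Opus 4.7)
The plan is to mirror the proof of Theorem 5.3 for the rainbow case and replace the rainbow constraint with the red--blue cardinality constraint. As there, we treat the vertex cut version; the edge cut version follows by passing to the line graph of the input. The two main ingredients will be the treewidth reduction theorem of Marx, O'Sullivan and Razgon, and the optimization form of Courcelle's theorem used exactly as in the rainbow proof.

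First, we apply the treewidth reduction theorem with $T=\{s,t\}$ to the input weighted colored graph $(G,w)$, obtaining a subgraph $H$ with $V(H)\supseteq C\cup T$, treewidth at most $h(k,2)$, and preserving all minimal $s$--$t$ separators of size at most $k$ verbatim (including their induced subgraph $H[C\cup T]\cong G[C\cup T]$). We inherit the red/blue coloring and the weight function on $H$ from $G$. The key observation is that an optimal solution can be assumed to be an inclusion-minimal separator: dropping a redundant vertex can only decrease both the weight and the blue count while preserving separation and the size bound. Hence searching for a minimum weight separator of size at most $k$ with at most $b$ blue vertices in $G$ is equivalent to searching in $H$.

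Second, we encode the problem in monadic second-order logic. We write a formula $\varphi(X)$ that is true exactly when $X\subseteq V(H)$ is an $s$--$t$ separator (connectivity is MSO-definable on graphs), and augment it with the size constraint $|X|\le k$ and the blue-budget constraint $|X\cap B|\le b$, where $B$ is the unary predicate naming blue vertices. We then invoke the optimization version of Courcelle's theorem (Arnborg--Lagergren--Seese) to minimize the linear vertex-weight function $\sum_{v\in X}w(v)$ over all $X$ satisfying $\varphi(X)$ on the bounded-treewidth graph $H$, in time $g(k)\cdot n^{O(1)}$. As already noted in the proof of Theorem 5.3, the dependence of the optimization theorem on each weight function is linear rather than exponential, so the overall running time is $f(k)\cdot n^{O(1)}$.

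The main obstacle is ensuring that the two side constraints together with the weighted objective really do fit inside the optimization form of Courcelle's theorem. This is however exactly the setting already used in Theorem 5.3, since the blue-count $|X\cap B|$ is the cardinality of an MSO-definable set and $\sum_{v\in X}w(v)$ is a linear evaluation; the only change from the rainbow case is replacing the ``distinct colors'' clause in $\varphi$ by the single cardinality constraint $|X\cap B|\le b$, which is genuinely easier rather than harder to express.
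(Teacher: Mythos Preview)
Your proposal is correct and takes essentially the same approach as the paper: apply the treewidth reduction theorem with $T=\{s,t\}$ and then invoke the optimization version of Courcelle's theorem on the resulting bounded-treewidth graph (and pass to the line graph for the edge version). The only difference is in the MSO encoding: the paper introduces a second free set variable $Y$ with the defining clause $Y=X\cap B$ and treats $|X|$ and $|Y|$ as two separate evaluation terms (hence the quoted $f(k)\cdot n^{3}$ running time), whereas you hard-code both cardinality constraints $|X|\le k$ and $|X\cap B|\le b$ directly into $\varphi$ and keep a single evaluation term $\sum_{v\in X}w(v)$. Your encoding is valid because one may assume $b\le k$ without loss of generality (otherwise the blue constraint is vacuous given $|X|\le k$), so both bounds are controlled by the parameter and can be expressed by first-order counting inside the formula; it would be worth making this observation explicit, since otherwise a reader may worry that $b$ is an unbounded input integer that cannot be hard-wired into an MSO sentence.
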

\begin{proof}
    The proof parallels the proof of \Cref{thm:rainbow-separator}, where we use the optimization version of Courcelle's Theorem for a formula $\phi(X,Y)$ stating that $X$ is a separator, $Y\subseteq X$ is the set of all blue vertices of $X$. 
    The automaton checking the truth of the formula and minimizing $|X|$ and $|Y|$ is extended with two weight components, which leads to a quadratic blowup, hence, in our case we conclude a running time of $f(k)\cdot n^3$ for some computable function $f$. 
\end{proof}

\subsection{Unrestricted token addition/removal}
Obviously, every spanning tree of a connected graph $G$ has size $|V(G)| - 1$ by definition. Hence, the \textsc{Spanning Tree Discovery} problem in the unrestricted token addition/removal model can easily be reduced to the token jumping model by halving the budget $\budget$ (rounded down). To see this, note that any solution to the \textsc{Spanning Tree Discovery} problem in the unrestricted token addition/removal model adds and removes the same number of tokens. Furthermore, the order in which we add or remove tokens from the initial configuration $S$ does not matter. Thus, we assume that the first modification step to $S$ is a token removal, followed by a token addition, followed by removals/additions in alternating order. This can easily be simulated by jumps.

Similarly, the length of a shortest $s$-$t$-path in a graph $G$ is a fixed number for fixed $s$ and~$t$. Hence, the same argument as above boils the \textsc{Shortest Path Discovery} problem in the unrestricted token addition/removal model down to the \textsc{Shortest Path Discovery} problem in the token jumping model.
When it comes to \textsc{Matching Discovery} and \textsc{Vertex/Edge Cut Discovery}, the unrestricted token addition/removal model allows for two natural variations on the problems. On one hand, and in tune with the rest of the paper, if we impose solutions of size exactly $|S|$ then it is not hard to see that the addition/removal model becomes again equivalent to the jumping model (and therefore the problems are polynomial-time solvable). 

\begin{corollary}
The \textsc{Spanning Tree Discovery}, \textsc{Shortest Path Discovery}, \textsc{Matching Discovery}, and \textsc{Vertex/Edge Cut Discovery} problems in the unrestricted token addition/removal model can be solved in polynomial time.
\end{corollary}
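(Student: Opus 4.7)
My plan is to handle the four problems separately: three of them reduce to results established earlier in the section for the token jumping model, while \textsc{Vertex/Edge Cut Discovery} requires a short additional argument tailored to the addition/removal setting.

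For \textsc{Spanning Tree Discovery} and \textsc{Shortest Path Discovery}, I would exploit the fact that the cardinality of any feasible solution is intrinsically fixed by the problem structure, namely $|V(G)|-1$ for a spanning tree and $\mathrm{dist}_G(s,t)+1$ for a shortest $s$-$t$-path. Under the standard convention that $|S|$ equals this target, every valid transformation must perform equally many additions and removals. Pairing each addition with a removal simulates a jump, so an addition/removal sequence of length $b$ corresponds to a jumping sequence of length $\lfloor b/2\rfloor$ and vice versa. The polynomial-time algorithms in the jumping model (Propositions for spanning trees and shortest paths in Section~\ref{sec:other-token-models}) then yield containment in $\PP$. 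Similarly, for \textsc{Matching Discovery}, imposing that the final matching has size exactly $|S|$ (as in the sliding and jumping models) and applying the same pairing argument reduces to the jumping variant, which is polynomial-time solvable by Lemma~\ref{lem:MatchingJumping} via the reduction to \textsc{Red-Blue Matching}.

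The only case that does not reduce directly to jumping is \textsc{Vertex/Edge Cut Discovery}, since its jumping variant is \NP-complete by Proposition~\ref{prop:SeparatorJumping}. Here I would drop the implicit size-$|S|$ constraint, which is natural in the addition/removal model because the final configuration is not forced to have size $|S|$. The key observation is superset-monotonicity of cuts: adjoining any vertex (resp.\ edge) to an $s$-$t$-cut yields another $s$-$t$-cut. Hence it is always optimal to retain every token of $S$ in the final cut, with the sole exception of tokens placed at $s$ or $t$ in the vertex-cut case, which contribute an unavoidable cost of $|\{s,t\}\cap S|$. Subject to this, the remaining task is to extend $S\setminus\{s,t\}$ by as few new vertices as possible (or $S$ by as few new edges in the edge case) to obtain a valid cut, which is exactly a minimum $s$-$t$-cut computation in $G-(S\setminus\{s,t\})$ (resp.\ $G-S$) and therefore solvable in polynomial time via max-flow. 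The optimal cost is then compared against $b$ to decide the instance.

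The main conceptual obstacle is recognising that the \NP-hardness of \textsc{Vertex/Edge Cut Discovery} in the jumping model does not propagate to the addition/removal setting: because this model admits cuts of arbitrary cardinality as targets, the size-$|S|$ constraint that underlies the hardness reduction in Section~\ref{sec:separator} disappears, and superset-monotonicity combined with classical min-cut tractability restores polynomial-time solvability.
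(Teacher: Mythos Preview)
Your treatment of \textsc{Spanning Tree Discovery}, \textsc{Shortest Path Discovery}, and \textsc{Matching Discovery} matches the paper exactly: the fixed (respectively, imposed) target size lets one pair up each addition with a removal and reduce to the token jumping variants already shown to be in \PP.

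For \textsc{Vertex/Edge Cut Discovery} your approach diverges from the paper's text, and you are in fact more careful than the paper at this point. The paragraph immediately preceding the corollary asserts that, under the size-$|S|$ convention, the addition/removal model is again equivalent to the jumping model ``and therefore the problems are polynomial-time solvable''. As you correctly observe, this reasoning cannot work for cuts: \textsc{Vertex/Edge Cut Discovery} in the jumping model is \NP-complete by Proposition~\ref{prop:SeparatorJumping}, so equivalence to jumping would contradict, not establish, the corollary. Your fix---dropping the cardinality constraint and using superset-monotonicity of cuts to reduce the question to a single minimum $s$-$t$-cut computation in $G-S$ (with the small caveat about tokens on $s$ or $t$ in the vertex version)---is exactly what the paper does in the subsequent proposition on the \emph{relaxed} variant. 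So your argument is correct and coincides with the mechanism the paper actually relies on; the paper's own sentence before the corollary is simply a slip.
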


Alternatively, in a more relaxed formulation, if we drop the constraints on solution size and allow solutions of any size then the problems become considerably different. In particular, the \textsc{Matching Discovery} problem becomes equivalent to asking for a smallest subset of $S$ whose removal (from $S$) results in a matching. Similarly (and from the viewpoint of a minimization problem), the \textsc{Vertex/Edge Cut Discovery} problem becomes equivalent to the problem of computing a smallest subset of vertices whose addition to $S$ results in a cut between $s$ and $t$. We show below that, even in the relaxed setting, both problems remain polynomial-time solvable. 

\begin{proposition}
The \textsc{(Relaxed) Matching Discovery} problem in the unrestricted token addition/removal model can be solved in polynomial time.
\end{proposition}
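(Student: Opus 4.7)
The plan is to reduce the problem to a single maximum matching computation in the subgraph $G[S]$. Let $(G, S, b)$ be an instance of \textsc{(Relaxed) Matching Discovery}. Since the order of additions and removals is irrelevant and a token is never added and later removed (or vice versa) in an optimal sequence, the number of modification steps needed to transform $S$ into a target matching $M \subseteq E(G)$ equals the size of the symmetric difference $|M \triangle S| = |M \setminus S| + |S \setminus M|$. The goal is therefore to find a matching $M$ minimizing this quantity and to accept if and only if the minimum is at most $b$.

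The key observation I would establish is a one-line exchange argument: without loss of generality there is an optimal matching $M^\ast$ with $M^\ast \subseteq S$. Indeed, if $e \in M \setminus S$ for some matching $M$, then $M \setminus \{e\}$ is still a matching and satisfies $|(M \setminus \{e\}) \triangle S| = |M \triangle S| - 1$, contradicting optimality. Consequently the minimum number of modification steps equals $|S| - |M^\ast|$, where $M^\ast$ is a \emph{maximum} matching among the subsets of $S$, i.e., a maximum matching of the subgraph $G[S]$ whose edge set is exactly $S$.

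The algorithm is therefore: compute a maximum matching $M^\ast$ in $G[S]$ using Edmonds' blossom algorithm~\cite{blossom}, which runs in polynomial time, and return \textsf{yes} if and only if $|S| - |M^\ast| \leq b$. Correctness is immediate from the preceding two paragraphs: any matching reachable from $S$ within budget $b$ has $|M \triangle S| \leq b$, and the minimum of $|M \triangle S|$ over all matchings $M$ equals $|S| - |M^\ast|$ by the exchange argument. There is no real obstacle in this proof; the only subtle point is the observation that additions never help in the relaxed setting, which is precisely what the exchange argument rules out.
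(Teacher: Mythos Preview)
Your proof is correct and follows essentially the same approach as the paper: both argue that additions are never beneficial (you via the exchange argument on the symmetric difference, the paper by noting that subsets of matchings are matchings), then reduce the problem to computing a maximum matching of $G[S]$ via Edmonds' algorithm and comparing $|S|-|M^\ast|$ to $b$. Your presentation is slightly more explicit about why the cost equals $|M\triangle S|$, but the underlying argument is the same.
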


\begin{proof}
Let $(G, S, b)$ be an instance of the \textsc{Matching Discovery} problem. Observe that we may assume that every discovery sequence $C_0 \dots C_\ell$ in $G$ with $S = C_0$ yielding a matching  never adds a token. To see this observe that every subset of a matching is still a matching, hence we can simply remove all additions from the sequence and obtain a cheaper sequence that yields a matching.
We consider the graph $G[S]$, that is, the graph $G$ where only the edges in $S$ and their incident endpoints survive.
We compute in polynomial time an arbitrary maximum matching $M$ of $G[S]$ using Edmonds Blossom algorithm \cite{blossom}. Now we consider the subset $M' = E(G[S]) \setminus M$ which is a minimum set of edges that need to be removed from $G$ to arrive at a matching.
Hence, if $|M'| \leq b$ we conclude that we are dealing with a yes-instance, and otherwise we conclude that we are dealing with a no-instance.
\end{proof}

\begin{proposition}
The \textsc{(Relaxed) Vertex/Edge Cut Discovery} problem can be solved in polynomial time in the unrestricted token addition/removal model.
\end{proposition}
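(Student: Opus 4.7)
The plan is to reduce the problem to a weighted minimum $s$-$t$ cut computation, which is well known to be solvable in polynomial time. The first observation is that every superset of an $s$-$t$ cut is itself an $s$-$t$ cut: adding a vertex (resp.\ edge) to a cut cannot destroy the property that every $s$-$t$-path meets the set. Hence, in a minimum-length discovery sequence we may assume that no token is ever removed from~$S$, and the problem reduces to finding the minimum number of elements that must be added to $S$ so that the resulting set is an $s$-$t$ cut.

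For the edge variant, we assign every edge $e \in S$ weight~$0$ and every edge $e \in E(G) \setminus S$ weight~$1$, and we compute a minimum weight $s$-$t$ edge cut in the resulting weighted graph using any standard min-cut / max-flow algorithm. If $C$ is such a minimum weight cut of weight $w$, then $S \cup C$ is an $s$-$t$ cut that can be reached from $S$ using exactly $w$ token additions (we add the edges in $C \setminus S$, of which there are at most~$w$). Conversely, any set $T$ of added tokens such that $S \cup T$ is an $s$-$t$ cut yields a (not necessarily minimal) cut, and restricting it to a minimal $s$-$t$ cut $C' \subseteq S \cup T$ has weight at most $|T|$ in our weighting (since all elements of $S$ are free). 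Therefore, the instance $(G,s,t,S,b)$ is a yes-instance if and only if the minimum weight $s$-$t$ cut has weight at most $b$.

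The vertex variant is handled analogously: assign weight $0$ to every vertex in $S \setminus \{s,t\}$ and weight $1$ to every vertex in $V(G) \setminus (S \cup \{s,t\})$, and compute a minimum weight $s$-$t$ vertex cut in polynomial time via the standard reduction to edge cuts (splitting each vertex $v$ into $v^{\text{in}}$ and $v^{\text{out}}$ connected by an edge carrying the vertex's weight, while all original edges get infinite weight). The same correctness argument as above applies, and the instance is a yes-instance if and only if the minimum weight vertex cut has weight at most $b$. Since all steps run in polynomial time, the claim follows. The main (minor) obstacle is only the careful handling of $s$ and $t$ in the vertex version to ensure they are never counted as cut vertices.
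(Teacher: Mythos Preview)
Your proof is correct and follows essentially the same approach as the paper. The only cosmetic difference is that the paper deletes the edges of $S$ and computes an unweighted minimum $s$-$t$ cut in $G - S$, whereas you keep all edges, assign weight~$0$ to those in $S$ and weight~$1$ to the rest, and compute a weighted minimum cut; these two formulations are equivalent, and both yield that the instance is a yes-instance iff the resulting minimum cut value is at most~$b$.
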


\begin{proof}
We present the result for \textsc{Edge Cut Discovery}, the result of \textsc{Vertex Cut Discovery} is analogous. 
Let $(G, s, t, S, b)$ be an instance of the \textsc{Edge Cut Discovery} problem. Similar to the matching case, we may assume that every configuration sequence $C_0 \dots C_\ell$ in $G$ with $S = C_0$ yielding an edge cut does never remove a token, as every superset of an edge cut is still an edge cut.
We consider the graph $G - S$, that is, the graph obtained from $G$ by removing all edges in~$S$.
We compute in polynomial time an arbitrary minimum $s$-$t$-cut $C$ of $G - S$ (e.g. by using the flow-algorithm of Edmonds and Karp~\cite{edmonds1972theoretical}). The size of $C$ is exactly the minimum number of token additions we need in order to separate $s$ from $t$. 
Hence, if $|C| \leq b$ we conclude that we are dealing with a yes-instance, and otherwise we conclude that we are dealing with a no-instance.
\end{proof}

\section*{Conclusion and future work}\label{sec:conclusion}

We contribute to the new framework of solution discovery via reconfiguration with complexity theoretic and algorithmic results for solution discovery variants of {\em polynomial-time solvable} problems. While we can employ the efficient machinery of \textsc{Weighted Matroid Intersection} for the \textsc{Spanning Tree Discovery} problem, all other problems under consideration are shown to be \NP-complete, namely, \textsc{Shortest Path Discovery}, \textsc{Matching Discovery}, and \textsc{Vertex/Edge Cut Discovery}. For the latter problems, we provide a full classification of tractability with respect to the parameters solution size $k$ and transformation budget $b$. 

We expect further research on this new model capturing the dynamics of real-world situations as well as constraints on the adaptability of solutions. It seems particularly interesting to investigate directed and/or weighted versions of the studied problems. Notice that all base problems that we studied remain polynomial-time solvable in the presence of edge weights and for  cost functions summing the total weight of the solution (spanning tree, path, matching, cut). Interestingly, our result on the {\sc Spanning Tree Discovery} problem can be generalized to the weighted setting with a clever adaptation of the reduction; we give details in Appendix~\ref{app:mst}. For other problems the hardness results clearly hold, but we leave the existence of \FPT algorithms for future work.

Another challenge is the design of efficient algorithms that can compute approximate solutions with respect to the solution size/value or with respect to the allowed transformation budget. 
We note, without proof, that the hardness reduction for the \sdis~problem (Theorem~\ref{thm:separator:hardness}) can be adjusted to give a $n^{1-\epsilon}$-inapproximability of the optimal transformation budget.

\bibliographystyle{plain}
\bibliography{ref}

\appendix
\section{Minimum spanning tree discovery}
\label{app:mst}

In this section, we consider the weighted version of the {\sc Spanning Tree Discovery} problem in the token sliding model and show that we can generalize the results from \Cref{sec:spanning} to the weighted setting. 

In the {\sc Minimum Spanning Tree Discovery} (MSTD) problem in the token sliding model, we are given an edge-weighted graph $(G, \omega)$, an edge subset $S \subseteq E(G)$ with $|S| = n-1$ as a starting configuration, and a non-negative integer $b$. 
The goal is to decide whether there exists a minimum-weight spanning tree of $G$ that can be discovered from $S$ using at most $b$ token slides. 
We show the following result.

\begin{theorem}\label{thm:wstd-in-p}
 The {\sc Minimum Spanning Tree Discovery} problem  can be solved in polynomial time in the token sliding model. 
\end{theorem}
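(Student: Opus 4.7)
The plan is to adapt the reduction from the proof of \Cref{thm:std-in-p} by incorporating the edge-weight function $\omega$ into the rainbow minimum spanning tree formulation. I construct the same edge-colored multigraph $(H, \phi)$ as before: $V(H) = V(G)$ and for each $e \in E(G)$ and color $i_{e'} \in \Cmc$ (with $\Cmc$ in bijection with $S$), include an edge $(e, i_{e'})$ colored $i_{e'}$. As before, a rainbow spanning tree of $H$ corresponds bijectively to a choice of a spanning tree of $G$ together with an assignment of tokens from $S$ to its edges, and the total sliding cost of that assignment is $\sum_{(e, i_{e'}) \in T} \mathrm{dist}_G(e', e)$.

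The key change is the weight function on $E(H)$. Assuming integer weights $\omega$, I define $w'((e, i_{e'})) = M \cdot \omega(e) + \mathrm{dist}_G(e', e)$ for a multiplier $M > n(n-1)$, e.g., $M = n^2$. Since the total sliding cost of any rainbow spanning tree of $H$ is bounded by $(n-1) \cdot n < M$, the multiplier guarantees that minimizing $w'$ lexicographically first minimizes the $\omega$-weight of the underlying spanning tree of $G$ and, among all minimum $\omega$-weight spanning trees, then minimizes the sliding cost. Applying \Cref{thm:rmst-in-p} to $(H, w', \phi)$ thus yields, in polynomial time, a minimum-weight spanning tree of $G$ together with a cheapest sliding sequence realizing it; we accept the instance if and only if that sliding cost is at most $b$.

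For real-valued $\omega$ the scaling argument above fails, so I would instead employ a two-phase lexicographic weighted matroid intersection: first, compute the minimum $\omega$-weight $W^*$ of a rainbow spanning tree of $H$ by one call to \Cref{thm:rmst-in-p} with weights $\omega(e)$; then, restrict to rainbow spanning trees of $\omega$-weight exactly $W^*$ and minimize the sliding cost with a second call, using the standard lexicographic extension of weighted matroid intersection, which runs in polynomial time. Correctness in either case rests on the observation that every spanning tree $T$ of $G$ admits at least one realization as a rainbow spanning tree of $H$ (via any bijection between $S$ and $E(T)$), so the minimum $\omega$-weight over rainbow spanning trees of $H$ coincides with the true MST weight of $G$. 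The main subtlety to verify is that the combined or lexicographic objective faithfully captures the bicriteria optimization without inadvertently selecting a suboptimal spanning tree of $G$; this is precisely where the bound $M > n(n-1)$ (or the strict priority in the lexicographic approach) is essential.
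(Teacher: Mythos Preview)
Your proposal is correct and follows essentially the same approach as the paper: both construct the identical colored multigraph $H$ and set $w((e,i_{e'})) = n^2\,\omega(e) + \mathrm{dist}(e',e)$, using the bound on total sliding cost (at most $(n-1)^2 < n^2$) to ensure lexicographic optimization before invoking \Cref{thm:rmst-in-p}. Your explicit treatment of the real-valued $\omega$ case via lexicographic matroid intersection is additional content not present in the paper (which implicitly assumes integer weights), but the integer-weight argument alone already matches the paper's proof.
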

\begin{proof}
We adapt the algorithm used in Section~\ref{sec:spanning} to solve the Unweighted {\sc (Unweighted) Spanning Tree Discovery} problem. There, we reduced the unweighted problem to the \rmst problem, which we showed to be solvable in polynomial time (Theorem~\ref{thm:rmst-in-p}). Now we argue that we can also give a reduction for the weighted problem by carefully setting the weights in the \rmst instance.

Given an edge-weighted graph $(G, \omega)$ and a starting configuration $S$, we construct an edge-colored weighted multigraph $(H,w,\phi)$ similar as in the proof of Theorem~\ref{thm:std-in-p}. 
We let $V(H) = V(G)$ and $E(H) = E(G) \times \Cmc$ and $\Cmc=[n-1]$. 
For each edge $e \in E(G)$ and color $i_{e'} \in \Cmc$, a selection of the edge $(e,i_{e'})$ in~$H$ denotes that in the \textsc{STD} problem a token slides from an edge $e' \in S$ with color~$i_{e'}$ to the edge~$e$. 
We define the weights $w$ as follows. 
For each edge $e = (e_1, i_{e_2}) \in E(H)$ for some $e_1 \in E(G)$ and $e_2 \in S$,
$$w(e) = n^2\omega(e_1) + d(e_1, e_2),$$
where $d(e_1,e_2)$ denotes the minimum sliding distance for a token sliding from edge $e_1$ to $e_2$. 

We claim that an optimal \rmst in the edge-colored weighted multigraph $(H,w,\phi)$ corresponds to a feasible token sliding solution for the solution discovery of a minimum spanning tree in the original graph $(G, \omega)$ from starting configuration $S$. 

Let $T \subseteq E(G)$ be a spanning tree for the graph $G$ and $c:T \to \Cmc$ be a bijection. 
Observe that~$(T,c)$ uniquely encodes a rainbow spanning tree in $H$. 
That is, $T' = \{(e,c(e)) \mid e \in T\}$ is a rainbow spanning tree in $H$. 
Further, each color $i \in \Cmc$ denotes a unique edge $e' \in S$. 
Let $s:\Cmc \to S$ be a bijection. 
The weight of the rainbow spanning tree $T'$ is
\begin{equation}
\label{eqn:mstd:equivalence}
w(T') = n^2\omega(T) + \sum_{e\in T}d(e, s(c(e)).    
\end{equation}

Similarly, given a rainbow spanning tree $T'$ of $H$, we can construct an equivalent spanning tree~$T$ of $G$ and a bijection $c:T \to \Cmc$ such that the weight functions $w$ and $\omega$ are as in Equation~\eqref{eqn:mstd:equivalence}. 
We know that the budget $b$ in the \textsc{MSTD} problem can be at most $(n-1)^2$. 
Therefore, the equivalent spanning tree of an optimal solution for the rainbow spanning tree problem on $H$ is an optimal solution for the \textsc{MSTD} problem in $G$. This concludes the proof of the theorem.
\end{proof}

\end{document}